\documentclass[journal]{IEEEtran}
\usepackage{mathrsfs}

\usepackage{graphicx}
\usepackage{color}
\usepackage{colortbl}
\usepackage{psfrag}
\usepackage{subfigure}
\usepackage{amssymb}
\usepackage{epsfig}
\usepackage{pifont}
\usepackage{amsmath}
\usepackage{array}
\usepackage{multicol}
\usepackage{pifont}
\usepackage{indentfirst} % 使首段段首也缩进
\usepackage{amsfonts}
\usepackage{fancyhdr}
\usepackage{amscd}
\usepackage{bm}
\usepackage{extarrows}
\usepackage{multirow}

\hyphenation{lists}
\newtheorem{theorem}{Theorem}
\newtheorem{proposition}{Proposition}

\newtheorem{corollary}{Corollary}
\newtheorem{lemma}{Lemma}

\makeatletter
\def\ifundefined{\@ifundefined}
\makeatother \setcounter{page}{1}

\makeatletter
\renewcommand{\citepunct}{,\penalty\@m\hskip.13emplus.1emminus.1em}
\renewcommand{\citedash}{\hbox{--}\penalty\@m}
\makeatother

\hyphenation{op-tical net-works semi-conduc-tor}

\begin{document}
%
% paper title
% can use linebreaks \\ within to get better formatting as desired
% Do not put math or special symbols in the title.
\title{On the Feasibility of Linear Interference Alignment for MIMO Interference Broadcast Channels with Constant Coefficients}

\author{Tingting~Liu,~\IEEEmembership{Member,~IEEE,}
        and~Chenyang~Yang,~\IEEEmembership{Senior Member, IEEE}
\thanks{Copyright (c) 2012 IEEE. Personal use of this material is permitted. However, permission to use this material for any other purposes must be obtained from the IEEE by sending a request to pubs-permissions@ieee.org.}
\thanks{The authors are with the School of Electronics and Information Engineering, Beihang University, Beijing 100191, China (e-mail: ttliu@ee.buaa.edu.cn; cyyang@buaa.edu.cn).}% <-this % stops a space
\thanks{
Manuscript received July 1, 2012; revised October 28, 2012 and
February 7, 2013; accepted February 7, 2013. This work was supported
in part by the National Natural Science Foundation of China under Grant
61120106002 and Grant 61128002, by the International S$\&$T Cooperation
Program of China under Grant 2008DFA12100, and by the China Postdoctoral
Science Foundation under Grant 20110490007. The review of this paper was
coordinated by Dr. W. Utschick.}}

\markboth{IEEE TRANSACTIONS ON SIGNAL PROCESSING, ACCEPTED FOR PUBLICATION}%
{Liu \MakeLowercase{\textit{et al.}}: On the Feasibility of Linear Interference Alignment for MIMO Interference Broadcast Channels with Constant Coefficients}
\maketitle

\begin{abstract}
In this paper, we analyze the feasibility of linear interference
alignment (IA) for multi-input-multi-output (MIMO) interference
broadcast channel (MIMO-IBC) with constant coefficients. We pose and
prove the necessary conditions of linear IA feasibility for general
MIMO-IBC. Except for the proper condition, we find another necessary
condition to ensure a kind of \emph{irreducible interference} to be
eliminated. We then prove the necessary and sufficient conditions
for a special class of MIMO-IBC, where the numbers of antennas are
divisible by the number of data streams per user. Since finding an
invertible Jacobian matrix is crucial for the sufficiency proof, we
first analyze the impact of \emph{sparse structure} and
\emph{repeated structure} of the Jacobian matrix. Considering that
 for the MIMO-IBC the sub-matrices of the Jacobian matrix corresponding to the
transmit and receive matrices have different \emph{repeated
structure}, we find an invertible Jacobian matrix by constructing
the two sub-matrices separately. We show that for the MIMO-IBC where
each user has one desired data stream, a proper system is feasible.
For symmetric MIMO-IBC, we provide proper but infeasible region of
antenna configurations by analyzing the difference between the
necessary conditions and the sufficient conditions of linear IA
feasibility.
\end{abstract}

% Note that keywords are not normally used for peerreview papers.
\begin{IEEEkeywords}
Interference alignment feasibility, interference broadcast channel, MIMO, Degrees of freedom (DoF).
\end{IEEEkeywords}

\IEEEpeerreviewmaketitle

\section{Introduction}
\IEEEPARstart{I}{nter-cell} interference (ICI) is a bottleneck for future cellular
networks to achieve high spectral efficiency, especially for
multi-input-multi-output (MIMO) systems. When multiple base stations
(BSs) share both the data and the channel state information (CSI),
network MIMO can improve the throughput remarkably
\cite{Foschini-NetworkMIMO-06}. When only CSI is shared, the ICI can
be avoided by the coordination among the BSs. In information
theoretic terminology, the scenario without the data sharing is a
MIMO interference broadcast channel (MIMO-IBC) when each BS
transmits to multiple users in its serving cell with same
time-frequency resource, and is a MIMO interference channel
(MIMO-IC) when each BS transmits to one user in its own cell.

To reveal the potential of the interference networks, significant
research efforts have been devoted to find the capacity region.
To solve such a challenging problem while capturing the essential nature
of the interference channel, various approaches have been proposed
to characterize the capacity approximately. Degrees of freedom (DoF) is the
first-order approximation of sum rate capacity at high
signal-to-noise ratio regime and also called as multiplexing gain,
which has received considerable attentions. When using the
break-through concept of interference alignment (IA)
\cite{Jafar_all_2011}, a $G$-cell MIMO-IC where each BS and each
user have $M$ antennas can achieve a DoF of $M/2$ per cell
\cite{Cadambe2008b}. For a two-cell MIMO-IBC where each cell has $K$
active users, each BS and each user have $M=K+1$ antennas, a per
cell DoF of $M$ can be achieved when $K$ approaches to infinity
\cite{Suh2011}. This result is surprising, because the DoF is the
same as the maximal DoF achievable by network MIMO but without data
sharing among the BSs. Encouraged by such a promising performance of
linear IA, many recent works strived to analyze the DoF for MIMO-IC
\cite{Jafar2007b,Jafar_IAchain_Kcell} and MIMO-IBC
\cite{Park2011,Kim2010,IBC_L_cell_Lovel} with various settings.

For the MIMO-IC or MIMO-IBC with constant coefficients (i.e.,
without symbol extension over time or frequency domain), to derive
the maximum DoF achieved by linear IA, it is crucial to analyze the
minimum numbers of transmit and receive antennas that guarantees the
IA to be feasible. Yet the feasibility analysis of linear IA for
general MIMO-IC and MIMO-IBC is still an open problem since the
problem was recognized in \cite{Gomadam2008}.

The feasibility analysis of linear IA feasibility includes finding
and proving the necessary and sufficient conditions. For the
necessary conditions, a \emph{proper condition} was first proposed
in \cite{Yetis2010} by relating the IA feasibility to the problem of
determining the solvability of a system represented by multivariate
polynomial equations. When the channels are \emph{generic} (i.e.,
drawn from a continuous probability distribution), the authors in
\cite{Luo2012,Tse2011} proved that the proper condition is one of
the necessary conditions for the IA feasibility of MIMO-IC. The
proper condition was then respectively provided for symmetric
MIMO-IBC\footnote{In general MIMO-IBC, each BS has $M_i$ antennas to
support $K_i$ users and each user has $N_{i_k}$ antennas to receive
$d_{i_k}$ data streams. In symmetric cases, $M_i=M$, $K_i=K$,
$N_{i_k}=N$ and $d_{i_k}=d$.} in \cite{IBC_Honig}, general MIMO-IBC
in \cite{IBC_L_cell_LJD} and partially-connected symmetric MIMO-IBC
in \cite{IBC_Guillaud}. Besides the proper condition, another class
of necessary conditions was found for MIMO-IC in
\cite{Jafar_IAchain_Kcell,Jafar_3cell,Tse_3cell_2011} and for
MIMO-IBC in \cite{IBC_Guillaud}.

To prove the sufficient conditions, two different approaches have
been employed. One is to find a closed-form solution for linear IA
\cite{Suh2011,Jafar_3cell,Tse_3cell_2011,Shin2011}, and the other is
to prove the existence of a linear IA solution
\cite{Luo2012,Tse2011,Win_IC,Gonzalez_IC}. Unfortunately, the
closed-form IA solutions are only available for finite cases, e.g.,
symmetric three-cell MIMO-IC \cite{Jafar_3cell,Tse_3cell_2011} and
symmetric two-cell MIMO-IBC with special antenna configurations
\cite{Suh2011,Shin2011}. To find the sufficient conditions for
general cases, various methodologies were employed in
\cite{Luo2012,Tse2011,Win_IC,Gonzalez_IC} to show when the IA
solutions exist. These studies all indicate that, the IA solution
exists when the mapping $f: \mathcal{V}\rightarrow\mathcal{H}$ is
\emph{surjective} \cite{Gonzalez_IC}\footnote{It is also called
\emph{full-dimensional} \cite{Luo2012}, \emph{dominant}
\cite{Tse2011} or \emph{algebraic independent} of the polynomials
\cite{Win_IC}.}, where $\mathcal{H}$ is the channel space and
$\mathcal{V}$ is the solution space. These studies also proved that
if $f$ is surjective for one channel realization $\pmb{H}_{0}$, it
will be surjective for \emph{generic} channels with probability one.
Furthermore, the authors in \cite{Luo2012} proved that if the
Jacobian matrix of $\pmb{H}_{0}$ is invertible, $f$ is surjective
for $\pmb{H}_{0}$. Along another line, the authors in
\cite{Tse2011,Win_IC} proved that if the first-order terms of the IA
polynomial equations with $\pmb{H}_{0}$ are linear independent, $f$
will be surjective for $\pmb{H}_{0}$. Interestingly, the matrix
composed of the first-order terms in \cite{Tse2011,Win_IC} happens
to be the Jacobian matrix in \cite{Luo2012}. Moreover, the matrix to
check the IA feasibility in \cite{Gonzalez_IC} is also a Jacobian
matrix, though in a form different from \cite{Luo2012}.
Consequently, all the analysis in
\cite{Luo2012,Tse2011,Win_IC,Gonzalez_IC} indicate that to prove the
existence of the IA solution for MIMO-IC, an invertible Jacobian
matrix needs to be found, either explicitly or implicitly. This
conclusion is also true for MIMO-IBC.

The way to construct an invertible Jacobian matrix depends on the
channel feature. So far, an invertible Jacobian matrix has only been
found for single beam MIMO-IC with general configurations and
multi-beam MIMO-IC in two special cases: 1) the numbers of transmit
and receive antennas are divisible by the number of data streams per
user \cite{Luo2012}, and 2) the numbers of transmit and receive
antennas are identical \cite{Tse2011}. For the general multi-beam MIMO-IC, and
for both single and multi-beam MIMO-IBC, the problem remains
unsolved, owing to their different channel features with the single
beam MIMO-IC.

The Jacobian matrix of MIMO IC and MIMO-IBC has two important
properties in structure: 1) \emph{sparse structure} (i.e., many
elements are zero) and 2) \emph{repeated structure} (i.e., some
nonzero elements are identical). In general, it is hard to construct
an invertible matrix with the \emph{sparse structure}
\cite{Jocobian_Graph}. For MIMO-IC, the study in \cite{Win_IC}
indicates that if the Jacobian matrix can be constructed as a
permutation matrix (which is invertible), the linear IA is feasible.
However, only for some cases, e.g., single beam MIMO-IC and the
special class of multi-beam MIMO-IC considered in \cite{Luo2012},
there exists a Jacobian matrix that can be set as a permutation
matrix. For other cases, due to the \emph{repeated structure}, the
Jacobian matrices cannot be set as permutation matrices. Until now
only the authors in \cite{Tse2011} constructed an invertible
Jacobian matrix for the MIMO-IC with $M=N$, but the construction
method cannot be extended to the cases beyond such a special case.
In fact, when the Jacobian matrix is not able to be set as a
permutation matrix, how to construct an invertible Jacobian matrix
is still unknown. This hinders the analysis for finding the minimal
antenna configuration to support the IA feasibility.

In this paper, we investigate the feasibility of linear IA for the
MIMO-IBC with constant coefficients. The main contributions are summarized as follows.
\begin{itemize}
  \item
The necessary conditions of the IA feasibility for a general
MIMO-IBC are provided and proved. Except for the proper condition,
we find another kind of necessary condition, which ensures a sort of
\emph{irreducible ICI} to be eliminated.\footnote{ For the ICIs
between a BS (or a user) and multiple users (or multiple BSs), if
the dimension of these ICIs cannot be reduced by designing the
receive matrices (or the transmit matrices), they are
\emph{irreducible ICIs}.} The existence conditions of the
\emph{irreducible ICIs} are provided.
  \item
The sufficient conditions of the IA feasibility for a special class
of MIMO-IBC are proved, where the numbers of transmit and receive
antennas are all divisible by the number of data streams of each
user. Although the considered setup is similar to the MIMO-IC in
\cite{Luo2012}, the channel features of the two setups differ. For
MIMO-IBC, the invertible Jacobian matrix cannot be set as a
permutation matrix due to the confliction with the \emph{repeated
structure}. Based on the observation that the sub-matrices of the
Jacobian matrix of MIMO-IBC corresponding to the transmit and
receive matrices have different \emph{repeated structures}, we
propose a general rule to construct an invertible Jacobian matrix
where the two sub-matrices are constructed in different ways.
%The sufficiency proof are further
%extended into a more general class of symmetric MIMO-IBC.
  \item
  From the insight provided by analyzing the necessary conditions
and the sufficient conditions, we provide the proper but infeasible region of antenna
configuration for symmetric MIMO-IBC.
\end{itemize}

The rest of the paper is organized as follows. We describe the
system model in Section \ref{Sec:System model}. The necessary
conditions for general MIMO-IBC and the necessary and sufficient
conditions for a special class of MIMO-IBC will be provided and
proved in Section \ref{Sec:Necessary_Condition} and Section
\ref{Sec:Sufficient_Condition}, respectively. We discuss the
connection between the proper condition and the feasibility
condition in Section \ref{Sec:Discussion}. Conclusions are given in
the last section.

\emph{Notations:} Conjugation, transpose, Hermitian
transpose, and expectation are represented by $(\cdot)^{*}$,
$(\cdot)^{T}$, $(\cdot)^{H}$, and $\mathbb{E}\{\cdot\}$,
respectively. $\mathrm{Tr}\{\cdot\}$ is the trace of a matrix, and $\mathrm{diag}\{\cdot\}$ is a block diagonal matrix. $\otimes$ is the Kronecker product operator, $\mathrm{vec}\{\cdot\}$ is the operator that converts a matrix or set into a column vector, $\pmb{I}_{d}$ is an identity matrix of size
$d$. $|\cdot|$ is the cardinality of a set, $\varnothing$ denotes an empty set, and $\mathcal{A}\setminus\mathcal{B}=\{x\in\mathcal{A}|  x\notin\mathcal{B}\}$ denotes the relative complement of $\mathcal{A}$ in $B$. $\exists$ means ``there exists'' and $\forall$ means ``for all''.

\section{System model}\label{Sec:System model}
Consider a downlink $G$-cell MIMO network. In cell $i$, BS$_i$
supports $K_i$ users, $i = 1, \cdots, G$. The $k$th user in cell $i$
(denoted by MS$_{i_k}$) is equipped with $N_{i_k}$ antennas to
receive $d_{i_k}$ desired data streams from BS$_i$,
$k=1,\cdots,K_i$. BS$_i$ is equipped with $M_i$ antennas to transmit
overall $d_i=\sum_{k=1}^{K_i}d_{i_k}$ data streams. The total DoF to
be supported by the network is $d^{\mathrm{tot}}=
\sum_{i=1}^{G}d_i=\sum_{i=1}^{G}\sum_{k=1}^{K_i}d_{i_k}$. Assume
that there are no data sharing among the BSs and every BS has
perfect CSIs of all links. This is a scenario of general MIMO-IBC,
and the configuration is denoted by $\prod_{i=1}^{G}(M_i\times
\prod_{k=1}^{K_i}(N_{i_k},d_{i_k}))$.

The desired signal of MS$_{i_k}$ can be estimated as
\begin{align}\label{Eq:Rx signal model}
\hat{\pmb{x}}_{i_k} = &\pmb{U}_{i_k}^{H}\pmb{H}_{i_k,i}\pmb{V}_{i_k}\pmb{x}_{i_k}+
\sum_{l=1,l\neq
k}^{K_i}\pmb{U}_{i_k}^{H}\pmb{H}_{i_k,i}\pmb{V}_{i_l}\pmb{x}_{i_l}\nonumber\\
&+\sum_{j=1,j\neq
i}^{G}\pmb{U}_{i_k}^{H}\pmb{H}_{i_k,j}\pmb{V}_j\pmb{x}_j+\pmb{U}_{i_k}^{H}\pmb{n}_{i_k}
\end{align}
where
$\pmb{x}_{i_k}\in \mathbb{C}^{d_{i_k}\times 1}$ is the symbol vector for MS$_{i_k}$ satisfying $\mathbb{E}\{\pmb{x}_{i_k}^{H}\pmb{x}_{i_k}\}=Pd_{i_k}$, $P$ is the
transmit power per symbol, and $\pmb{x}_j =
[\pmb{x}_{j_1}^{T},\cdots,\pmb{x}_{j_{K_i}}^{T}]^T $ is
the symbol vector for the $K_j$ users in cell $j$,
${\pmb{V}}_{i_k}\in \mathbb{C}^{M_i\times d_{i_k}}$ is the transmit matrix for MS$_{i_k}$ satisfying
  $\mathrm{Tr}\{{\pmb{V}}_{i_k}^{H}{\pmb{V}}_{i_k}\}=d_{i_k}$, and ${\pmb{V}}_j=[ {{{\pmb{V}}_{j_1}}, \cdots
,{\pmb{V}}_{j_{K_i}}} ]$ is the transmit matrix of BS$_j$ for the
$K_j$ users in cell $j$, $\pmb{U}_{i_k}\in \mathbb{C}^{N_{i_k}\times
d_{i_k}}$ is the receive matrix for MS$_{i_k}$,
${\pmb{H}}_{i_k,j}\in \mathbb{C}^{N_{i_k}\times M_j}$ is the channel
matrix of the link from BS$_j$ to MS$_{i_k}$ whose elements are
independent random variables with a continuous distribution, and
${\pmb{n}}_{i_k}\in \mathbb{C}^{N_{i_k}\times 1}$ is an additive
white Gaussian noise.
% (AWGN) vector with zero-mean and variance
%$\sigma_{n}^{2}$.

The received signal of each user contains the multiuser interference
(MUI) from its desired BS and the ICI from its interfering BSs,
which are the second and third terms in \eqref{Eq:Rx signal model}.
Without symbol extension, the \emph{linear IA conditions}
\cite{Yetis2010} can be obtained from \eqref{Eq:Rx signal model} as
follows,
\begin{subequations}
\begin{align}
\label{Eq:Constraint_IDI_free0}
\mathrm{rank} \left(\pmb{U}_{i_k}^{H}\pmb{H}_{i_k,i}\pmb{V}_{i_k}\right)&=d_{i_k},~\forall i,k\\
\label{Eq:Constraint_MUI_free0}
\pmb{U}_{i_k}^{H}\pmb{H}_{i_k,i}\pmb{V}_{i_l}&=\pmb{0},~\forall k\neq l\\
\label{Eq:Constraint_ICI_free0}
\pmb{U}_{i_k}^{H}\pmb{H}_{i_k,j}\pmb{V}_j&=\pmb{0},~\forall i\neq j
\end{align}
\end{subequations}
The polynomial equation \eqref{Eq:Constraint_IDI_free0} is a rank
constraint to convey the desired signals for each user. It can be
interpreted as a constraint in single user MIMO system: the
inter-data stream interference (IDI)-free transmission constraint.
\eqref{Eq:Constraint_MUI_free0} and \eqref{Eq:Constraint_ICI_free0}
are the zero-forcing (ZF) constraints to eliminate the MUI and ICI,
respectively.

Note that multiple data streams transmitted from one BS to a user
undergo the same channel. This leads to two features of MIMO-IBC, according
to the cases where the BS and the user are located in the same cell
or in different cells, which are
\begin{itemize}
  \item
\emph{Feature 1}: the desired signal and the MUI experienced at each
user undergoing the same channel.
  \item
\emph{Feature 2}: the multiple ICIs generated from one BS to a user
in other cell undergo the same channel even when each user only
receives one desired data stream.\footnote{This feature does not
appear in single beam MIMO-IC. By contrast, the feature appears in
both single beam and multi-beam MIMO-IBC.}
\end{itemize}

\section{Necessary Conditions for General Cases}\label{Sec:Necessary_Condition}
In this section, we present and prove the necessary conditions of
linear IA feasibility for general MIMO-IBC. Since the IA conditions
in (\ref{Eq:Constraint_IDI_free0})-(\ref{Eq:Constraint_ICI_free0})
are similar to MIMO-IC and the proof builds upon the same line of
the work in \cite{Luo2012}, we emphasize the difference of MIMO-IBC
from MIMO-IC, which comes from the first feature of MIMO-IBC.

\begin{theorem}[Necessary Conditions]\label{Theorem:Necessary_Condition}
For a general MIMO-IBC with configuration $\prod_{i=1}^{G}(M_i\times \prod_{k=1}^{K_i}(N_{i_k},d_{i_k}))$
where the channel matrices $\{\pmb{H}_{i_k,j}\}$ are generic
(i.e., drawn from a continuous probability distribution), if the linear IA is feasible, the following conditions must be
satisfied,
\begin{subequations}
\begin{align}
\label{Eq:Necessary_Condition_Signal}
&\min\{M_i-d_i,N_{i_k}-d_{i_k}\}\geq 0,~\forall i,k \\
\label{Eq:Necessary_Condition_Interference}
&\sum_{j:(i,j)\in \mathcal{I}}
\left(M_j-d_j\right)d_j+ \sum_{i:(i,j)\in
\mathcal{I}}\sum_{k\in
\mathcal{K}_i}\left(N_{i_k}-d_{i_k}\right)d_{i_k}\nonumber \\
\geq& \sum_{(i,j)\in\mathcal{I}}d_j\sum_{k\in
\mathcal{K}_i}d_{i_k},~\forall \mathcal{I}\subseteq\mathcal{J}\\
\label{Eq:Necessary_Condition_Compress}& \max\Big\{\sum_{j\in\mathcal{I}_{\mathrm{A}}}M_j,\sum_{i\in
\mathcal{I}_{\mathrm{B}}}\sum_{k\in
\mathcal{K}_i}N_{i_k}\Big\}\nonumber \\
\geq&
\sum_{j\in\mathcal{I}_{\mathrm{A}}}d_j+
\sum_{i\in\mathcal{I}_{\mathrm{B}}}\sum_{k\in
\mathcal{K}_i}d_{i_k},~\forall \mathcal{I}_{\mathrm{A}}\cap\mathcal{I}_{\mathrm{B}}=\varnothing
\end{align}
\end{subequations}
where $\mathcal{K}_i\subseteq \{1,\cdots,K_i\}$ is an arbitrary
subset of the users in cell $i$, $\mathcal{J}=\{(i,j)|1\leq i\neq j
\leq G\}$ denotes the set of all cell-pairs that mutually
interfering each other, $\mathcal{I}$ is an arbitrary subset of
$\mathcal{J}$, and
$\mathcal{I}_{\mathrm{A}},~\mathcal{I}_{\mathrm{B}}\subseteq
\{1,\cdots,G\}$ are arbitrary two subsets of the index set of
the $G$ cells.
\end{theorem}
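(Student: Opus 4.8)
The plan is to follow the algebraic-geometric dimension-counting argument of \cite{Luo2012}, adapted to the structure of the MIMO-IBC. I would first identify each transmit matrix $\pmb{V}_j$ with the $d_j$-dimensional subspace it spans (a point of a Grassmannian of complex dimension $(M_j-d_j)d_j$) and each receive matrix $\pmb{U}_{i_k}$ with a point of a Grassmannian of dimension $(N_{i_k}-d_{i_k})d_{i_k}$, so that these numbers are the genuine degrees of freedom after removing the right-multiplication ambiguity. Viewing \eqref{Eq:Constraint_IDI_free0}--\eqref{Eq:Constraint_ICI_free0} as a polynomial system, the governing principle is that if a solution exists for generic $\{\pmb{H}_{i_k,j}\}$ then the projection from the solution variety to the channel space is dominant; hence for \emph{every} subset of the equations the number of complex scalar equations cannot exceed the number of free variables that actually appear. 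Condition \eqref{Eq:Necessary_Condition_Signal} needs no counting: the columns of $\pmb{V}_i=[\pmb{V}_{i_1},\cdots,\pmb{V}_{i_{K_i}}]$ must be linearly independent in $\mathbb{C}^{M_i}$ for \eqref{Eq:Constraint_IDI_free0}--\eqref{Eq:Constraint_MUI_free0} to hold together, forcing $M_i\ge d_i$, while $\pmb{U}_{i_k}$ must have rank $d_{i_k}$, forcing $N_{i_k}\ge d_{i_k}$.

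For the proper condition \eqref{Eq:Necessary_Condition_Interference} I would apply the principle to the sub-collection consisting only of the ICI-free equations \eqref{Eq:Constraint_ICI_free0} indexed by $(i,j)\in\mathcal{I}$ and $k\in\mathcal{K}_i$. The variables that appear are exactly the transmit subspaces $\pmb{V}_j$ of the interfering BSs and the receive subspaces $\pmb{U}_{i_k}$ of the victim users, so the total number of free variables equals the left-hand side of \eqref{Eq:Necessary_Condition_Interference}. To count equations, I would form the incidence variety cut out by these $\pmb{U}_{i_k}^{H}\pmb{H}_{i_k,j}\pmb{V}_j=\pmb{0}$ and project it onto the beamformers; the crucial observation is that each channel $\pmb{H}_{i_k,j}$ occurs in exactly one block, so for generic full-rank $\pmb{U}_{i_k},\pmb{V}_j$ the map $\pmb{H}_{i_k,j}\mapsto\pmb{U}_{i_k}^{H}\pmb{H}_{i_k,j}\pmb{V}_j$ is onto $\mathbb{C}^{d_{i_k}\times d_j}$ and the conditions imposed on distinct channels are independent. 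The generic fibre therefore has codimension $\sum_{(i,j)\in\mathcal{I}}d_j\sum_{k\in\mathcal{K}_i}d_{i_k}$, the right-hand side of \eqref{Eq:Necessary_Condition_Interference}, and dominance of the projection onto the relevant channel subspace forces this to be at most the number of variables. The IBC-specific point --- the first feature --- enters here: a BS contributes a single whole-cell transmit subspace $\pmb{V}_j$ of dimension $(M_j-d_j)d_j$ shared by all of its streams, rather than one variable per interfered user.

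Condition \eqref{Eq:Necessary_Condition_Compress} encodes the \emph{irreducible ICI} and is the genuinely new step. For disjoint $\mathcal{I}_{\mathrm{A}},\mathcal{I}_{\mathrm{B}}$ I would stack all ICI-free equations from the BSs $j\in\mathcal{I}_{\mathrm{A}}$ to the victim users indexed by $i\in\mathcal{I}_{\mathrm{B}},k\in\mathcal{K}_i$ into a single relation $\pmb{U}_{\mathrm{B}}^{H}\pmb{H}_{\mathrm{B,A}}\pmb{V}_{\mathrm{A}}=\pmb{0}$, where $\pmb{V}_{\mathrm{A}}=\mathrm{diag}\{\pmb{V}_j\}$ has rank $d_{\mathrm{A}}=\sum_{j\in\mathcal{I}_{\mathrm{A}}}d_j$, $\pmb{U}_{\mathrm{B}}=\mathrm{diag}\{\pmb{U}_{i_k}\}$ has rank $d_{\mathrm{B}}=\sum_{i\in\mathcal{I}_{\mathrm{B}}}\sum_{k\in\mathcal{K}_i}d_{i_k}$, and $\pmb{H}_{\mathrm{B,A}}$ is the generic $N_{\mathrm{B}}\times M_{\mathrm{A}}$ aggregate cross-channel, with $M_{\mathrm{A}}=\sum_{j\in\mathcal{I}_{\mathrm{A}}}M_j$ and $N_{\mathrm{B}}=\sum_{i\in\mathcal{I}_{\mathrm{B}}}\sum_{k\in\mathcal{K}_i}N_{i_k}$. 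Since dropping the block-diagonal constraint on $\pmb{U}_{\mathrm{B}},\pmb{V}_{\mathrm{A}}$ only enlarges the feasible set, a necessary condition follows from the relaxed question of whether a generic $\pmb{H}_{\mathrm{B,A}}$ admits a $d_{\mathrm{A}}$-dimensional subspace $\mathcal{V}$ and a $d_{\mathrm{B}}$-dimensional subspace $\mathcal{U}$ with $\pmb{H}_{\mathrm{B,A}}\mathcal{V}\perp\mathcal{U}$. Orthogonality requires $d_{\mathrm{B}}\le N_{\mathrm{B}}-\dim(\pmb{H}_{\mathrm{B,A}}\mathcal{V})$; over all $d_{\mathrm{A}}$-dimensional $\mathcal{V}$ the image dimension is minimized at $\max\{0,\,d_{\mathrm{A}}-\max\{0,M_{\mathrm{A}}-N_{\mathrm{B}}\}\}$, since $\mathcal{V}$ can be steered into the $\max\{0,M_{\mathrm{A}}-N_{\mathrm{B}}\}$-dimensional kernel of $\pmb{H}_{\mathrm{B,A}}$ (this steering is precisely the transmit-side compression responsible for the $M_{\mathrm{A}}$ branch of the maximum). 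Substituting this minimal interference dimension and simplifying with \eqref{Eq:Necessary_Condition_Signal} yields $\max\{M_{\mathrm{A}},N_{\mathrm{B}}\}\ge d_{\mathrm{A}}+d_{\mathrm{B}}$.

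The main obstacle is this last step: rigorously establishing the minimal attainable dimension of the received interference --- the precise sense in which the ICI is \emph{irreducible} --- for a generic $\pmb{H}_{\mathrm{B,A}}$, and checking that the block-diagonal structure of $\pmb{V}_{\mathrm{A}}$ and $\pmb{U}_{\mathrm{B}}$ cannot push it below the unstructured bound. A second delicate point, inherited from \cite{Luo2012}, is justifying that genericity of the channels makes the differential of the equation map attain its expected rank, so that the codimension and dominance counts underlying \eqref{Eq:Necessary_Condition_Interference} are valid rather than merely heuristic.
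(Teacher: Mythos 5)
Your proposal is correct and follows essentially the same route as the paper: condition \eqref{Eq:Necessary_Condition_Signal} from forcing $\mathrm{rank}(\pmb{V}_i)=d_i$ and $\mathrm{rank}(\pmb{U}_{i_k})=d_{i_k}$ once the desired-signal and MUI constraints are combined, condition \eqref{Eq:Necessary_Condition_Interference} from the effective-variable (Grassmannian) count together with the genericity/dominance result imported from the MIMO-IC analysis of \cite{Luo2012}, and condition \eqref{Eq:Necessary_Condition_Compress} from stacking the cross-cluster zero-forcing constraints into $\pmb{U}_{\mathrm{B}}^{H}\pmb{H}_{\mathrm{B},\mathrm{A}}\pmb{V}_{\mathrm{A}}=\pmb{0}$ and applying generic-rank plus rank--nullity reasoning, exactly as the paper does in its two-case analysis. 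The two ``obstacles'' you flag are not genuine gaps: the relaxation to unstructured subspaces that you already justified means the block-diagonal structure cannot circumvent the bound (any structured solution is an unstructured one), and the expected-rank/dominance step for \eqref{Eq:Necessary_Condition_Interference} is precisely the theorem the paper cites from \cite{Luo2012} rather than re-proves.
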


\subsection{Proof of \eqref{Eq:Necessary_Condition_Signal}}
\begin{proof}
Comparing \eqref{Eq:Constraint_IDI_free0} and
\eqref{Eq:Constraint_MUI_free0}, we can see that the
channel matrices of MS$_{i_k}$ in the two equations are all equal to
$\pmb{H}_{i_k,i}$. As a result, the rank constraint is coupled with
the MUI-free constraint, such that the proof for MIMO-IC in
\cite{Gomadam2008,Yetis2010} cannot be directly applied.

Note that from the view of MS$_{i_k}$, the desired data streams of
other users in cell $i$ are its MUI, while from the view of BS$_i$,
all the data streams for the users in cell $i$ are its desired
signals. Combining \eqref{Eq:Constraint_IDI_free0} and
\eqref{Eq:Constraint_MUI_free0}, we can obtain a rank constraint for
BS$_i$ as $\mathrm{rank}
\left([\pmb{H}_{i_{1},i}^{H}\pmb{U}_{i_1},\cdots,
\pmb{H}_{i_{K_i},i}^{H}\pmb{U}_{i_{K_i}}]^{H}\pmb{V}_i\right)
=\sum_{k=1}^{K_i}d_{i_k}=d_i$. Then, the IA conditions in
\eqref{Eq:Constraint_IDI_free0}$-$\eqref{Eq:Constraint_ICI_free0}
can be equivalently rewritten as
\begin{subequations}
\begin{align}
\label{Eq:Constraint_MUI_free}
\mathrm{rank} \left(
\left[\begin{array}{ccc}
        \pmb{U}_{i_1}^{H} & {} & {\pmb{0}} \\
        {} & \ddots & {} \\
        {\pmb{0}} & {} & \pmb{U}_{i_{K_i}}^{H}
      \end{array}
\right]
\left[\begin{array}{c}
        \pmb{H}_{i_{1},i} \\
        \vdots \\
        \pmb{H}_{i_{K_i},i}
      \end{array}\right]
\pmb{V}_i\right)&=d_i,~\forall i\\
\label{Eq:Constraint_ICI_free}
\pmb{U}_{i_k}^{H}\pmb{H}_{i_k,j}\pmb{V}_j&=\pmb{0},~\forall i\neq j
\end{align}
\end{subequations}

Now the channel matrices in \eqref{Eq:Constraint_MUI_free} are
independent of those in \eqref{Eq:Constraint_ICI_free}.
Since the channel matrix $\pmb{H}_{i_k,i}$ is generic,
\eqref{Eq:Constraint_MUI_free} is automatically satisfied with
probability one when $\mathrm{rank}(\pmb{V}_i)=d_i$ and
$\mathrm{rank}(\pmb{U}_{i_k})= d_{i_k}$\cite{Gomadam2008}.
Therefore, \eqref{Eq:Necessary_Condition_Signal} is necessary
to satisfy the equivalent rank constraint
\eqref{Eq:Constraint_MUI_free}.
\end{proof}
Different from multi-beam MIMO-IC, the aggregated receive matrix for different users in MIMO-IBC has a block-diagonal structure due to the non-cooperation among the users.

The intuitive meaning of \eqref{Eq:Necessary_Condition_Signal} is straightforward. BS$_i$ should have
enough antennas to transmit the overall $d_i$ desired signals to
multiple users in cell $i$, i.e., to ensure MUI-free transmission,
and MS$_{i_k}$ should have enough antennas to receive its $d_{i_k}$
desired signals, i.e., to ensure IDI-free transmission.

\subsection{Proof of \eqref{Eq:Necessary_Condition_Interference}}
\begin{proof}
To satisfy \eqref{Eq:Constraint_ICI_free} under the constraint of
\eqref{Eq:Constraint_MUI_free}, we need to first reserve some
variables in the transmit and receive matrices to ensure the
equivalent rank constraints, and then use the remaining variables to
remove the ICI. To this end, we partition the transmit and receive
matrices as follows
\begin{align}\label{effective txrx}
  \pmb{V}_j & = \pmb{P}_j^{V}\left[\begin{array}{c}
   \pmb{I}_{d_j}\\
   \bar{\pmb{V}}_j
 \end{array}
  \right]\pmb{Q}_j^{V}, \quad
  \pmb{U}_{i_k} = \pmb{P}_{i_k}^{U}\left[\begin{array}{c}
       \pmb{I}_{d_{i_k}} \\
       \bar{\pmb{U}}_{i_k}
     \end{array}
  \right]\pmb{Q}_{i_k}^{U}
\end{align}
where $\pmb{P}_j^{V}\in \mathbb{C}^{M_j\times M_j}$ and
$\pmb{P}_{i_k}^{U}\in \mathbb{C}^{N_{i_k}\times N_{i_k}}$ are square
permutation matrices, $\pmb{Q}_j^{V}\in \mathbb{C}^{d_j\times d_j}$
and $\pmb{Q}_{i_k}^{U}\in \mathbb{C}^{d_{i_k}\times d_{i_k}}$ are
invertible matrices, and $\bar{\pmb{V}}_i\in
\mathbb{C}^{(M_j-d_j)\times d_j}$ and $\bar{\pmb{U}}_{i_k}\in
\mathbb{C}^{(N_{i_k}-d_{i_k})\times d_{i_k}}$ are the \emph{effective
transmit and receive matrices}, whose elements are the remaining
variables after extracting $d_j^2$ and $d_{i_k}^2$ variables of
${\pmb{V}}_j$ and ${\pmb{U}}_{i_k}$,
respectively.

Then, \eqref{Eq:Constraint_ICI_free} can be rewritten as
\begin{align}\label{Eq:Constraint_ICI_free_New0}
\left(\pmb{Q}_{i_k}^{U}\right)^{H}\left[\begin{array}{cc}
       \pmb{I}_{d_{i_k}} &  \bar{\pmb{U}}_{i_k}^{H}
     \end{array}
  \right]
\underbrace{\left(\pmb{P}_{i_k}^{U}\right)^{H}\pmb{H}_{i_k,j}
\pmb{P}_j^{V}}_{\bar{\pmb{H}}_{i_k,j}}\left[\begin{array}{c}
   \pmb{I}_{d_j}\\
   \bar{\pmb{V}}_j
 \end{array}
  \right]\pmb{Q}_j^{V}=\pmb{0}
\end{align}
where $\bar{\pmb{H}}_{i_k,j}=\left(\pmb{P}_{i_k}^{U}\right)^{H}\pmb{H}_{i_k,j}
\pmb{P}_j^{V}$ is the \emph{effective channel matrix}.

Further partition the effective channel matrix as follows
\begin{align*}
\bar{\pmb{H}}_{i_k,j}= \left[\begin{array}{cc}
\bar{\pmb{H}}_{i_k,j}^{(1)}& \bar{\pmb{H}}_{i_k,j}^{(2)}\\
\bar{\pmb{H}}_{i_k,j}^{(3)}& \bar{\pmb{H}}_{i_k,j}^{(4)}
\end{array}\right]
\end{align*}
where $\bar{\pmb{H}}_{i_k,j}^{(1)}\in \mathbb{C}^{d_{i_k}\times
d_j}$, $\bar{\pmb{H}}_{i_k,j}^{(2)}\in \mathbb{C}^{d_{i_k}\times
(M_j-d_j)}$, $\bar{\pmb{H}}_{i_k,j}^{(3)}\in
\mathbb{C}^{(N_{i_k}-d_{i_k})\times d_j}$ and
$\bar{\pmb{H}}_{i_k,j}^{(4)}\in \mathbb{C}^{(N_{i_k}-d_{i_k})
\times(M_j-d_j)}$, respectively.

Then,
\eqref{Eq:Constraint_ICI_free_New0} is equivalent to the following
equation,
\begin{align}\label{Eq:Constraint_ICI_free_New1}
\left[\begin{array}{cc}
          \pmb{I}_{d_{i_k}} & \bar{\pmb{U}}_{i_k}^{H}
        \end{array}
   \right]
\left[\begin{array}{cc}
\bar{\pmb{H}}_{i_k,j}^{(1)}& \bar{\pmb{H}}_{i_k,j}^{(2)}\\
\bar{\pmb{H}}_{i_k,j}^{(3)}& \bar{\pmb{H}}_{i_k,j}^{(4)}
\end{array}\right]
\left[\begin{array}{c}
   \pmb{I}_{d_j}\\
   \bar{\pmb{V}}_j
 \end{array}
  \right]=\pmb{0}
\end{align}
Now the IA conditions in \eqref{Eq:Constraint_MUI_free} and
\eqref{Eq:Constraint_ICI_free} turns into a single condition.

From (\ref{Eq:Constraint_ICI_free_New1}), the relationship between
the effective transmit and receive matrices and the effective channel matrices can be
expressed in the form of implicit function, i.e.,
\begin{align}\label{Eq:Constraint_ICI_Transmission_free}
\pmb{F}_{i_k,j}(\bar{\pmb{H}};\bar{\pmb{V}},\bar{\pmb{U}})
=&\bar{\pmb{H}}_{i_k,j}^{(1)}+\bar{\pmb{H}}_{i_k,j}^{(2)}\bar{\pmb{V}}_j+ \bar{\pmb{U}}_{i_k}^{H}\bar{\pmb{H}}_{i_k,j}^{(3)}
+\bar{\pmb{U}}_{i_k}^{H}\bar{\pmb{H}}_{i_k,j}^{(4)}\bar{\pmb{V}}_j\nonumber \\
=&\pmb{0},~\forall i\neq j
\end{align}
where $\pmb{F}_{i_k,j}(\cdot)$
represents the ICIs from $\bar{\pmb{V}}_j$ to $\bar{\pmb{U}}_{i_k}$,
i.e., the interference generated by the effective transmit matrix of
BS$_j$ to the $k$th user in cell $i$.

In \eqref{Eq:Constraint_ICI_Transmission_free},
$\pmb{F}_{i_k,j}(\cdot) \in \mathbb{C}^{d_{i_k}\times d_j}$ includes
$d_jd_{i_k}$ ICIs, and $\bar{\pmb{V}}_j$ and $\bar{\pmb{U}}_{i_k}$ provide $(M_j-d_j)d_j$
and $(N_{i_k}-d_{i_k})d_{i_k}$ variables, respectively. Hence,
\eqref{Eq:Necessary_Condition_Interference} ensures that for all
subsets of the equations in
\eqref{Eq:Constraint_ICI_Transmission_free}, the number of
involved variables is at least as large as the number of
corresponding equations. Analogous
to MIMO-IC, to eliminate the ICI in the network thoroughly, all the cell-pairs that
are interfering each other, i.e., those in set
$\mathcal{J}$ and any subset of it, $\mathcal{I}$, should be
considered. Different from MIMO-IC, we should not generate ICI to arbitrary subsets of the users
in each cell, i.e., $\mathcal{K}_i$, rather than not generate ICI to a single user in each cell.\footnote{For example, when $G=3$,
$\mathcal{J}=\{(1,2), (1,3), (2,1), (2,3), (3,1)$, $(3,2)\}$. If
$\mathcal{I}=\{(1,2)\}$, the right-hand side of
\eqref{Eq:Necessary_Condition_Interference} is $d_2\sum_{k\in
\mathcal{K}_{1}}d_{1_k}$, which is the number of ICIs from BS$_2$ to
the users in $\mathcal{K}_{1}$ (an arbitrary subset of the users in
cell 1), and the left-hand side is $(M_2-d_2)d_2+\sum_{k\in
\mathcal{K}_{1}}(N_{1_k}-d_{1_k})d_{1_k}$, which is the number of
variables to eliminate these ICIs.}

From the definition in \cite{Yetis2010}, we know that this is
actually the condition to ensure the MIMO-IBC to be proper. In a
MIMO-IC with generic channel matrix, the proper condition has been
proved as a necessary condition of the IA feasibility
\cite{Luo2012}. For the considered MIMO-IBC, the channel matrix
$\pmb{H}_{i_k,j}$ is also generic. Therefore, the proper condition
is necessary for the MIMO-IBC to be feasible. Now,
\eqref{Eq:Necessary_Condition_Interference} is proved.
\end{proof}

The intuitive meaning of \eqref{Eq:Necessary_Condition_Interference}
is to ensure that any pairs of BSs and the users in any pairs of
cells should have enough spatial resources to transmit and receive
their desired signals and to eliminate the ICIs between these BSs and
users.

%All equations in \eqref{Eq:Constraint_ICI_Transmission_free} can be
%written in a more compact form as
%$\pmb{F}(\bar{\pmb{H}};\bar{\pmb{V}},\bar{\pmb{U}})=\pmb{0}$, which
%represents that all desired signals transmitted from the BSs are
%received at the users without any ICI.

\subsection{Proof of \eqref{Eq:Necessary_Condition_Compress}}
\begin{proof}
To express the ICIs generated from the BSs in any cells to the users
in any other cells, we consider two non-overlapping clusters A and
B, as shown in Fig. \ref{fig:Incompressible_ICI}. We use
$\mathcal{I}_{\mathrm{A}}$ and $\mathcal{I}_{\mathrm{B}}$ to denote
the cell index sets in the clusters A and B, respectively, then
$\mathcal{I}_{\mathrm{A}}\cap\mathcal{I}_{\mathrm{B}}=\varnothing$.
Let $\mathcal{A}=\{j|j\in \mathcal{I}_{\mathrm{A}}\}$ and
$\mathcal{B}=\{i_k|k\in \mathcal{K}_i,~i\in
\mathcal{I}_{\mathrm{B}}\}$ denote the BS index set in cluster A and
the user index set in cluster B, respectively. The ZF
constraints to eliminate the ICI from the BSs in cluster A to the
users in cluster B can be written as
\begin{align}\label{Eq:Constraint_ICI_free_Multiple}
\pmb{U}_{\mathrm{B}}^{H}\pmb{H}_{\mathrm{B},\mathrm{A}}\pmb{V}_{\mathrm{A}}
=\pmb{0}
\end{align}
where $\pmb{V}_{\mathrm{A}}=\mathrm{diag}\{\pmb{V}_{A_{1}},\cdots,\pmb{V}_{A_{p}}\}\in \mathbb{C}^{d_{\mathrm{A}} \times
M_{\mathrm{A}}}$, $\pmb{U}_{\mathrm{B}}=\mathrm{diag}\{\pmb{U}_{B_{1}},\cdots$, $\pmb{U}_{B_{q}}\}\in
\mathbb{C}^{d_{\mathrm{B}} \times N_{\mathrm{B}}}$,
\begin{align*}
\pmb{H}_{\mathrm{B},\mathrm{A}}=
\left[\begin{array}{ccc}
        \pmb{H}_{B_{1},A_{1}} & \cdots & \pmb{H}_{B_{1},A_{p}} \\
        \vdots & \ddots & \vdots \\
        \pmb{H}_{B_{q},A_{1}} & \cdots & \pmb{H}_{B_{q},A_{p}}
      \end{array}\right]\in
\mathbb{C}^{N_{\mathrm{B}} \times M_{\mathrm{A}}}
\end{align*}
is the stacked channel
matrix from the BSs in cluster A to the users in cluster B, $A_{s}$ and $B_{s}$ are the $s$th elements in $\mathcal{A}$ and
$\mathcal{B}$, respectively,
$p\triangleq|\mathcal{A}|=|\mathcal{I}_{\mathrm{A}}| $ and
$q\triangleq|\mathcal{B}|=\sum_{i\in \mathcal{I}_{\mathrm{B}}}|\mathcal{K}_i|  $,
$d_{\mathrm{A}}=\sum_{j\in\mathcal{I}_{\mathrm{A}}}d_j$ is the
number of all data streams transmitted from the BSs in cluster A,
$d_{\mathrm{B}}=\sum_{i\in \mathcal{I}_{\mathrm{B}}}\sum_{k\in
\mathcal{K}_i}d_{i_k}$ is the number of all data streams received at
the users in cluster B,
$M_{\mathrm{A}}=\sum_{j\in\mathcal{I}_{\mathrm{A}}}M_j$ is the
number of all transmit antennas at the BSs in cluster A, and
$N_{\mathrm{B}}=\sum_{i\in \mathcal{I}_{\mathrm{B}}}\sum_{k\in
\mathcal{K}_i}N_{i_k}$ is the number of all receive antennas at the
users in cluster B.

\begin{figure}[htb!]
\centering
\includegraphics[width=0.99\linewidth]{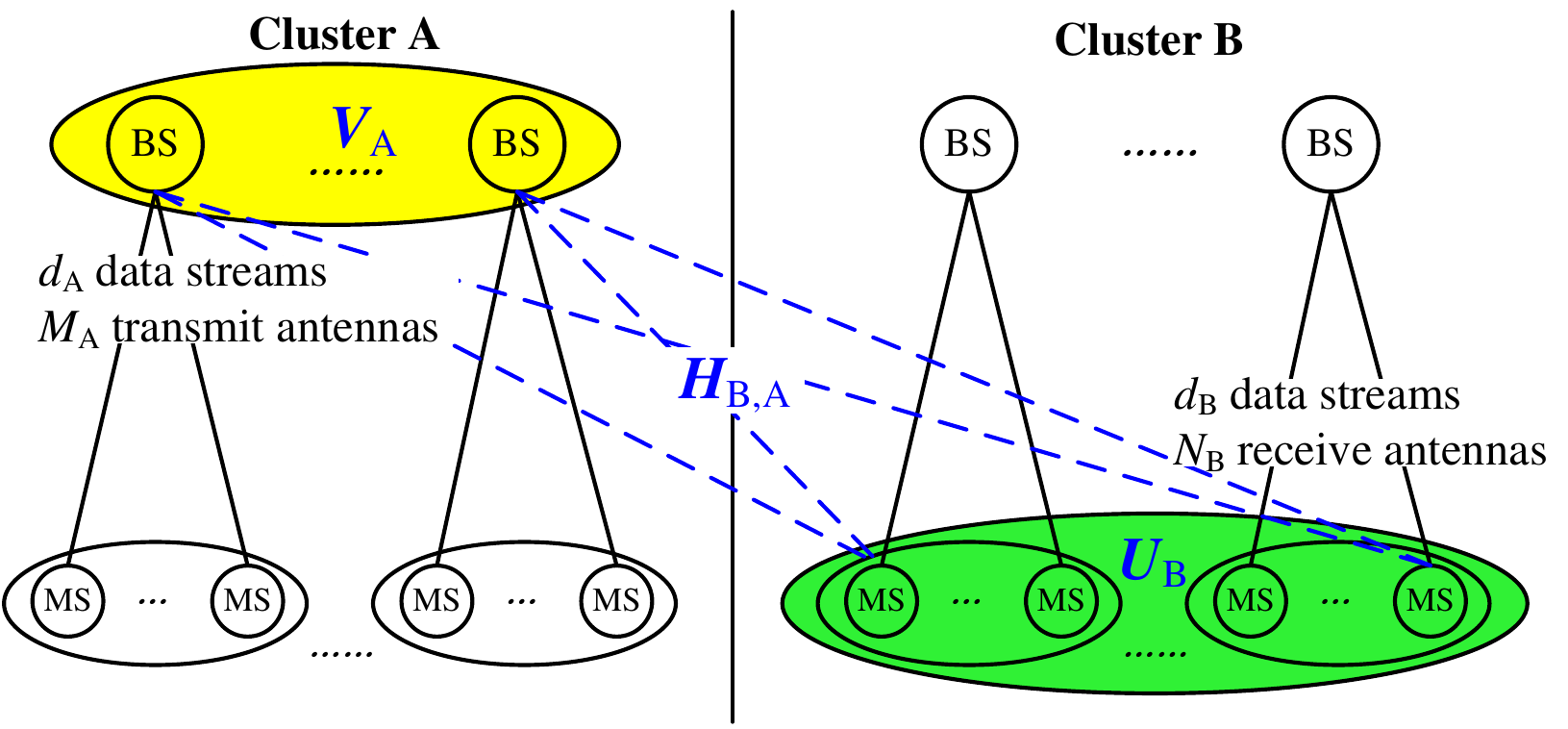}
\caption{ICI between arbitrary two non-overlapping clusters.} \label{fig:Incompressible_ICI}
\end{figure}

Since $\pmb{H}_{\mathrm{B},\mathrm{A}}$ is generic, its rank
satisfies
$\mathrm{rank}\left(\pmb{H}_{\mathrm{B},\mathrm{A}}\right)=
\min\{M_{\mathrm{A}},N_{\mathrm{B}}\}$ with probability one
\cite{Gomadam2008}. If $N_{\mathrm{B}} \geq M_{\mathrm{A}}$,
$\mathrm{rank}\left(\pmb{H}_{\mathrm{B},\mathrm{A}}\right)=
M_{\mathrm{A}} \geq d_{\mathrm{A}}$. Since
$\pmb{H}_{\mathrm{B},\mathrm{A}}$ is independent of
$\pmb{V}_{\mathrm{A}}$, we have
$\mathrm{rank}(\pmb{H}_{\mathrm{B},\mathrm{A}}\pmb{V}_{\mathrm{A}})=\mathrm{rank}(\pmb{V}_{\mathrm{A}})
=d_{\mathrm{A}}$ with probability one. Then, the users in cluster B
will see $d_{\mathrm{A}}$ ICIs from the BSs in cluster A. On the
other hand, the users in cluster B need to receive overall
$d_{\mathrm{B}}$ desired signals from the BSs in cluster B, i.e.,
$\mathrm{rank}(\pmb{U}_{\mathrm{B}})=d_{\mathrm{B}}$. To separate
the ICIs from cluster A and the desired signals of cluster B, the
overall subspace dimension of the received signals for the users in
cluster B should satisfy $N_{\mathrm{B}}\geq
d_{\mathrm{A}}+d_{\mathrm{B}}$ according to the rank-nullity theorem
\cite{Rank_nullity}.

Similarly, if $N_{\mathrm{B}} \leq M_{\mathrm{A}}$, we have
$\mathrm{rank}(\pmb{U}_{\mathrm{B}}^{H}\pmb{H}_{\mathrm{B},\mathrm{A}})=
\mathrm{rank}(\pmb{U}_{\mathrm{B}})=d_{\mathrm{B}}$ with probability
one. Then, the BSs in cluster A need to avoid generating
$d_{\mathrm{B}}$ ICIs to the users in cluster B. To transmit
$d_{\mathrm{A}}$ desired data streams, the overall subspace
dimension of the transmit signals from the BSs in cluster A should
satisfy $M_{\mathrm{A}}\geq d_{\mathrm{A}}+d_{\mathrm{B}}$.

As a result, we obtain $\max\{M_{\mathrm{A}},N_{\mathrm{B}}\}\geq
d_{\mathrm{A}}+d_{\mathrm{B}}$, i.e.,
\eqref{Eq:Necessary_Condition_Compress}.
\end{proof}

The intuitive meaning of
\eqref{Eq:Necessary_Condition_Compress} is to ensure that a sort of \emph{irreducible ICI} can be eliminated. The concept of
\emph{irreducible ICI} is explained as follows. For the ICIs between a BS in cluster A and a
set of users in cluster B, if the dimension of these ICIs cannot be
reduced by designing the receive matrices of the  users, they are
\emph{irreducible ICIs} and can only be eliminated by
the BS. Similarly, for the ICIs between several BSs in cluster A and
a user in cluster B, if the dimension of these ICIs cannot be
reduced by designing the transmit matrices of the BSs, they are
\emph{irreducible ICIs} and can only be eliminated by
the user.

From the proof of \eqref{Eq:Necessary_Condition_Compress}, we know
that when
\begin{align}\label{Eq:Inreducable_ICI}
M_j \geq \sum_{i\in \mathcal{I}_{\mathrm{B}}}\sum_{k\in
\mathcal{K}_i}N_{i_k}, \quad ~\exists \mathcal{I}_{\mathrm{B}}\subseteq
\{1,\cdots,G\}\setminus\{j\}
\end{align}
$\mathrm{rank}(\pmb{U}_{\mathrm{B}}^{H}\pmb{H}_{\mathrm{B},\mathrm{A}})=\mathrm{rank}(\pmb{U}_{\mathrm{B}})=
d_{\mathrm{B}}$ always holds. It implies that when there exists a BS
in cluster A whose dimension of observation space is no less than
the overall dimension of observation space at all the users in
cluster B, there exist the \emph{irreducible ICIs} that can only be
removed by the BS.

On the other hand, when
\begin{align}\label{Eq:Inreducable_ICI1}
N_{i_k}\geq
\sum_{j\in\mathcal{I}_{\mathrm{A}}}M_j,~\exists
\mathcal{I}_{\mathrm{A}}\subseteq \{1,\cdots,G\}\setminus\{i\}
\end{align}
$\mathrm{rank}(\pmb{H}_{\mathrm{B},\mathrm{A}}\pmb{V}_{\mathrm{A}})=d_{\mathrm{A}}$
always holds, i.e., there exist the \emph{irreducible ICIs} that can
only be removed by the user.

Therefore, we call \eqref{Eq:Inreducable_ICI} and
\eqref{Eq:Inreducable_ICI1} as the \emph{existence conditions of the
irreducible ICIs}.

To understand the impact of the irreducible ICIs, we take the case
satisfying \eqref{Eq:Inreducable_ICI} as an example, where the ICIs between BS$_j$ and
the users in cluster B are irreducible. In other words, the receive matrices
of the users in cluster B are not able to compress these ICIs. To ensure the
linear IA to be feasible, in this case
\eqref{Eq:Necessary_Condition_Compress} requires $M_j\geq
d_j+\sum_{i\in \mathcal{I}_{\mathrm{B}}}\sum_{k\in
\mathcal{K}_i}d_{i_k}$, which is equivalent to
\begin{align}\label{Eq:Imcompressible_ICI}
  (M_j-d_j)d_j\geq d_j\sum_{i\in
\mathcal{I}_{\mathrm{B}}}\sum_{k\in
\mathcal{K}_i}d_{i_k}
\end{align}
It means that the number of variables in the effective transmit
matrix at BS$_j$ should exceed the number of ICIs. In other words,
BS$_j$ should be able to avoid the ICI. Therefore,
\eqref{Eq:Imcompressible_ICI} is \emph{the condition of eliminating
the irreducible ICI}.

By contrast, if \eqref{Eq:Inreducable_ICI} does not hold, these ICIs
are reducible at the BS passively, because anyway the BS only
``see'' these ICIs in a subspace with lower dimension of $M_j$ than
the overall observation space at all the users in cluster B with
dimension of $\sum_{i\in \mathcal{I}_{\mathrm{B}}}\sum_{k\in
\mathcal{K}_i}N_{i_k}$. In this case, the ICIs between BS$_j$ and
the users in cluster B can be removed by their implicit
``cooperation'' of sharing variables in their processing matrices.
To eliminate the ICI from BS$_j$ to the users in cluster B, in this
case the proper condition
\eqref{Eq:Necessary_Condition_Interference} requires
\begin{align}\label{Eq:Imcompressible_ICI_1}
  (M_j-d_j)d_j + \sum_{i\in
\mathcal{I}_{\mathrm{B}}}\sum_{k\in
\mathcal{K}_i}(N_{i_k}-d_{i_k})d_{i_k}\geq d_j\sum_{i\in
\mathcal{I}_{\mathrm{B}}}\sum_{k\in \mathcal{K}_i}d_{i_k}
\end{align}
It indicates that the overall number of variables in the transmit
and receive matrices at both BS$_j$ and the users in set
$\mathcal{K}_i$ should exceed the overall number of ICIs among them.
Therefore, \eqref{Eq:Imcompressible_ICI_1} is \emph{the condition of
eliminating the reducible ICI}.

In practice, there exist both the \emph{reducible ICI} and the
\emph{irreducible ICI} in a MIMO-IBC. Comparing
\eqref{Eq:Imcompressible_ICI} and \eqref{Eq:Imcompressible_ICI_1},
we can see that the proper conditions only ensure to eliminate all
the reducible ICIs but not all the irreducible ICIs. This explains
the reason why the IA is infeasible for the system whose
configuration satisfies \eqref{Eq:Necessary_Condition_Interference}
but does not satisfy \eqref{Eq:Necessary_Condition_Compress}.

In summary, \eqref{Eq:Necessary_Condition_Signal} ensures that there
are enough antennas to convey the \emph{desired signals} between
each BS and each user. \eqref{Eq:Necessary_Condition_Interference}
ensures to eliminate the \emph{reducible ICIs} by sharing the
spatial resources between the BSs and the users, whereas
\eqref{Eq:Necessary_Condition_Compress} ensures to eliminate the
\emph{irreducible ICIs} either at the BS side or at the user side.

\section{Necessary and Sufficient Conditions for a Special Class of MIMO-IBC}\label{Sec:Sufficient_Condition}
In this section, we present and prove the \emph{necessary and
sufficient conditions} of linear IA feasibility for a special class
of MIMO-IBC, where the numbers of transmit and receive antennas are
all divisible by the number of data streams of each user. Owing to
the second feature of MIMO-IBC, the sufficiency proof for MIMO-IBC
is more difficult than the special class of MIMO-IC in
\cite{Luo2012}.

We start by proving the necessity, which is simple. Then, we analyze
two important properties of the Jacobian matrix for general MIMO-IBC
and MIMO-IC. We proceed to present three lemmas to show the impact
of the two properties. Finally, we prove the sufficiency by
constructing an invertible Jacobian matrix for the considered
MIMO-IBC, i.e., find the minimal antenna configuration to ensure the
IA feasibility.

\begin{theorem}[Necessary and Sufficient
Conditions]\label{Theorem:NS_conditions} For a special class of MIMO-IBC with configuration
$\prod_{i=1}^{G}(M_i\times \prod_{k=1}^{K_i}(N_{i_k},d_{i_k}))$
where the channel matrices $\{\pmb{H}_{i_k,j}\}$ are generic, when $d_{i_k}=d$, and both $M_i$
and $N_{i_k}$ are divisible by $d$, the linear IA is feasible iff (if and only if) the following
conditions are satisfied,
\begin{subequations}
\begin{align}
\label{Eq:Sufficient_Condition_Signal}
&\min\{M_i-K_id,N_{i_k}-d\}\geq 0, ~\forall i,k\\
\label{Eq:Sufficient_Condition_Interference}
&\sum_{j:(i,j)\in \mathcal{I}}
(M_j-K_j d)K_j+ \sum_{i:(i,j)\in \mathcal{I}}\sum_{k\in
\mathcal{K}_i}(N_{i_k}-d) \nonumber\\
\geq&
\sum_{(i,j)\in\mathcal{I}}K_j|\mathcal{K}_i|d,~\forall
\mathcal{I} \subseteq \mathcal{J}
\end{align}
\end{subequations}
\end{theorem}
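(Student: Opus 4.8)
The plan is to prove this necessary-and-sufficient characterization in two parts. The necessity direction should follow almost immediately by specializing Theorem~\ref{Theorem:Necessary_Condition}. Setting $d_{i_k}=d$ for all $i,k$, the signal condition \eqref{Eq:Necessary_Condition_Signal} becomes $\min\{M_i-K_i d, N_{i_k}-d\}\ge 0$ (since $d_i=\sum_k d_{i_k}=K_i d$), which is exactly \eqref{Eq:Sufficient_Condition_Signal}, and the interference condition \eqref{Eq:Necessary_Condition_Interference} becomes \eqref{Eq:Sufficient_Condition_Interference} after factoring $d$ out of each term and using $\sum_{k\in\mathcal{K}_i}d_{i_k}=|\mathcal{K}_i|d$. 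I would also need to check that the compression condition \eqref{Eq:Necessary_Condition_Compress} is implied by \eqref{Eq:Sufficient_Condition_Interference} in this divisible regime, so that it need not appear separately in the theorem statement — this is the one point in the necessity direction requiring a short argument rather than pure substitution. Let me verify that claim carefully.

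Let me reconsider.**The plan is to** prove the two directions separately, with necessity essentially free and sufficiency carrying all the weight. For necessity, I would specialize Theorem~\ref{Theorem:Necessary_Condition} to the regime $d_{i_k}=d$. Then $d_i=\sum_{k}d_{i_k}=K_i d$, so \eqref{Eq:Necessary_Condition_Signal} reads $\min\{M_i-K_i d,\,N_{i_k}-d\}\ge 0$, which is exactly \eqref{Eq:Sufficient_Condition_Signal}; and factoring a common $d$ out of every term of \eqref{Eq:Necessary_Condition_Interference}, together with $\sum_{k\in\mathcal{K}_i}d_{i_k}=|\mathcal{K}_i|d$, turns it into \eqref{Eq:Sufficient_Condition_Interference}. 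Both are therefore necessary by pure substitution. The compression condition \eqref{Eq:Necessary_Condition_Compress} does not appear in the statement, so I would either verify it is subsumed by \eqref{Eq:Sufficient_Condition_Interference} in the divisible regime, or simply observe that once the sufficiency direction is established it becomes an automatic consequence (feasibility forces \eqref{Eq:Necessary_Condition_Compress}, hence \eqref{Eq:Sufficient_Condition_Signal}$\wedge$\eqref{Eq:Sufficient_Condition_Interference} implies it). I would flag this subtlety explicitly so the ``iff'' with only two listed conditions is logically airtight.

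For sufficiency I would invoke the criterion imported from \cite{Luo2012}: the IA map $f$ is surjective for generic channels provided its Jacobian matrix is invertible at some realization $\pmb{H}_0$, so it suffices to \emph{construct} one invertible Jacobian. I would build the Jacobian from the implicit system $\pmb{F}_{i_k,j}(\cdot)=\pmb{0}$ in \eqref{Eq:Constraint_ICI_Transmission_free}. Evaluating first-order terms at $\bar{\pmb{V}}=\pmb{0},\bar{\pmb{U}}=\pmb{0}$, the bilinear term $\bar{\pmb{U}}_{i_k}^{H}\bar{\pmb{H}}_{i_k,j}^{(4)}\bar{\pmb{V}}_j$ vanishes, leaving the linear coefficients $\bar{\pmb{H}}_{i_k,j}^{(2)}$ attached to the variables of $\bar{\pmb{V}}_j$ and $\bar{\pmb{H}}_{i_k,j}^{(3)}$ attached to those of $\bar{\pmb{U}}_{i_k}$. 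This gives a natural block partition $\pmb{J}=[\pmb{J}_V\mid\pmb{J}_U]$. The variable counts are $(M_j-K_j d)K_j d$ for each $\bar{\pmb{V}}_j$ and $(N_{i_k}-d)d$ for each $\bar{\pmb{U}}_{i_k}$, while each interfering pair contributes $d\cdot K_j d$ scalar equations; condition \eqref{Eq:Sufficient_Condition_Interference} is exactly the statement that, over every subset of equations, the available variables are at least as numerous, which I would use (via a Hall-type matching argument) to select a square subsystem.

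The core of the argument, and the main obstacle, is invertibility of the assembled $\pmb{J}$. Here the divisibility of $M_i$ and $N_{i_k}$ by $d$ is essential: it lets me partition every antenna dimension into $d$-sized groups, giving each block of $\pmb{J}$ a Kronecker-like $d\times d$ structure. The difficulty is \emph{Feature~2} — multiple ICI streams from one BS share the same physical channel — which forces identical channel entries to recur across many positions of $\pmb{J}$ (the \emph{repeated structure}). Unlike the MIMO-IC case of \cite{Luo2012}, this repetition is incompatible with choosing $\pmb{J}$ to be a permutation matrix, so genuine work is needed. My plan is to apply the three structural lemmas (on the \emph{sparse} and \emph{repeated} structure) to reduce invertibility to a combinatorial selection problem, and then to exploit the key observation that $\pmb{J}_V$ and $\pmb{J}_U$ carry \emph{different} repeated structures — the transmit side repeats across receive-antenna groups while the receive side repeats across transmit-antenna groups. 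Constructing $\pmb{J}_V$ and $\pmb{J}_U$ by separate rules that respect their own repetitions, I would show the combined $\pmb{J}$ is nonsingular, with the matching condition underlying the construction being precisely \eqref{Eq:Sufficient_Condition_Interference}. Verifying that the two separately-built sub-matrices do not collide into a rank deficiency when concatenated is where I expect the delicate casework to lie.
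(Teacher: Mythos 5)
Your route is the paper's own route: necessity by specializing Theorem~\ref{Theorem:Necessary_Condition} (your handling of the compression condition \eqref{Eq:Necessary_Condition_Compress} is also fine --- the paper disposes of it by an explicit divisibility argument showing it follows from \eqref{Eq:Sufficient_Condition_Signal} and \eqref{Eq:Sufficient_Condition_Interference}, while your shortcut of deducing it after sufficiency is proved is equally valid); and sufficiency by exhibiting one invertible Jacobian, reducing to $d=1$ via the Kronecker structure granted by divisibility (Lemma~\ref{Lemma:Repeated}), using Hall's theorem/perfect matching for the sparse structure (Lemma~\ref{Lemma:Sparse}), and building $\pmb{J}^{V}$ and $\pmb{J}^{U}$ by separate rules. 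One small omission: the evaluation point $\bar{\pmb{V}}=\pmb{0},\bar{\pmb{U}}=\pmb{0}$ must itself be an IA solution of the chosen realization, which forces you to pick the channel with $\bar{\pmb{H}}_{0~i_k,j}^{(1)}=\pmb{0}$ (and the paper also sets $\bar{\pmb{H}}_{0~i_k,j}^{(4)}=\pmb{0}$); you should state this rather than differentiate at the origin of a generic channel.

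The genuine problem is in the step you yourself call the core: your description of the two repeated structures is wrong, and the correct asymmetry is precisely what makes the construction succeed. After the reduction to $d=1$, the repetition in $\pmb{J}^{V}$ is that all $K_j$ transmit vectors of BS$_j$ see the \emph{same} coefficient block $\bar{\pmb{H}}_{0~:,j}^{(2)}$ (Eq.~\eqref{Eq:Jacobian_Matrix_Repetition_V1}) --- repetition across the users served by one BS, i.e., Feature~2 --- whereas $\pmb{J}^{U}$ has \emph{no} repeated entries at all (Eq.~\eqref{Eq:Jacobian_Matrix_Repetition_U1}): each receive vector's coefficients are distinct, independent channel vectors. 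This repetition-free property of $\pmb{J}^{U}$ is exactly what licenses Rule 1 (set $\pmb{J}^{U}$ from a permutation matrix obtained by the perfect matching), while the repeated $\pmb{J}^{V}$ is handled by Rule 2 (identical generic diagonal blocks with any $M_j-K_j$ rows independent), and the final rank count adds the two contributions row-disjointly. If, as you assert, the receive side also carried repetitions (``across transmit-antenna groups''), the matching-based assignment of receive variables would face the same obstruction that Lemma~\ref{Lemma:CanNot_Extended} establishes for the transmit side; that is the situation of multi-beam MIMO-IC, which this method cannot handle and for which the paper must defer to \cite{Tse2011}. So your plan executed under your stated structural premise stalls at the concatenation step; executed under the correct premise, it becomes the paper's proof.
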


\subsection{Proof of the necessity}
\begin{proof}
Comparing \emph{Theorem \ref{Theorem:Necessary_Condition}} and
\emph{Theorem \ref{Theorem:NS_conditions}}, we can see that
\eqref{Eq:Sufficient_Condition_Signal} and
\eqref{Eq:Sufficient_Condition_Interference} are the reduced forms of
\eqref{Eq:Necessary_Condition_Signal} and
\eqref{Eq:Necessary_Condition_Interference} for the special class of
MIMO-IBC in \emph{Theorem \ref{Theorem:NS_conditions}}. For this class of MIMO-IBC,
\eqref{Eq:Necessary_Condition_Compress} becomes
\begin{align}\label{Eq:Necessary_Condition_Compress_d}
&\max\Big\{\sum_{j\in\mathcal{I}_{\mathrm{A}}}M_j,\sum_{i\in
\mathcal{I}_{\mathrm{B}}}\sum_{k\in
\mathcal{K}_i}N_{i_k}\Big\}\nonumber\\
\geq&
\sum_{j\in\mathcal{I}_{\mathrm{A}}}K_jd+
\sum_{i\in\mathcal{I}_{\mathrm{B}}}|\mathcal{K}_i|d,~\forall \mathcal{I}_{\mathrm{A}}\cap\mathcal{I}_{\mathrm{B}}=\varnothing
\end{align}
In the sequel, we show that
\eqref{Eq:Necessary_Condition_Compress_d} can be derived from
\eqref{Eq:Sufficient_Condition_Signal} and
\eqref{Eq:Sufficient_Condition_Interference}.

Since $M_j$ is integral multiples of $d$, the value of $M_j$ can be
divided into two cases:
\begin{enumerate}
  \item $\sum_{j\in\mathcal{I}_{\mathrm{A}}}M_j\geq
(\sum_{j\in\mathcal{I}_{\mathrm{A}}}K_j+
\sum_{i\in\mathcal{I}_{\mathrm{B}}}|\mathcal{K}_i|)d$,
  \item $\sum_{j\in\mathcal{I}_{\mathrm{A}}}M_j\leq
(\sum_{j\in\mathcal{I}_{\mathrm{A}}}K_j+
\sum_{i\in\mathcal{I}_{\mathrm{B}}}|\mathcal{K}_i|-1)d$.
\end{enumerate}
In the first case, \eqref{Eq:Necessary_Condition_Compress_d} always
holds. In the second case, we have
\begin{align}\label{condi13}
\sum_{i\in\mathcal{I}_{\mathrm{B}}}|\mathcal{K}_i|d
-\sum_{j\in\mathcal{I}_{\mathrm{A}}}(M_j-K_jd)\geq d
\end{align}

Considering \eqref{Eq:Sufficient_Condition_Signal}, we know that
$M_j-K_jd \geq 0$. Thus the inequality
$\sum_{j\in\mathcal{I}_{\mathrm{A}}}(M_j-K_jd) \geq M_j-K_jd$ always
holds. Substituting this inequality into \eqref{condi13}, we have
\begin{align}\label{condi14}
\sum_{i\in\mathcal{I}_{\mathrm{B}}}|\mathcal{K}_i|d
-(M_j-K_jd)\geq d
\end{align}

Considering the definition of $\mathcal{I}_{\mathrm{A}}$ and $\mathcal{I}_{\mathrm{B}}$ in \eqref{Eq:Necessary_Condition_Compress_d}, \eqref{Eq:Sufficient_Condition_Interference} can be rewritten as $\sum_{j\in
\mathcal{I}_{\mathrm{A}}}
(M_j-K_jd)K_j+ \sum_{i\in\mathcal{I}_{\mathrm{B}}}\sum_{k\in
\mathcal{K}_i}(N_{i_k}-d)\nonumber\\
\geq
\sum_{j\in\mathcal{I}_{\mathrm{A}}}K_j \sum_{i\in\mathcal{I}_{\mathrm{B}}}|\mathcal{K}_i|d$,
which is equivalent to
\begin{align}\label{Eq:Necessary_Condition_Compress_d_1}
\sum_{i\in
\mathcal{I}_{\mathrm{B}}}\sum_{k\in
\mathcal{K}_i}N_{i_k} \geq &
\sum_{j\in\mathcal{I}_{\mathrm{A}}}K_j \Big(\sum_{i\in\mathcal{I}_{\mathrm{B}}}|\mathcal{K}_i|d-(M_j-K_jd)\Big)
\nonumber\\
&+\sum_{i\in\mathcal{I}_{\mathrm{B}}}|\mathcal{K}_i|d
\end{align}

Substituting \eqref{condi14} into
\eqref{Eq:Necessary_Condition_Compress_d_1}, we obtain
\eqref{Eq:Necessary_Condition_Compress_d}.
\end{proof}

For the considered class of MIMO-IBC, since
\eqref{Eq:Necessary_Condition_Compress_d} (i.e.,
\eqref{Eq:Necessary_Condition_Compress}) can be derived from
\eqref{Eq:Sufficient_Condition_Interference} (i.e.,
\eqref{Eq:Necessary_Condition_Interference}), the proper condition
ensures that when there exist some \emph{irreducible ICIs}, the BS
or the user has enough spatial resources to avoid (or cancel) them.

\subsection{Proof of the sufficiency}
From the analysis in \cite{Luo2012,Tse2011}, we know that the linear
IA will be feasible for general MIMO-IC and MIMO-IBC under
\emph{generic} channels if we can find a channel realization that
has an IA solution and whose Jacobian matrix is invertible.

Consider a channel matrix as follows
\begin{align*}
\bar{\pmb{H}}_{0i_k,j}=\left[\begin{array}{cc}
\pmb{0}& \bar{\pmb{H}}_{0~i_k,j}^{(2)}\\
\bar{\pmb{H}}_{0~i_k,j}^{(3)}& \pmb{0}
\end{array}\right]
\end{align*}
under which an IA solution can be easily found as
\begin{align*}
{\pmb{V}}_{0j}= \left[\begin{array}{c}
\pmb{I}_{d_j}\\
\pmb{0}_{(M_j-d_j)\times d_j}
\end{array}\right],~~{\pmb{U}}_{0i_k}= \left[\begin{array}{c}
\pmb{I}_{d_{i_k}}\\
\pmb{0}_{(N_{i_k}-d_{i_k})\times d_{i_k}}
\end{array}\right]
\end{align*}
Then, to prove the sufficiency, we only need to construct a Jacobian matrix that is invertible at
$\bar{\pmb{H}}_{0i_k,j}$.

Substituting $\bar{\pmb{H}}_{0~i_k,j}$ into \eqref{Eq:Constraint_ICI_Transmission_free}, we obtain
\begin{align}\label{Eq:Constraint_ICI_free_Linear}
\pmb{F}_{i_k,j}(\bar{\pmb{H}}_{0})=\bar{\pmb{H}}_{0~i_k,j}^{(2)}\bar{\pmb{V}}_{j}+ \bar{\pmb{U}}_{i_k}^{H}\bar{\pmb{H}}_{0~i_k,j}^{(3)}
=\pmb{0},~\forall i\neq j
\end{align}
By taking partial derivatives to
\eqref{Eq:Constraint_ICI_free_Linear}, we can obtain the Jacobian
matrix of $\bar{\pmb{H}}_{0}$.\footnote{Since
\eqref{Eq:Constraint_ICI_free_Linear} is a system represented by linear
polynomials, its first-order coefficients are its partial
derivatives. The condition that the first-order
coefficients of polynomials are linear independent
\cite{Tse2011,Win_IC} is the same as the condition that the Jacobian
matrix is invertible \cite{Luo2012}.} Before constructing an
invertible Jacobian matrix, we first analyze its properties.

\subsubsection{Jacobian matrix: properties and impacts}
To see the structure of the Jacobian matrices for general MIMO-IC
and MIMO-IBC, we rewrite the matrices in
\eqref{Eq:Constraint_ICI_free_Linear} as
$\pmb{F}_{i_k,j}(\bar{\pmb{H}}_{0})=[\pmb{F}_{i_k,j_1}(\bar{\pmb{H}}_{0}),
\cdots,\pmb{F}_{i_k,j_{K_j}}(\bar{\pmb{H}}_{0})]$,
$\bar{\pmb{V}}_j=[\bar{\pmb{V}}_{j_1},\cdots,\bar{\pmb{V}}_{j_{K_j}}]$,
and $\bar{\pmb{H}}_{0~i_k,j}^{(3)}=[\bar{\pmb{H}}_{0~i_k,j_1}^{(3)},
\cdots,\bar{\pmb{H}}_{0~i_k,j_{K_j}}^{(3)}]$, where
$\pmb{F}_{i_k,j_l}(\bar{\pmb{H}}_{0})\in \mathbb{C}^{d_{i_k}\times d_{j_l}}$,
$\bar{\pmb{V}}_{j_l}\in \mathbb{C}^{(M_j-d_j)\times d_{j_l}}$, and
$\bar{\pmb{H}}_{0~i_k,j_l}^{(3)}\in
\mathbb{C}^{(N_{i_k}-d_{i_k})\times d_{j_l}}$. Then
\eqref{Eq:Constraint_ICI_free_Linear} can be rewritten as $K_j$ groups of
subequations, where the $l$th subequation is
\begin{align}\label{Eq:Constraint_ICI_free_New5}
&\pmb{F}_{i_k,j_l}(\bar{\pmb{H}}_{0})
=\bar{\pmb{H}}_{0~i_k,j}^{(2)}\bar{\pmb{V}}_{j_l} +\bar{\pmb{U}}_{i_k}^{H}\bar{\pmb{H}}_{0~i_k,j_l}^{(3)}=\pmb{0},~\forall i\neq j
\end{align}

The Jacobian matrix of the polynomial map \eqref{Eq:Constraint_ICI_free_New5} is
\begin{align}\label{Eq:Jacobian_Matrix_Define}
  \pmb{J}\triangleq \left[\frac{\partial\mathrm{vec}\{\pmb{F}\}}{\partial\mathrm{vec}\{\bar{\pmb{V}};\bar{\pmb{U}}\}}\right] =[\pmb{J}^{V},\pmb{J}^{U}]
\end{align}
where $\pmb{J}^{V} =$ $ \partial\mathrm{vec}\{\pmb{F}\}/\partial\mathrm{vec}\{\bar{\pmb{V}}\}$, $\pmb{J}^{U} = $ $ \partial\mathrm{vec}\{\pmb{F}\}/\partial\mathrm{vec}\{\bar{\pmb{U}}\}$, $\mathrm{vec}\{\bar{\pmb{V}}\}=$ $
[\mathrm{vec}\{\bar{\pmb{V}}_{1_1}\}^{T},\cdots,\mathrm{vec}\{\bar{\pmb{V}}_{G_{K_G}}\}^{T}]^{T}$,
and $\mathrm{vec}\{\bar{\pmb{U}}\}= $ $ [\mathrm{vec}\{\bar{\pmb{U}}^{H}_{1_1}\}^{T}, \cdots,\mathrm{vec}\{\bar{\pmb{U}}^{H}_{G_{K_G}}\}^{T}]^{T}$.

Substituting \eqref{Eq:Constraint_ICI_free_New5} into
\eqref{Eq:Jacobian_Matrix_Define}, the elements of $\pmb{J}(\bar{\pmb{H}}_{0})$ are
\begin{subequations}
\begin{align}\label{Eq:Jacobian_Matrix_IBC_V}
\frac{\partial\mathrm{vec}\{\pmb{F}_{i_k,j_l}(\bar{\pmb{H}}_{0})\}}
  {\partial\mathrm{vec}\{\bar{\pmb{V}}_{m_n}\}}&=
\left\{
\begin{array}{ll}
\bar{\pmb{H}}_{0~i_k,j}^{(2)}\otimes\pmb{I}_{d_{j_l}},&~\forall m_n=j_l\\
\pmb{0}_{d_{i_k}d_{j_l} \times (M_m-d_m)d_{m_n}},&~\forall m_n\neq j_l
\end{array}
  \right.
  \\
\label{Eq:Jacobian_Matrix_IBC_U}
  \frac{\partial\mathrm{vec}\{\pmb{F}_{i_k,j_l}(\bar{\pmb{H}}_{0})\}}
  {\partial\mathrm{vec}\{\bar{\pmb{U}}^{H}_{m_n}\}}&=
\left\{
\begin{array}{ll}
\pmb{I}_{d_{i_k}} \otimes (\bar{\pmb{H}}_{0~i_k,j_l}^{(3)})^{T},  &~\forall m_n=i_k\\
\pmb{0}_{d_{i_k}d_{j_l} \times (N_{m_n}-d_{m_n})d_{m_n}},&~\forall  m_n\neq i_k
\end{array}
  \right.
\end{align}
\end{subequations}
where the nonzero elements in \eqref{Eq:Jacobian_Matrix_IBC_V} satisfy
\begin{align}\label{Eq:Jacobian_Matrix_IBC_VV}
\frac{\partial\mathrm{vec}\{\pmb{F}_{i_k,j_1}(\bar{\pmb{H}}_{0})\}}
  {\partial\mathrm{vec}\{\bar{\pmb{V}}_{j_1}\}}=\cdots= \frac{\partial\mathrm{vec}\{\pmb{F}_{i_k,j_{K_j}}(\bar{\pmb{H}}_{0})\}}
  {\partial\mathrm{vec}\{\bar{\pmb{V}}_{j_{K_j}}\}}
\end{align}

We can see that the Jacobian
matrix of general MIMO-IC and MIMO-IBC has two properties in structure:
\begin{itemize}
  \item \emph{Sparse structure}: There are many zero blocks in particular positions as shown in \eqref{Eq:Jacobian_Matrix_IBC_V} and
\eqref{Eq:Jacobian_Matrix_IBC_U} \cite{Luo2012,Win_IC,Gonzalez_IC}.
  \item \emph{Repeated structure}: There are repeated nonzero elements in particular positions when $d_{i_k}>1$ (i.e., multi-beam MIMO-IC \cite{Win_IC,Gonzalez_IC}
  or MIMO-IBC) or $K_i>1$ (i.e., MIMO-IBC), $\exists i,k$, as shown in \eqref{Eq:Jacobian_Matrix_IBC_VV}.
\end{itemize}

Such a \emph{repeated structure} comes from the second feature of MIMO-IBC.
Comparing \eqref{Eq:Jacobian_Matrix_IBC_V}, \eqref{Eq:Jacobian_Matrix_IBC_U} with \eqref{Eq:Jacobian_Matrix_IBC_VV}, we can see that the repeated elements caused by multi-beam will appear in both $\pmb{J}^{V}$ and $\pmb{J}^{U}$, while the repeated elements caused by multi-user only appear in $\pmb{J}^{V}$ but not $\pmb{J}^{U}$. Therefore, the repeated structure of the Jacobian matrix for MIMO-IBC is quite different with that for multi-beam MIMO-IC.

%\subsubsection{Basic rules to construct an invertible Jacobian matrix}
In the sequel, we introduce three lemmas to show the impact of the two properties on constructing an invertible Jacobian matrix for MIMO-IBC.\footnote{Because when $K_i=1$ MIMO-IBC reduces to MIMO-IC, the conclusions for MIMO-IBC are also valid for MIMO-IC.}

It is worth to note that an invertible Jacobian matrix in the case of $L_v>L_e$ can be
obtained from that in the case of $L_v=L_e$, since one
can always remove some redundant variables to ensure $L_v=L_e$, where $L_v$ and $L_e$
denote the total numbers of scalar variables and equations in \eqref{Eq:Constraint_ICI_free_New5}. Therefore, we only need to investigate the case of $L_v=L_e$.

\begin{lemma}\label{Lemma:Sparse}
For a proper MIMO-IBC with $L_v=L_e$, there always exists a
permutation matrix that has the same \emph{sparse structure} as the
Jacobian matrix, and the permutation matrix can be obtained from a
perfect matching in a bipartite representing
\eqref{Eq:Constraint_ICI_free_New5}.\footnote{The relationship
between the equations and variables in
\eqref{Eq:Constraint_ICI_free_New5} can be represented by a
bipartite graph, where a set of non-adjacent edges is called a
\emph{matching}. If a matching matches all vertices of the graph, it
is called a \emph{perfect matching} \cite{Grahp_Theory}. The perfect
matching is first used to construct invertible Jacobian matrices in
\cite{Jocobian_Graph} and first used to solve the IA feasibility for
MIMO-IC in \cite{Luo2012}. }
\end{lemma}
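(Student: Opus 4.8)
The plan is to establish the existence of the claimed permutation matrix via a combinatorial argument using Hall's marriage theorem, applied to the bipartite graph that encodes the sparsity pattern of the Jacobian matrix. First I would construct the bipartite graph $\mathcal{G}=(\mathcal{E}\cup\mathcal{W},\mathcal{M})$ explicitly: one vertex class $\mathcal{E}$ consists of the $L_e$ scalar equations arising from the entries of $\pmb{F}_{i_k,j_l}(\bar{\pmb{H}}_0)$ in \eqref{Eq:Constraint_ICI_free_New5}, and the other class $\mathcal{W}$ consists of the $L_v$ scalar variables in the entries of $\bar{\pmb{V}}$ and $\bar{\pmb{U}}$. An edge joins an equation vertex to a variable vertex precisely when that variable appears with nonzero coefficient in that equation, i.e.\ when the corresponding block in \eqref{Eq:Jacobian_Matrix_IBC_V} or \eqref{Eq:Jacobian_Matrix_IBC_U} is a nonzero block. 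By the hypothesis $L_v=L_e$ the two vertex classes have equal cardinality, so a matching saturating one side is automatically a perfect matching. A perfect matching then selects exactly one nonzero entry in each row and each column of $\pmb{J}$; placing a $1$ at each matched position and $0$ elsewhere yields a permutation matrix whose support is contained in the support of $\pmb{J}$, which is precisely the claimed object.

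The core of the proof is therefore to verify Hall's condition: for every subset $\mathcal{S}\subseteq\mathcal{E}$ of equation vertices, the neighborhood $\mathcal{N}(\mathcal{S})\subseteq\mathcal{W}$ satisfies $|\mathcal{N}(\mathcal{S})|\geq|\mathcal{S}|$. The key step is to translate Hall's condition into a counting statement about antennas and data streams, and then to recognize that this counting statement is exactly the \emph{proper condition} of Theorem~\ref{Theorem:Necessary_Condition}. Concretely, I would take any set $\mathcal{S}$ of equations, identify which cell-pairs $(i,j)$ and which user subsets $\mathcal{K}_i$ give rise to these equations, and observe from the block structure in \eqref{Eq:Jacobian_Matrix_IBC_V}--\eqref{Eq:Jacobian_Matrix_IBC_U} that the variables adjacent to $\mathcal{S}$ are exactly the entries of the effective matrices $\bar{\pmb{V}}_j$ and $\bar{\pmb{U}}_{i_k}$ associated with those cell-pairs. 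Since each $\pmb{F}_{i_k,j}$ contributes $d_{i_k}d_j$ equations and the incident effective matrices contribute $(M_j-d_j)d_j$ and $(N_{i_k}-d_{i_k})d_{i_k}$ variables, the inequality $|\mathcal{N}(\mathcal{S})|\geq|\mathcal{S}|$ becomes, after summing over the relevant index sets, precisely \eqref{Eq:Necessary_Condition_Interference}, which holds because the system is proper. It suffices to verify the inequality on the worst-case subsets $\mathcal{S}$, namely those that are ``closed'' in the sense of consisting of complete equation blocks indexed by some $\mathcal{I}\subseteq\mathcal{J}$ and some $\mathcal{K}_i$; a partial block can only enlarge the neighborhood relative to its size, so the extremal constraints come from full blocks.

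The main obstacle I expect is the bookkeeping needed to show that the worst case of Hall's condition coincides exactly with the subsets $\mathcal{I}$ and $\mathcal{K}_i$ appearing in the proper condition, rather than some finer partition of the equations that might impose a stronger requirement. The delicate point is the \emph{sparse structure} itself: because many blocks in \eqref{Eq:Jacobian_Matrix_IBC_V}--\eqref{Eq:Jacobian_Matrix_IBC_U} vanish, the neighborhood of a set of equations does not see all variables, only those indexed by matching cell-pairs, and one must argue that no ``unfortunate'' subset $\mathcal{S}$ concentrates equations against too few variables. I would handle this by showing that it is never advantageous to split a block: if $\mathcal{S}$ contains only part of the equations from some $\pmb{F}_{i_k,j}$, adding the rest of that block increases $|\mathcal{S}|$ and $|\mathcal{N}(\mathcal{S})|$ by the same amount or increases the neighborhood by more, so the binding case is attained on unions of complete blocks, reducing Hall's condition to the already-established \eqref{Eq:Necessary_Condition_Interference}. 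Once Hall's condition is verified, the existence of the perfect matching, and hence of the permutation matrix with the required sparse structure, follows immediately, and I would remark that the \emph{repeated structure} plays no role here since the argument depends only on which blocks are zero, not on the values of the nonzero blocks.
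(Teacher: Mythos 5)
Your overall route coincides with the paper's own proof in Appendix~A: build the bipartite graph between scalar equations and scalar variables of \eqref{Eq:Constraint_ICI_free_New5}, invoke Hall's theorem (using $L_v=L_e$ so that a matching saturating the equation side is perfect), identify Hall's condition with the proper condition \eqref{Eq:Necessary_Condition_Interference}, and take the $0$--$1$ adjacency matrix of the perfect matching as the permutation matrix, whose support is contained in the support of $\pmb{J}$ and which therefore shares its \emph{sparse structure}. Up to notation, this is exactly what the paper does.

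However, the step you yourself flag as delicate --- reducing Hall's condition over \emph{all} subsets $\mathcal{S}$ of scalar equations to the structured subsets indexed by $\mathcal{I}\subseteq\mathcal{J}$ and $\mathcal{K}_i$ --- is argued with the inequality pointing the wrong way. You claim that completing a partial block increases $|N(\mathcal{S})|$ by at least as much as it increases $|\mathcal{S}|$; if that were so, completion would weakly \emph{decrease} the deficiency $|\mathcal{S}|-|N(\mathcal{S})|$, so the maximum deficiency would sit at \emph{partial} blocks, and verifying the inequality on complete blocks would prove nothing. For the reduction you need the opposite monotonicity, and in fact neither direction holds universally: if other equations already in $\mathcal{S}$ contribute the columns of $\bar{\pmb{V}}_j$, completing a block of the form $\pmb{F}_{i_k,j}$ adds many equations but few or no new variables, while in the opposite situation it adds far more variables than equations. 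A correct reduction takes two steps. First, replace $\mathcal{S}$ by its closure, i.e.\ all equations whose neighborhoods lie inside $N(\mathcal{S})$; this leaves $N(\mathcal{S})$ unchanged and can only increase the deficiency, and closed sets are exactly those determined by which columns of the matrices $\bar{\pmb{V}}_{j}$ and $\bar{\pmb{U}}_{i_k}$ they touch. Second, for a closed set the deficiency equals $\sum_{j}\sum_{i\neq j}\sum_{k}\alpha_j\beta_{i_k}-\sum_j\alpha_j(M_j-d_j)-\sum_{i,k}\beta_{i_k}(N_{i_k}-d_{i_k})$, where $\alpha_j\in\{0,\dots,d_j\}$ and $\beta_{i_k}\in\{0,\dots,d_{i_k}\}$ count the touched columns; this is linear in each $\alpha_j$ and each $\beta_{i_k}$, so its maximum over the box is attained at a vertex $\alpha_j\in\{0,d_j\}$, $\beta_{i_k}\in\{0,d_{i_k}\}$, which is precisely a structured subset of \eqref{Eq:Necessary_Condition_Interference}. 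With that repair your argument closes. To be fair, the paper itself glosses over this point --- its proof simply writes $|\mathcal{S}|$ and $|N(\mathcal{S})|$ in the structured form --- so you correctly isolated a real subtlety; it is only your resolution of it that fails as stated.
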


\begin{proof}
See Appendix \ref{App_Lemma3}.
\end{proof}

Considering that in general the Jacobian matrix for MIMO-IBC has
both the \emph{sparse structure} and the \emph{repeated structure},
a permutation matrix that has the same \emph{sparse structure} as
the Jacobian matrix is not necessarily a Jacobian matrix. One
exception is the single beam MIMO-IC, where $K_i=1$, $d_{i}=1$. This
is because its Jacobian matrix does not have the \emph{repeated
structure}, which can be set as a permutation matrix from
\emph{Lemma \ref{Lemma:Sparse}}.

\begin{lemma}\label{Lemma:Repeated}
For the special class of MIMO-IBC in \emph{Theorem
\ref{Theorem:NS_conditions}}, an invertible Jacobian matrix for a
multi-beam MIMO-IBC can be constructed from a single beam MIMO-IBC.
Moreover, if the Jacobian matrix for this class of systems with
$d=1$ is a permutation matrix, the Jacobian matrix for the systems
with $d>1$ will also be a permutation matrix.
\end{lemma}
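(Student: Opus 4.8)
The plan is to exploit the Kronecker-product form of the nonzero Jacobian blocks in \eqref{Eq:Jacobian_Matrix_IBC_V}--\eqref{Eq:Jacobian_Matrix_IBC_U} to show that, for the divisible configuration of \emph{Theorem \ref{Theorem:NS_conditions}}, a suitably chosen channel realization makes the multi-beam ($d>1$) linearized system \eqref{Eq:Constraint_ICI_free_New5} split into $d^{2}$ decoupled and \emph{identical} copies of the corresponding single-beam ($d=1$) system. First I would fix the derived single-beam MIMO-IBC with configuration $\prod_i(\tfrac{M_i}{d}\times\prod_k(\tfrac{N_{i_k}}{d},1))$, denote its effective channels by $\bar{\pmb h}^{(2)}_{i_k,j}\in\mathbb{C}^{1\times(M_j/d-K_j)}$ and $\bar{\pmb h}^{(3)}_{i_k,j_l}\in\mathbb{C}^{(N_{i_k}/d-1)\times 1}$, and call $\pmb J_1$ its Jacobian at the $d=1$ analogue of $\bar{\pmb H}_0$. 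Since I only need one realization at which the Jacobian is invertible, I may specify the $d$-beam channel directly: keeping the block form $\bar{\pmb H}_{0i_k,j}=[\,\pmb 0,\bar{\pmb H}^{(2)}_0;\bar{\pmb H}^{(3)}_0,\pmb 0\,]$, I set $\bar{\pmb H}^{(2)}_{0~i_k,j}=\bar{\pmb h}^{(2)}_{i_k,j}\otimes\pmb I_d$ and $\bar{\pmb H}^{(3)}_{0~i_k,j_l}=\pmb I_d\otimes\bar{\pmb h}^{(3)}_{i_k,j_l}$. The divisibility hypotheses $d\mid M_i$ and $d\mid N_{i_k}$ are exactly what make these block sizes compatible, and the two factors are placed in \emph{opposite} Kronecker order, which is the concrete manifestation here of treating the $\pmb J^V$ and $\pmb J^U$ sub-matrices separately according to their different \emph{repeated structures}.

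The key step is a direct index computation on \eqref{Eq:Constraint_ICI_free_New5}. Writing the $(a,b)$-entry ($a,b\in\{1,\dots,d\}$) of each block $\pmb F_{i_k,j_l}$, the channel choice above yields
\[
[\pmb F_{i_k,j_l}]_{a,b}=\sum_{m}[\bar{\pmb h}^{(2)}_{i_k,j}]_m[\bar{\pmb V}_{j_l}]_{(m,a),b}+\sum_{n}[\bar{\pmb h}^{(3)}_{i_k,j_l}]_n[\bar{\pmb U}_{i_k}^{H}]_{a,(b,n)} .
\]
Hence the equation indexed by $(a,b)$ involves only the entries $[\bar{\pmb V}_{j_l}]_{(m,a),b}$ over $m$ and only the entries $[\bar{\pmb U}_{i_k}^{H}]_{a,(b,n)}$ over $n$. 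As $(a,b)$ ranges over $\{1,\dots,d\}^{2}$ these variable sets are pairwise disjoint and together exhaust all entries of $\bar{\pmb V}$ and $\bar{\pmb U}$, while the coefficients $\bar{\pmb h}^{(2)}_{i_k,j},\bar{\pmb h}^{(3)}_{i_k,j_l}$ do not depend on $(a,b)$. Collecting $\bar{\pmb v}^{(a,b)}_{j_l}:=([\bar{\pmb V}_{j_l}]_{(m,a),b})_m$ and $\bar{\pmb u}^{(a,b)\,H}_{i_k}:=([\bar{\pmb U}_{i_k}^{H}]_{a,(b,n)})_n$, the copy $(a,b)$ reads $\bar{\pmb h}^{(2)}_{i_k,j}\bar{\pmb v}^{(a,b)}_{j_l}+\bar{\pmb u}^{(a,b)\,H}_{i_k}\bar{\pmb h}^{(3)}_{i_k,j_l}=\pmb 0$, i.e.\ precisely the single-beam system whose coefficient matrix is $\pmb J_1$. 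Therefore, after a row permutation grouping equations by $(a,b)$ and a matching column permutation grouping variables by their copy, the $d$-beam Jacobian becomes block diagonal with $d^{2}$ identical diagonal blocks, i.e.\ $\pmb J=\pmb P_{\mathrm r}(\pmb I_{d^{2}}\otimes\pmb J_1)\pmb P_{\mathrm c}$ for permutation matrices $\pmb P_{\mathrm r},\pmb P_{\mathrm c}$.

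The conclusion then follows from elementary properties of this factorization. Since $\det\pmb J=\pm(\det\pmb J_1)^{d^{2}}$, the $d$-beam Jacobian is invertible exactly when $\pmb J_1$ is, so an invertible multi-beam Jacobian is obtained from an invertible single-beam one; and $\pmb I_{d^{2}}\otimes\pmb J_1$ is a permutation matrix iff $\pmb J_1$ is, a property preserved under conjugation by $\pmb P_{\mathrm r},\pmb P_{\mathrm c}$, which gives the second assertion. I expect the main obstacle to be the bookkeeping in the index computation above: one must check that the opposite Kronecker placements of $\bar{\pmb H}^{(2)}_0$ and $\bar{\pmb H}^{(3)}_0$ align on the \emph{same} copy index $(a,b)$, so that the $\pmb J^V$ and $\pmb J^U$ contributions decouple consistently rather than cross-couple. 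This is the precise point at which the multi-beam repeated structure of \eqref{Eq:Jacobian_Matrix_IBC_VV} present in $\pmb J^V$, and absent from $\pmb J^U$, must be reconciled; verifying the alignment is what turns the construction into the clean block-diagonal form $\pmb I_{d^{2}}\otimes\pmb J_1$.
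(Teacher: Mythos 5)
Your proof is correct and takes essentially the same route as the paper's Appendix B: choosing each $d\times d$ block of the test channel as a scalar multiple of $\pmb{I}_d$ (your $\bar{\pmb{h}}^{(2)}_{i_k,j}\otimes\pmb{I}_d$ is exactly the paper's $\bar{\pmb{H}}^{(2),t}_{0~i_k,j}=\bar{h}^{(2),t}_{0~i_k,j}\pmb{I}_d$, and your $\pmb{I}_d\otimes\bar{\pmb{h}}^{(3)}_{i_k,j_l}$ differs from their $\bar{\pmb{H}}^{(3),s}_{0~i_k,j_l}=\bar{h}^{(3),s}_{0~i_k,j_l}\pmb{I}_d$ only by a row permutation, i.e., a relabeling of the receive variables). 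Your factorization $\pmb{J}=\pmb{P}_{\mathrm r}\left(\pmb{I}_{d^{2}}\otimes\pmb{J}_1\right)\pmb{P}_{\mathrm c}$ coincides, up to perfect-shuffle permutations, with the paper's $\pmb{J}(\bar{\pmb{H}}_{0})=\tilde{\pmb{J}}(\bar{\pmb{H}}_{0})\otimes \pmb{I}_{d^2}$, and both deliver the invertibility and permutation-matrix conclusions in the same way.
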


\begin{proof}
See Appendix \ref{App_Lemma1}.
\end{proof}

The Jacobian matrix of multi-beam MIMO-IC has the \emph{repeated
structure}, and the Jacobian matrix of single beam MIMO-IC can be
set as a permutation matrix. \emph{Lemma \ref{Lemma:Repeated}}
implies that for a class of multi-beam MIMO-IC where each user
expects $d$ data streams and both the transmit and receive antennas
are divisible by $d$, there exists a permutation matrix that can
satisfy the two properties of the Jacobian matrix simultaneously.
Consequently, the Jacobian matrix of this class of MIMO-IC can be
set as a permutation matrix as shown in \cite{Luo2012}.

\begin{lemma}\label{Lemma:CanNot_Extended} For a class of MIMO-IBC with $L_v
= L_e$, $\sum_{j=1,j\neq i}^{G}d_j\geq N_{i_k}-d_{i_k}>0$ and
$N_{i_k}-d_{i_k}\notin \phi_{i}$, $\exists i,k$, where $\phi_{i}=\{\sum_{j\in
\psi_i}d_j |\psi_i \subseteq \{1,\cdots,G\}\setminus\{i\}\}$, when
\eqref{Eq:Sufficient_Condition_Signal} and
\eqref{Eq:Sufficient_Condition_Interference} are satisfied, there
does not exist any Jacobian matrix that is a permutation matrix.
\end{lemma}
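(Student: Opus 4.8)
The plan is to argue by contradiction: assume that, under the stated hypotheses, the Jacobian of the map \eqref{Eq:Constraint_ICI_free_New5} can be set as a permutation matrix, and then show that $N_{i_k}-d_{i_k}$ is forced to lie in $\phi_i$, contradicting the hypothesis. I will run the argument most transparently for the single-beam system ($d=1$), where each block $\pmb{F}_{i_k,j_l}$ contributes a single scalar equation, $\bar{\pmb{V}}_{j_l}$ is a column of length $M_j-d_j$, and $\bar{\pmb{U}}_{i_k}$ a column of length $N_{i_k}-d_{i_k}$; the general case $d>1$ adds only Kronecker-$\pmb{I}_d$ bookkeeping to the same count (the block structure exposed in the proof of \emph{Lemma~\ref{Lemma:Repeated}} replicates the $d=1$ pattern with factors $\pmb{I}_d$, so the correspondence is structure preserving and the subset-sum condition is unchanged). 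A permutation matrix assigns to each column (variable) exactly one row (equation) carrying its unique nonzero entry. Since by \eqref{Eq:Jacobian_Matrix_IBC_U} the columns of $\bar{\pmb{U}}_{i_k}$ have nonzeros only in the rows $\{(i_k,j_l): j\neq i,\,1\le l\le K_j\}$, all $N_{i_k}-d_{i_k}$ of these columns must be matched to distinct equations inside this set.

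The crux is a dichotomy forced by the multi-user \emph{repeated structure} \eqref{Eq:Jacobian_Matrix_IBC_VV}: for each fixed user $i_k$ and each interfering cell $j$, either all $K_j$ equations $(i_k,j_1),\dots,(i_k,j_{K_j})$ are matched to $U$-variables of $i_k$, or all are matched to $V$-variables. To establish this, observe that the $V$-block of every such row equals the common vector $\bar{\pmb{H}}_{0~i_k,j}^{(2)}$. If $(i_k,j_1)$ is matched to a $U$-variable, the whole $V$-block of that row must vanish, forcing $\bar{\pmb{H}}_{0~i_k,j}^{(2)}=\pmb{0}$; being shared, this kills the $V$-block of every row $(i_k,j_l)$, so none of them can match a $V$-variable and all must match $U$-variables. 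Conversely, if $(i_k,j_1)$ is matched to the $p$-th variable of $\bar{\pmb{V}}_{j_1}$, then the $p$-th component of $\bar{\pmb{H}}_{0~i_k,j}^{(2)}$ equals $1$ while all its other components equal $0$; hence each row $(i_k,j_l)$ carries its unique nonzero at position $p$ of $\bar{\pmb{V}}_{j_l}$, forcing all $K_j$ equations to match $V$-variables and their $U$-blocks to vanish. The asymmetry exploited here is essential: by \eqref{Eq:Jacobian_Matrix_IBC_U} the $U$-blocks $(\bar{\pmb{H}}_{0~i_k,j_l}^{(3)})^{T}$ are distinct across $l$, so no symmetric constraint arises from the $U$-side, consistent with the observation that the multi-user repetition appears only in $\pmb{J}^{V}$.

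With the dichotomy in hand, let $\psi_i\subseteq\{1,\dots,G\}\setminus\{i\}$ collect precisely those interfering cells $j$ whose equation-group for $i_k$ is matched to $U$-variables. Then the number of equations matched to the $U$-variables of $i_k$ equals $\sum_{j\in\psi_i}K_j=\sum_{j\in\psi_i}d_j$. On the other hand, this number must equal $N_{i_k}-d_{i_k}$, the number of $U$-columns of $i_k$ that have to be matched. Therefore $N_{i_k}-d_{i_k}=\sum_{j\in\psi_i}d_j\in\phi_i$, contradicting $N_{i_k}-d_{i_k}\notin\phi_i$; hence no Jacobian can be a permutation matrix. The two auxiliary hypotheses play supporting roles: $N_{i_k}-d_{i_k}>0$ guarantees $\psi_i\neq\varnothing$ so the subset-sum is nontrivial, while $\sum_{j\neq i}d_j\ge N_{i_k}-d_{i_k}$ ensures there are enough equations involving $i_k$ for all its $U$-columns to be matchable, so that the obstruction is genuinely the subset-sum condition rather than a trivial deficiency of edges in the sparse pattern of \emph{Lemma~\ref{Lemma:Sparse}}.

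The step I expect to be the main obstacle is establishing the dichotomy rigorously. Specifically, I must verify that it is the requirement of being an exact $0/1$ permutation matrix—and not merely the existence of a perfect matching in the sparse pattern guaranteed by \emph{Lemma~\ref{Lemma:Sparse}}—that forces the all-or-nothing behaviour through the shared vector $\bar{\pmb{H}}_{0~i_k,j}^{(2)}$, since a single choice of channel realization must simultaneously realize every selected entry as $1$ and every unselected entry as $0$. I also need to confirm that the single-beam reduction of \emph{Lemma~\ref{Lemma:Repeated}} faithfully transports this Kronecker-repeated block structure to general $d$, so that the clean subset-sum identity $N_{i_k}-d_{i_k}=\sum_{j\in\psi_i}d_j$ survives the additional $\pmb{I}_d$ factors.
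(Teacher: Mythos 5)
Your core mechanism is the same as the paper's: its proof (Appendix~\ref{App_Lemma5}) also rests on the observation that the repeated structure \eqref{Eq:Jacobian_Matrix_IBC_VV} forces the derivatives $\partial Y_{1}/\partial X_{1(t)}=\cdots=\partial Y_{d_j}/\partial X_{d_j(t)}$ of the whole group of ICIs from BS$_j$ to one data stream of MS$_{i_k}$ to coincide, so a permutation matrix cannot assign some equations of the group to transmit variables and the rest to receive variables. The difference is only in packaging: the paper argues in the direct direction (the hypotheses force some group to be split between BS$_j$ and MS$_{i_k}$, and a split contradicts the repetition), whereas you argue the contrapositive (your all-or-nothing dichotomy forces $N_{i_k}-d_{i_k}=\sum_{j\in\psi_i}d_j\in\phi_i$, contradicting the hypothesis). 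Your counting step is, if anything, tighter than the paper's corresponding step, which asserts the existence of a split group rather informally; for the single-beam case your write-up is complete and rigorous.

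The genuine gap is your treatment of $d>1$. You dismiss it as ``Kronecker-$\pmb{I}_d$ bookkeeping'' via the reduction of \emph{Lemma \ref{Lemma:Repeated}}, but that reduction requires $M_i$ and $N_{i_k}$ to be divisible by $d$ (its proof indexes blocks $t=1,\cdots,M_j/d-K_j$ and $s=1,\cdots,N_{i_k}/d-1$). Under the hypotheses of \emph{Lemma \ref{Lemma:CanNot_Extended}} this divisibility typically fails; indeed the lemma's interesting instances are exactly the non-divisible ones --- for a symmetric multi-beam MIMO-IC ($K=1$, $d>1$) the hypothesis $N_{i_k}-d_{i_k}\notin\phi_i$ means precisely that $N$ is \emph{not} divisible by $d$ --- so the factorization $\pmb{J}(\bar{\pmb{H}}_0)=\tilde{\pmb{J}}(\bar{\pmb{H}}_0)\otimes\pmb{I}_{d^2}$ is unavailable and your reduction has nothing to stand on. The repair is to abandon the reduction and run your dichotomy at the scalar level, as the paper implicitly does: fix one receive stream (one row $r$) of MS$_{i_k}$; the $d_j$ scalar equations describing the ICI from all transmit streams of BS$_j$ to that stream share the single row vector formed by the $r$th row of $\bar{\pmb{H}}_{0~i_k,j}^{(2)}$ as their transmit-side derivative, and their receive-side derivatives involve only the $r$th column of $\bar{\pmb{U}}_{i_k}$, which has $N_{i_k}-d_{i_k}$ entries; your all-or-nothing argument and the count $N_{i_k}-d_{i_k}=\sum_{j\in\psi_i}d_j$ then go through verbatim for any $d$.
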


\begin{proof}
See Appendix \ref{App_Lemma5}.
\end{proof}

For the class of MIMO-IBC in \emph{Lemma
\ref{Lemma:CanNot_Extended}}, whose Jacobian matrix has the
\emph{repeated structure}, one cannot find a permutation matrix that
satisfies both the two properties of the Jacobian matrix. The
sufficiency of IA feasibility for this class of MIMO-IBC has not
been proved only with one exception in \cite{Tse2011} as shown
later, where constructing an invertible Jacobian matrix is difficult
due to the confliction of its two properties.

A sub-class of MIMO-IBC considered in the lemma is also considered
by \emph{Theorem \ref{Theorem:NS_conditions}}. We show this with
several examples. In \emph{Lemma \ref{Lemma:CanNot_Extended}},
$\phi_i$ is a set of data stream number in one or multiple cells,
whose desired signals will generate ICI to the users in cell $i$.
When $G=3$, $\phi_{1}=\{d_2,d_3,(d_2+d_3)\}$,
$\phi_{2}=\{d_1,d_3,(d_1+d_3)\}$ and
$\phi_{3}=\{d_1,d_2,(d_1+d_2)\}$. For a symmetric MIMO-IBC with
configuration $(M\times (N,d)^{K})^{G}$, we have $d_1=\cdots=d_G=Kd$
and $\phi_{1}=\cdots=\phi_{G}=\{Kd,2Kd,\cdots,(G-1)Kd\}$, then the
condition that $N_{i_k}-d_{i_k}\notin \phi_i$ reduces to the
condition that $N-d$ is not divisible by $Kd$. When $K=1,d>1$, the
system becomes a symmetric multi-beam MIMO-IC and the condition
reduces to that $N$ is not divisible by $d$. In this case, the
sufficiency was only proved for a multi-beam MIMO-IC with $M=N$ in
\cite{Tse2011}. When $K>1,d=1$, the system becomes a symmetric
single beam MIMO-IBC and the condition reduces to that $N-1$ is not
divisible by $K$, which is one case of those considered in
\emph{Theorem \ref{Theorem:NS_conditions}}.

\subsubsection{Proof of the sufficiency in Theorem
\ref{Theorem:NS_conditions}}~~~~~~

\begin{proof}
According to \emph{Lemma \ref{Lemma:Repeated}}, in the following we only need to
construct an invertible Jacobian matrix for a corresponding
single beam MIMO-IBC, i.e., the case where $d = 1$.

When $d=1$, the effective transmit and receive matrices
$\bar{\pmb{V}}_{j_l}$ and $\bar{\pmb{U}}_{i_k}$ defined in
\eqref{effective txrx} reduce to the effective transmit and receive
vectors $\bar{\pmb{v}}_{j_l}$ and $\bar{\pmb{u}}_{i_k}$. Then,
\eqref{Eq:Constraint_ICI_free_New5} is simplified as
\begin{align}\label{Eq:Constraint_ICI_free_SingleStream}
&F_{i_k,j_l}(\bar{\pmb{H}}_{0})=\bar{\pmb{h}}_{0~i_k,j}^{(2)}\bar{\pmb{v}}_{j_l}+
\bar{\pmb{u}}_{i_k}^{H}\bar{\pmb{h}}_{0~i_k,j_l}^{(3)}=0,~\forall i\neq j
\end{align}

To construct an invertible Jacobian matrix for the MIMO-IBC with $d=1$, we first analyze the structure of the
matrix.

In the Jacobian matrix, the
rows correspond to the entries of equation set
$\mathcal{Y} \triangleq \{F_{i_k,j_l}(\bar{\pmb{H}}_{0})|\forall i\neq j\}$, and the columns
correspond to the entries of variable set.
To show its \emph{repeated structure}, we divide all the
ICIs into $\sum_{j=1}^{G}K_j$ subsets, where
$\mathcal{Y}_{j_l} \triangleq \{F_{i_k,j_l}(\bar{\pmb{H}}_{0})|k=1,\cdots,K_i,~i=1,\cdots,G, \forall i\neq
j\}$ is a subset of the ICIs generated from the effective transmit
vector $\bar{\pmb{v}}_{j_l}$ of BS$_j$ to all the users in other cells, which contains $\sum_{i=1,i\neq j}^{G}K_i$ ICIs. Since $\mathcal{Y}=\cup_{j=1}^{G}\cup_{l=1}^{K_j}\mathcal{Y}_{j_l}$,
the overall number of ICIs is $\sum_{j=1}^{G}\sum_{l=1}^{K_j}\sum_{i=1,i\neq
j}^{G}K_i=\sum_{j=1}^{G}\sum_{i=1,i\neq
j}^{G} K_jK_i$. Consequently, the Jacobian matrix will be invertible if
\begin{align}\label{Eq:Rank_d1}
\mathrm{rank}(\pmb{J}(\bar{\pmb{H}}_0))=\sum_{j=1}^{G}\sum_{i=1,i\neq
j}^{G} K_jK_i
\end{align}

Correspondingly, $\pmb{J}(\bar{\pmb{H}}_0)$ can be
partitioned into $\sum_{j=1}^{G}K_j$ blocks, i.e.,
\begin{align}\label{Eq:Jacobian_Matrix_partitions}
\pmb{J}(\bar{\pmb{H}}_0)=[\pmb{J}_{1}^{T}(\bar{\pmb{H}}_0),\cdots,\pmb{J}_{G}^{T}(\bar{\pmb{H}}_0)]^{T}
\end{align}
where $\pmb{J}_j(\bar{\pmb{H}}_0)=[\pmb{J}_{j_1}^{T}(\bar{\pmb{H}}_0),\cdots,\pmb{J}_{j_{K_j}}^{T}(\bar{\pmb{H}}_0)]^{T}$, the rows of the $j_l$th block $\pmb{J}_{j_l}(\bar{\pmb{H}}_0)$ correspond to the
ICIs in $\mathcal{Y}_{j_l}$.

From \eqref{Eq:Jacobian_Matrix_Define} we know that $\pmb{J}_{j_l}(\bar{\pmb{H}}_0)$
can be partitioned into $\pmb{J}_{j_l}(\bar{\pmb{H}}_0) =
[\pmb{J}_{j_l}^{V}(\bar{\pmb{H}}_0),\pmb{J}_{j_l}^{U}(\bar{\pmb{H}}_0)]$, where
$\pmb{J}_{j_l}^{V}(\bar{\pmb{H}}_0)=\partial
\mathrm{vec}\{\mathcal{Y}_{j_l}\}/\partial
\mathrm{vec}\{\bar{\pmb{V}}\}$ and $\pmb{J}_{j_l}^{U}(\bar{\pmb{H}}_0)=\partial
\mathrm{vec}\{\mathcal{Y}_{j_l}\}/\partial
\mathrm{vec}\{\bar{\pmb{U}}\}$. Furthermore,
$\pmb{J}_{j_l}^{V}(\bar{\pmb{H}}_0)$ can be further divided into $\sum_{i=1}^{G}K_i$ blocks,
i.e.,
$\pmb{J}_{j_l}^{V}(\bar{\pmb{H}}_0)=[\pmb{J}_{j_l,1_1}^{V}(\bar{\pmb{H}}_{0}), \cdots,\pmb{J}_{j_l,G_{K_G}}^{V}(\bar{\pmb{H}}_{0})]$,
where $\pmb{J}_{j_l,m_n}^{V}(\bar{\pmb{H}}_0)=\partial
\mathrm{vec}\{\mathcal{Y}_{j_l}\}/\partial \bar{\pmb{v}}_{m_n}$,
whose rows correspond to all the ICIs generated from $\bar{\pmb{v}}_{j_l}$ and
columns correspond to all the variables provided by
$\bar{\pmb{v}}_{m_n}$.

According to \eqref{Eq:Jacobian_Matrix_IBC_V}, \eqref{Eq:Jacobian_Matrix_IBC_U} and
\eqref{Eq:Constraint_ICI_free_SingleStream}, the elements of the Jacobian matrix are
\begin{subequations}
\begin{align}\label{Eq:Jacobian_Matrix_IBC_D1_V}
  \frac{\partial {F}_{i_k,j_l}(\bar{\pmb{H}}_{0})}
  {\partial\bar{\pmb{v}}_{m_n}}=&
\left\{
\begin{array}{ll}
\bar{\pmb{h}}_{0~i_k,j}^{(2)},&~\forall m_n=j_l\\
\pmb{0}_{1\times (M_m-K_m)}, &~\forall m_n\neq j_l
\end{array}
  \right.
  \\
\label{Eq:Jacobian_Matrix_IBC_D1_U}
\frac{\partial{F}_{i_k,j_l}(\bar{\pmb{H}}_{0})}
  {\partial\bar{\pmb{u}}^{H}_{m_n}}=&
\left\{
\begin{array}{ll}
(\bar{\pmb{h}}_{0~i_k,j_l}^{(3)})^{T},  &~\forall m_n=i_k\\
\pmb{0}_{1\times (N_{m_n}-1)}, &~\forall m_n\neq i_k
\end{array}
  \right.
\end{align}
\end{subequations}
%where $t=1,\cdots,M_j-K_j$ and $s=1,\cdots,N_{i_k}-1$.

In \eqref{Eq:Jacobian_Matrix_IBC_D1_V} and
\eqref{Eq:Jacobian_Matrix_IBC_D1_U}, the nonzero elements respectively satisfy
\begin{subequations}
\begin{align}
\label{Eq:Jacobian_Matrix_Repetition_V1} &\frac{\partial
{F}_{i_k,j_1}(\bar{\pmb{H}}_{0})} {\partial \bar{\pmb{v}}_{j_1}}=\cdots
=\frac{\partial {F}_{i_k,j_{K_j}}(\bar{\pmb{H}}_{0})} {\partial
\bar{\pmb{v}}_{j_{K_j}}}
=\bar{\pmb{h}}_{0~i_k,j}^{(2),t}\\
\label{Eq:Jacobian_Matrix_Repetition_U1} &\frac{\partial
{F}_{i_1,j_l}(\bar{\pmb{H}}_{0})} {\partial
\bar{\pmb{u}}^{H}_{i_1}}=(\bar{\pmb{h}}_{0~i_1,j_l}^{(3)})^{T},
~\cdots,~\frac{\partial {F}_{i_{K_i},j_{l}}(\bar{\pmb{H}}_{0})}
{\partial \bar{\pmb{u}}^{H}_{i_{K_i}}} =(\bar{\pmb{h}}_{0~i_{K_i},j_l}^{(3)})^{T}
\end{align}
\end{subequations}

From \eqref{Eq:Jacobian_Matrix_IBC_D1_V}, we obtain
\begin{align}\label{Eq:Repeated_Jacobian}
  \pmb{J}_{j_l,i_k}^{V}(\bar{\pmb{H}}_0)=\left\{\begin{array}{ll}
                   \bar{\pmb{H}}_{0~:,j}^{(2)},&~\forall i_k=j_l\\
                   \pmb{0}_{\sum_{i=1,i\neq j}^{G}K_i\times (M_i-K_i)},&~\forall i_k\neq j_l
                 \end{array}
  \right.
\end{align}
where
$\bar{\pmb{H}}_{0~:,j}^{(2)}=[(\bar{\pmb{h}}_{0~1_1,j}^{(2)})^{T}$,
$\cdots$, $(\bar{\pmb{h}}_{0~{(j-1)}_{K_{(j-1)}},j}^{(2)})^{T}$,
$(\bar{\pmb{h}}_{0~{(j+1)}_{1},j}^{(2)})^{T}$, $\cdots$,
$(\bar{\pmb{h}}_{0~G_{K_G},j}^{(2)})^{T}]^{T}$.

Then, $\pmb{J}^{V}(\bar{\pmb{H}}_0)=
\mathrm{diag}\{\pmb{J}_{1}^{V}(\bar{\pmb{H}}_0),\cdots,\pmb{J}_{G}^{V}(\bar{\pmb{H}}_0)\}$,
where $\pmb{J}_{j}^{V}(\bar{\pmb{H}}_0)= \mathrm{diag}\{\pmb{J}_{j_1,j_1}^{V}(\bar{\pmb{H}}_0),\cdots,\pmb{J}_{j_{K_j},j_{K_j}}^{V}(\bar{\pmb{H}}_0)\}$. From \eqref{Eq:Jacobian_Matrix_Repetition_V1}, we have $\pmb{J}_{j_1,j_1}^{V}(\bar{\pmb{H}}_0)=\cdots=\pmb{J}^{V}_{j_{K_j},j_{K_j}}(\bar{\pmb{H}}_0)=\bar{\pmb{H}}_{0~:,j}^{(2)}$. This indicates that $\pmb{J}_{j}^{V}(\bar{\pmb{H}}_0)$ is composed of $K_j$ repeated blocks.
From \eqref{Eq:Jacobian_Matrix_Repetition_U1}, we see that the nonzero elements corresponding to the receive vectors of
the users are different, i.e., $\pmb{J}^{U}(\bar{\pmb{H}}_0)$ does not contain any repeated nonzero elements.

Figure
\ref{fig:Structure_IBC} shows the structure for the MIMO-IBC with $d = 1$, where the repeated blocks
are marked with the same kind of shadowing field, and the blank
space denotes the zero elements. We can see that in the Jacobian matrix the
blocks corresponding to the transmit vectors from each BS are
identical. This comes from the second feature of MIMO-IBC.

\begin{figure}[htb!]
\centering
\includegraphics[width=0.95\linewidth]{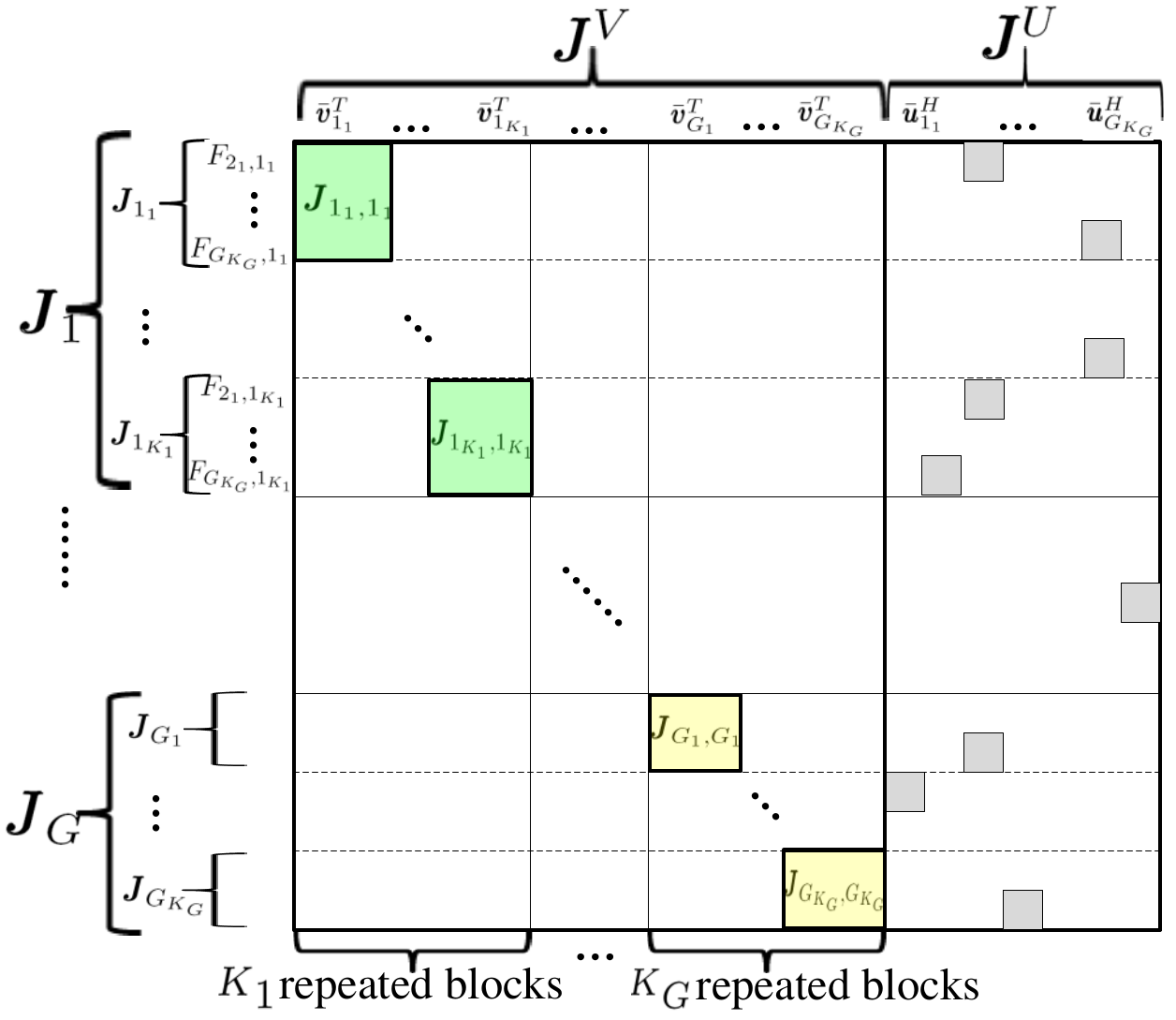}
\caption{Structure of $\pmb{J}(\bar{\pmb{H}}_0)$ of MIMO-IBC
$\prod_{i=1}^{G}(M_i\times \prod_{k=1}^{K_i}(N_{i_k},1))$.}
\label{fig:Structure_IBC}
\end{figure}

In the following, we construct an invertible Jacobian matrix with
such \emph{sparse structure} and \emph{repeated structure}.

Essentially, the existence of an invertible Jacobian matrix implies that all the ICIs in the corresponding network can be eliminated with linear IA, as analyzed with a bipartite graph in \cite{Luo2012}. This suggests that in order to construct an invertible Jacobian matrix we need to find a way to assign each of the variables in the transmit and receive vectors to each of the ICIs.

The structure of the Jacobian matrix in Fig. \ref{fig:Structure_IBC}
gives rise to the following observation: the transmit variable
assignment is not as flexible as the receive variable assignment.
Specifically, as shown from the proof of \emph{Lemma
\ref{Lemma:CanNot_Extended}}, the \emph{repeated structure} of
$\pmb{J}^{V}(\bar{\pmb{H}}_0)$ requires that if one transmit vector
of BS$_j$ is assigned to avoid the ICI to a user in other cell, the
other transmit vectors of BS$_j$ have also to avoid the ICI to the
same user. In other words, all the transmit vectors at one BS must
avoid generating ICIs to the same user in other cell.\footnote{It
means that the multiple ICIs between one BS and one user should be
eliminated either by using the spatial resources of the BS or by the
user. In fact, such a requirement can be satisfied only for the
system considered in \emph{Lemma \ref{Lemma:Repeated}} but not for
the system in \emph{Lemma \ref{Lemma:CanNot_Extended}}.} This leads
to the difficulty to construct an invertible Jacobian matrix for the
MIMO-IBC considered in \emph{Lemma \ref{Lemma:CanNot_Extended}},
where the BS can only avoid partial ICIs it generated but it does
not know which ICIs it should avoid. By contrast, the receive
variable assignment in a MIMO-IBC with $d = 1$ is flexible, because
$\pmb{J}^{U}(\bar{\pmb{H}}_0)$ does not have the \emph{repeated
structure}. By applying the result in \emph{Lemma
\ref{Lemma:Sparse}}, $\pmb{J}^{U}(\bar{\pmb{H}}_0)$ can be set as a
sub-matrix of a permutation matrix.

Inspired by this observation, we can first construct
$\pmb{J}^{U}(\bar{\pmb{H}}_0)$, i.e., assign the variables in the
receive vector to deal with some ICIs, using the way of perfect
matching. Then, we construct  $\pmb{J}^{V}(\bar{\pmb{H}}_0)$ to deal
with the remaining ICIs. To circumvent the confliction between
allowing the transmit vectors of each BS to avoid different ICIs and
ensuring the \emph{repeated structure} of the Jacobian matrix, we
only reserve enough variables in these transmit vectors but do not
assign variables to eliminate specific ICIs. Such an idea translates
to the following two rules to construct the invertible Jacobian
matrix.

\begin{itemize}
  \item \emph{Rule 1}: All the elements in $\pmb{J}^{U}(\bar{\pmb{H}}_0)$ are set as the corresponding elements in $\pmb{D}^{\mathcal{W}}$, where $\pmb{D}^{\mathcal{W}}$ is a permutation matrix obtained from \emph{Lemma \ref{Lemma:Sparse}}.
  \item \emph{Rule 2}: All the elements in $\pmb{J}_{j_1,j_1}^{V}(\bar{\pmb{H}}_0)$ are set to ensure that its arbitrary
  $M_j-K_j$ row vectors are linearly independent, and
  $\pmb{J}_{j_l,j_l}^{V}(\bar{\pmb{H}}_0)=\pmb{J}_{j_1,j_1}^{V}(\bar{\pmb{H}}_0),~l=2,\cdots,K_j$ that ensures the \emph{repeated structure}.
\end{itemize}

Now we prove that the constructed Jacobian matrix following these
rules is invertible. Since $\pmb{J}^{V}(\bar{\pmb{H}}_0)$ is a block
diagonal matrix, the nonzero blocks in different matrices of
$\pmb{J}_{j_l}^{V}(\bar{\pmb{H}}_0)$ are non-overlapping. Since the
elements in $\pmb{J}^{U}(\bar{\pmb{H}}_0)$ are set from the
permutation matrix $\pmb{D}^{\mathcal{W}}$, there is at most one
nonzero element in each column or row of
$\pmb{J}^{U}(\bar{\pmb{H}}_0)$. This indicates that the nonzero
elements in different matrices of
$\pmb{J}_{j_l}^{U}(\bar{\pmb{H}}_0)$ are also non-overlapping. As a
result, the nonzero elements in different blocks of
$\pmb{J}_{j_l}(\bar{\pmb{H}}_0) =
[\pmb{J}_{j_l}^{V}(\bar{\pmb{H}}_0),\pmb{J}_{j_l}^{U}(\bar{\pmb{H}}_0)]$
are non-overlapping. Considering the definition in
\eqref{Eq:Jacobian_Matrix_partitions}, we have
\begin{align}\label{Eq:Rank_Jacobian}
  \mathrm{rank}\left(\pmb{J}(\bar{\pmb{H}}_0)\right) = \sum_{j=1}^{G}\sum_{l=1}^{K_j} \mathrm{rank}\left(\pmb{J}_{j_l}(\bar{\pmb{H}}_0)\right)
\end{align}

In \emph{Rule 1}, the perfect matching ensures that there are
$\sum_{i=1}^{G}K_i-M_j$ ones in $\pmb{J}_{j_l}^{U}(\bar{\pmb{H}}_0)$
that are scattered in different rows, and then
$\mathrm{rank}(\pmb{J}_{j_l}^{U}(\bar{\pmb{H}}_0))=\sum_{i=1}^{G}K_i-M_j$.

Using elementary transformations, we can eliminate
$\sum_{i=1}^{G}K_i-M_j$ row vectors of
$\pmb{J}_{j_l}^{V}(\bar{\pmb{H}}_0)$ with nonzero elements and leave $M_j-K_j$ independent
row vectors in $\pmb{J}_{j_l}^{V}(\bar{\pmb{H}}_0)$. In this way,
the nonzero elements in $\pmb{J}_{j_l}^{U}(\bar{\pmb{H}}_0)$ and the transformed $\pmb{J}_{j_l}^{V}(\bar{\pmb{H}}_0)$ are located in different rows
of $\pmb{J}_{j_l}(\bar{\pmb{H}}_0)$. Therefore,
$\mathrm{rank}(\pmb{J}_{j_l}^{V}(\bar{\pmb{H}}_0))=M_j-K_j$ and
$\mathrm{rank}(\pmb{J}_{j_l}(\bar{\pmb{H}}_0))=
\mathrm{rank}(\pmb{J}_{j_l}^{V}(\bar{\pmb{H}}_0))+
\mathrm{rank}(\pmb{J}_{j_l}^{U}(\bar{\pmb{H}}_0))=\sum_{i=1,i\neq
j}^{G}K_i$. After substituting to \eqref{Eq:Rank_Jacobian}, we have
\begin{align}
\mathrm{rank}\left(\pmb{J}(\bar{\pmb{H}}_0)\right)=\sum_{j=1}^{G}\sum_{l=1}^{K_j}\sum_{i=1,i\neq
j}^{G}K_i=\sum_{j=1}^{G}\sum_{i=1,i\neq
j}^{G}K_jK_i
\end{align}
Comparing with \eqref{Eq:Rank_d1}, we know that $\pmb{J}(\bar{\pmb{H}}_0)$ is invertible.
Now, \emph{Theorem \ref{Theorem:NS_conditions}} is proved.
\end{proof}

\section{Discussion: Proper vs Feasible}\label{Sec:Discussion}
In this section, we discuss the connection between the proper and
feasibility conditions of the linear IA for MIMO-IBC by analyzing
and comparing \emph{Theorem \ref{Theorem:Necessary_Condition}} and
\emph{Theorem \ref{Theorem:NS_conditions}}. We also show the
relationship of our proved necessary and sufficient conditions with
existing results in the literature.

\subsection{``Proper''=``Feasible''}
For a class of MIMO-IBC with configuration
$\prod_{i=1}^{G}(M_i\times \prod_{k=1}^{K_i}(N_{i_k},d))$
where $M_i\geq K_id$ and $N_{i_k}\geq d$, from
\emph{Theorem \ref{Theorem:NS_conditions}} we know that when both
$M_i$ and $N_{i_k}$ are divisible by $d$, the MIMO-IBC is feasible
if it is proper. This immediately leads to the following conclusion:
\emph{when $d=1$, a proper MIMO-IBC is always feasible for arbitrary $M_i$
and $N_{i_k}$.}

When $d>1$, since there are too many cases of general MIMO-IBC to
describe and analyze, in the sequel we only focus on the symmetric
MIMO-IBC. We first show the ``proper condition'' for the symmetric
MIMO-IBC.

\begin{corollary}\label{Corollary:Proper}
For a symmetric MIMO-IBC with configuration $(M\times (N,d)^{K})^{G}$, the second
necessary condition in \emph{Theorem
\ref{Theorem:Necessary_Condition}}, i.e., the proper condition in
\eqref{Eq:Necessary_Condition_Interference}, reduces to
\begin{align}
\label{Eq:NS_Condition_Interference_Sym}
M+N\geq (GK+1)d
\end{align}
\end{corollary}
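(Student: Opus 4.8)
The plan is to show that the entire family of proper conditions \eqref{Eq:Necessary_Condition_Interference}, ranging over all $\mathcal{I}\subseteq\mathcal{J}$ and all user subsets $\{\mathcal{K}_i\}$, collapses to the single inequality \eqref{Eq:NS_Condition_Interference_Sym} once the symmetric parameters $M_j=M$, $d_j=Kd$, $N_{i_k}=N$, $d_{i_k}=d$ are substituted. I would first set $a\triangleq M-Kd$ and $b\triangleq N-d$, both nonnegative by \eqref{Eq:Necessary_Condition_Signal}. For a fixed choice of $\mathcal{I}$ and $\{\mathcal{K}_i\}$, I introduce the bookkeeping quantities $T=\{j:\exists i,(i,j)\in\mathcal{I}\}$ (transmitting cells), $R=\{i:\exists j,(i,j)\in\mathcal{I}\}$ (receiving cells), $t_i=|\{j:(i,j)\in\mathcal{I}\}|$ (number of interferers of cell $i$), $\kappa_i=|\mathcal{K}_i|$, and $S=\sum_{i\in R}\kappa_i$. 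After cancelling a common factor $d$, the instance of \eqref{Eq:Necessary_Condition_Interference} for this choice reads $|T|Ka+Sb\ge Kd\sum_{i\in R}t_i\kappa_i$.

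For the necessity of \eqref{Eq:NS_Condition_Interference_Sym}, I specialize to the complete network $\mathcal{I}=\mathcal{J}$ with $\mathcal{K}_i=\{1,\dots,K\}$ for every $i$, so that $|T|=|R|=G$, $t_i=G-1$, $\kappa_i=K$, and $S=GK$. The inequality then becomes $GKa+GKb\ge Kd\cdot G(G-1)K$, which after dividing by $GK$ is $a+b\ge(G-1)Kd$; recalling $a+b=M+N-(K+1)d$, this is exactly $M+N\ge(GK+1)d$.

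For sufficiency I assume $a+b\ge(G-1)Kd$ with $a,b\ge0$ and verify the instance inequality for every admissible $(\mathcal{I},\{\mathcal{K}_i\})$. The crucial observation is that the right-hand side $Kd\sum_{i\in R}t_i\kappa_i$ is a constant independent of $(a,b)$, while the left-hand side $|T|Ka+Sb$ is linear in $(a,b)$ with nonnegative coefficients. Hence its minimum over the feasible region $\{a,b\ge0,\ a+b\ge(G-1)Kd\}$ is attained at one of the two corners $(( G-1)Kd,0)$ or $(0,(G-1)Kd)$ of the boundary segment, giving $\min = (G-1)Kd\,\min\{|T|K,\,S\}$. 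It therefore suffices to prove the purely combinatorial inequality $\sum_{i\in R}t_i\kappa_i\le(G-1)\min\{|T|K,\,S\}$.

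The main obstacle — and really the only delicate point — is this extremality claim, i.e.\ that the complete-network subset is the binding one; the way I make it rigorous is by counting $\sum_{i\in R}t_i\kappa_i=\sum_{(i,j)\in\mathcal{I}}\kappa_i$ in two ways. Grouping by receiving cell and using $t_i\le G-1$ yields $\sum_{(i,j)\in\mathcal{I}}\kappa_i\le(G-1)\sum_{i\in R}\kappa_i=(G-1)S$; grouping by transmitting cell and using that each $j\in T$ interferes with at most $G-1$ cells, each contributing $\kappa_i\le K$, yields $\sum_{(i,j)\in\mathcal{I}}\kappa_i\le|T|(G-1)K$. Taking the smaller of the two bounds gives $\sum_{i\in R}t_i\kappa_i\le(G-1)\min\{|T|K,S\}$, which closes the sufficiency argument and establishes the equivalence stated in the corollary.
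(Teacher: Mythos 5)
Your proof is correct and takes essentially the same route as the paper's Appendix~D: your two-way counting of $\sum_{(i,j)\in\mathcal{I}}\kappa_i$ over receiving and transmitting cells is exactly the paper's bound via the enlarged set $\tilde{\mathcal{I}}$, yielding $(G-1)Kd\min\{K_{\mathrm{T}},K_{\mathrm{R}}\}$ with $K_{\mathrm{T}}=|T|K$ and $K_{\mathrm{R}}=S$, and your corner-minimization of $|T|Ka+Sb$ over $\{a,b\geq 0,\ a+b\geq (G-1)Kd\}$ is the paper's inequality $aK_{\mathrm{T}}+bK_{\mathrm{R}}\geq (a+b)\min\{K_{\mathrm{T}},K_{\mathrm{R}}\}$ combined with $a+b\geq (G-1)Kd$. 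The only difference is cosmetic: you also write out the trivial necessity direction (specializing to $\mathcal{I}=\mathcal{J}$ with full user sets), which the paper leaves implicit.
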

\begin{proof}
See Appendix \ref{App_Corollary1}.
\end{proof}

Note that \eqref{Eq:NS_Condition_Interference_Sym} was also obtained
in \cite{IBC_L_cell_LJD} from counting the total number of variables
and equations. However, it was not proposed as the proper condition.
From the definition of the proper system in \cite{Yetis2010}, a
system is proper iff for \emph{every} subset of the equations, the
number of the variables involved is at least as large as the number
of the equations. This means that to prove
\eqref{Eq:NS_Condition_Interference_Sym} as the proper condition, we
need to check: if \eqref{Eq:NS_Condition_Interference_Sym}
satisfies, whether \eqref{Eq:Necessary_Condition_Interference}
always holds for \emph{arbitrary} sets
$\mathcal{I}\subseteq\mathcal{J}$ and
$\mathcal{K}_i\subseteq\{1,\cdots,K\}$.

From \emph{Theorem \ref{Theorem:NS_conditions}} and \emph{Corollary
\ref{Corollary:Proper}} we know that for a symmetric MIMO-IBC with
$M\geq Kd$ and $N\geq d$, when $M$ and $N$ are divisible by $d$, the
IA of the symmetric MIMO-IBC will be feasible if the system is
proper. Next, we derive from \emph{Theorem
\ref{Theorem:NS_conditions}} that for a more general class of
symmetric MIMO-IBC, the IA will be feasible if the system is proper.
\begin{corollary}\label{Corollary:Feasible_sym}
For a symmetric MIMO-IBC, when
\begin{align}
\label{Eq:NS_Condition_Signal_Sym_x}
&M\geq (K+p)d,~N\geq ((G-1)K+1-p)d\nonumber\\
&\quad \quad \quad \exists p\in\{0,\cdots,(G-1)K\}
\end{align}
the IA is feasible.
\end{corollary}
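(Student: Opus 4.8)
The plan is to exhibit, for the value of $p$ supplied by the hypothesis, a divisible reference configuration that is already covered by \emph{Theorem \ref{Theorem:NS_conditions}}, and then to show that enlarging the antenna numbers cannot destroy feasibility. First I would introduce the symmetric reference system $(M'\times(N',d)^{K})^{G}$ with $M'=(K+p)d$ and $N'=((G-1)K+1-p)d$. Both $M'$ and $N'$ are divisible by $d$, so \emph{Theorem \ref{Theorem:NS_conditions}} applies once its two hypotheses are checked. The signal condition reads $\min\{M'-Kd,\,N'-d\}=\min\{pd,\,((G-1)K-p)d\}\geq 0$, which holds exactly because $p\in\{0,\dots,(G-1)K\}$. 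The interference condition reduces, by \emph{Corollary \ref{Corollary:Proper}}, to the symmetric proper condition $M'+N'\geq(GK+1)d$; here $M'+N'=(K+p)d+((G-1)K+1-p)d=(GK+1)d$, so it holds with equality. Hence the reference system is feasible.

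Second I would prove an antenna-monotonicity step: if $(M'\times(N',d)^{K})^{G}$ is feasible and $M\geq M'$, $N\geq N'$, then $(M\times(N,d)^{K})^{G}$ is feasible. Given a generic channel $\{\pmb{H}_{i_k,j}\}$ of the larger system, I would restrict each $\pmb{H}_{i_k,j}$ to its leading $N'\times M'$ submatrix $\pmb{H}'_{i_k,j}$; since these submatrices collect a fixed subset of i.i.d.\ continuous entries, they are again generic, so with probability one the reference system furnishes an IA solution $\{\pmb{V}'_{i_k},\pmb{U}'_{i_k}\}$ for them. Zero-padding to $\pmb{V}_{i_k}=[(\pmb{V}'_{i_k})^{T},\,\pmb{0}]^{T}$ and $\pmb{U}_{i_k}=[(\pmb{U}'_{i_k})^{T},\,\pmb{0}]^{T}$, a block computation gives $\pmb{U}_{i_k}^{H}\pmb{H}_{i_k,j}\pmb{V}_{j_l}=(\pmb{U}'_{i_k})^{H}\pmb{H}'_{i_k,j}\pmb{V}'_{j_l}$, so the MUI and ICI conditions \eqref{Eq:Constraint_MUI_free0}--\eqref{Eq:Constraint_ICI_free0} are inherited verbatim; the padded matrices keep full column rank and the unit-trace normalization, so the desired-signal rank condition \eqref{Eq:Constraint_IDI_free0} survives as well. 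Thus the larger system admits an IA solution for generic channels, i.e.\ it is feasible. Applying this with $M\geq(K+p)d=M'$ and $N\geq((G-1)K+1-p)d=N'$ completes the argument.

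The verifications in the first step and the block identity in the second are mechanical. The step that needs care is the monotonicity claim, and specifically two points within it: that the leading submatrix of a generic channel is itself generic (so that reference-system feasibility, which holds with probability one, can be invoked through the projection onto submatrix entries), and that zero-padding preserves the full-column-rank property required by \eqref{Eq:Constraint_IDI_free0}. I expect this submatrix-restriction-and-padding construction to be the only genuine obstacle; an equivalent route is to embed the single invertible-Jacobian realization produced in the sufficiency proof of \emph{Theorem \ref{Theorem:NS_conditions}} into a larger realization and then invoke the fact, recalled in the Introduction, that feasibility at one realization propagates to generic channels, but the direct solution-level argument above seems cleanest.
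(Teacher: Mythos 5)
Your proposal is correct and takes essentially the same route as the paper: fix the boundary configuration $M_p=(K+p)d$, $N_p=((G-1)K+1-p)d$, note that it is divisible by $d$, satisfies the signal condition, and is proper by Corollary~\ref{Corollary:Proper} since $M_p+N_p=(GK+1)d$, then invoke Theorem~\ref{Theorem:NS_conditions} to conclude feasibility. The only difference is that the paper stops after the boundary case and leaves the antenna-monotonicity step (``$M\geq M_p$, $N\geq N_p$ preserves feasibility'') implicit, whereas your submatrix-restriction and zero-padding argument proves it explicitly and soundly, filling in a step the paper takes for granted.
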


\begin{proof}
See Appendix \ref{App_Corollary2}.
\end{proof}

This is the sufficient and necessary condition of the IA
feasibility, where $M$ and $N$ are not necessary to be divisible by
$d$ or $K$. Since for the symmetric MIMO-IBC the condition that
either $M$ or $N$ is divisible by $d$ is one special case of
\eqref{Eq:NS_Condition_Signal_Sym_x}, when either $M$ or $N$ is
divisible by $d$, the proper symmetric MIMO-IBC is feasible.

In literature, the sufficiency has been proved only for three
specific MIMO-IBC systems \cite{Suh2011,IBC_L_cell_Lovel,Shin2011}
and for two special classes of MIMO-IC \cite{Luo2012,Tse2011}.

For the three MIMO-IBC systems with $G=2$, $d=1$, $M =N =K+1$ in
\cite{Suh2011}, with $G=2$, $M =(K+1)d$, $N=Kd$ in \cite{Shin2011}
and with $G=2$, $M =Kd$, $N=(K+1)d$ in \cite{IBC_L_cell_Lovel}, the
sufficiency was proved implicitly by proposing closed-form linear IA
algorithms. We can see that these configurations satisfy
\eqref{Eq:NS_Condition_Interference_Sym} and
\eqref{Eq:NS_Condition_Signal_Sym_x}, i.e., the three systems are
proper and feasible, which are special cases of our results in
\emph{Corollary 2}.

For the two classes of MIMO-IC in \cite{Luo2012,Tse2011}, we can extend their results into MIMO-IBC by the following proposition.
\begin{proposition}\label{Proposition:Extend}
If there exists an
invertible Jacobian matrix for a class of MIMO-IC with configuration $\prod_{i=1}^{G}(M_i\times N_i,K_i)$, there will exist an invertible Jacobian matrix for a class of MIMO-IBC with configuration $\prod_{i=1}^{G}(M_i\times
\prod_{k=1}^{K_i}(N_{i_k},1))$.
\end{proposition}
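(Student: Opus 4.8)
The plan is to prove the proposition by directly identifying the MIMO-IBC Jacobian matrix with the given invertible MIMO-IC Jacobian matrix under a carefully chosen channel realization. The guiding idea is the observation made just above: the single-beam MIMO-IBC carries the same transmit-side \emph{repeated structure} as the multi-beam MIMO-IC, but \emph{lacks} the receive-side \emph{repeated structure}, so it has strictly more freedom on the receive side. Since by the sufficiency framework of \cite{Luo2012,Tse2011} it suffices to exhibit a single channel realization at which the Jacobian matrix is invertible, the strategy is to spend this extra receive freedom to make the MIMO-IBC Jacobian coincide with the MIMO-IC one.

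First I would fix the correspondence between the two configurations. The multi-beam MIMO-IC $\prod_{i=1}^{G}(M_i\times N_i,K_i)$, in which the single user of cell $i$ carries $K_i$ streams on $N_i$ antennas, is matched to the single-beam MIMO-IBC $\prod_{i=1}^{G}(M_i\times \prod_{k=1}^{K_i}(N_{i_k},1))$ with $N_{i_k}=N_i-K_i+1$ and identical $M_i$. Under this matching the effective transmit vectors keep the dimension $M_j-K_j$, the effective receive vectors keep the dimension $N_i-K_i=N_{i_k}-1$, and the scalar interference equations are in one-to-one correspondence $E_{i,k,j,l}\leftrightarrow F_{i_k,j_l}$, the interference produced by the $l$th transmit vector of BS$_j$ at the $k$th receive direction (now the $k$th co-cell user) of cell $i$. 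A count of variables and equations shows $L_v$ and $L_e$ coincide on the two sides, so no redundant variable has to be removed.

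Next I would compare the two Jacobians block-by-block through \eqref{Eq:Jacobian_Matrix_IBC_D1_V}--\eqref{Eq:Jacobian_Matrix_IBC_D1_U}. On the transmit side the structures are already identical: in both problems $\partial F/\partial\bar{\pmb v}_{j_l}$ equals an effective-channel row that is repeated over $l$, exactly the repetition of \eqref{Eq:Jacobian_Matrix_Repetition_V1}. On the receive side the MIMO-IC carries a block repeated over the beams of each user, whereas the MIMO-IBC has independent blocks $(\bar{\pmb h}^{(3)}_{0~i_k,j_l})^{T}$ for distinct users. The key step is to specialize the still-free MIMO-IBC channel so that these independent blocks reproduce the repeated block of the MIMO-IC: I would set $\bar{\pmb h}^{(3)}_{0~i_k,j_l}$ equal across $k$ to the $l$th receive-channel column of the MIMO-IC, and set $\bar{\pmb h}^{(2)}_{0~i_k,j}$ equal to the $k$th transmit-channel row of the MIMO-IC. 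With this assignment every entry of the MIMO-IBC Jacobian matches the corresponding entry of the MIMO-IC Jacobian under the bijection above, so the two matrices are equal up to a consistent permutation of rows and columns and therefore share the same rank. As the MIMO-IC Jacobian is invertible by hypothesis, so is the constructed MIMO-IBC Jacobian; and since $\det\pmb J(\bar{\pmb H}_0)$ is a polynomial in the free channel entries shown to be nonzero at this point, it is nonzero for generic channels as well.

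I expect the only real difficulty to be bookkeeping rather than analysis. One must check that the transmit/receive block dimensions and the sparsity supports of the two Jacobians align exactly under the chosen index bijection, and argue that forcing the MIMO-IBC receive channels to be equal across co-cell users --- a non-generic, degenerate choice --- is still a legitimate witness for the non-vanishing of the determinant polynomial, hence for the existence of an invertible Jacobian. Beyond this verification the proposition is immediate, its entire content being that the single-beam MIMO-IBC lacks the receive-side repeated structure and can thus always emulate the harder multi-beam MIMO-IC.
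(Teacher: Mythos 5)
Your proposal is correct and follows essentially the same route as the paper's proof: match the configurations via $N_{i_k}-1=N_i-K_i$, observe that the sparsity pattern and transmit-side repeated structure coincide while the MIMO-IBC receive-side blocks are unconstrained, and then choose the IBC channel realization so that $\pmb{J}_{\mathrm{IBC}}(\bar{\pmb{H}}_0)=\pmb{J}_{\mathrm{IC}}(\bar{\pmb{H}}_0)$, which is exactly the paper's construction. Your extra remark that a degenerate (measure-zero) channel choice is still a legitimate witness for the non-vanishing of the determinant polynomial is a valid and worthwhile clarification of a step the paper leaves implicit.
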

\begin{proof}
See Appendix \ref{App_Lemma4}.
\end{proof}

According to \emph{Proposition \ref{Proposition:Extend}}, the sufficiency proof for the class of MIMO-IC in \cite{Luo2012} where either $M$ or $N-d$ is divisible by $d$ can be extended to a class of
MIMO-IBC where either $M$ or $N-d$ is divisible by $Kd$, which is a sub-class of those shown in \emph{Corollary
\ref{Corollary:Feasible_sym}}.
Similarly, the sufficiency proof for the class of MIMO-IC in \cite{Tse2011} where $G \geq 3$ and $M = N$ can be extended to a class of
MIMO-IBC where $G \geq 3$ and $M=N+(K-1)d$.
It is not hard to show that except for the cases when $(G-1)K$ is odd, all extended cases from \cite{Tse2011} are the special cases of those in \emph{Corollary
\ref{Corollary:Feasible_sym}}. When these extended MIMO-IBC systems satisfy the proper condition in \eqref{Eq:NS_Condition_Interference_Sym}, they are feasible.

\subsection{``Proper''$\neq$``Feasible''}

For a symmetric MIMO-IBC, the third necessary condition in \emph{Theorem
\ref{Theorem:Necessary_Condition}}, i.e., \eqref{Eq:Necessary_Condition_Compress}, reduces to
\begin{align}
\label{Eq:Necessary_Condition_Compress_Sym1}
\max\{pM,~qN\}\geq pKd + qd
\end{align}
where $p$ and $q$ were defined after \eqref{Eq:Constraint_ICI_free_Multiple}.

In a symmetric MIMO-IBC where $M\geq Kd$ and $N
\geq d$, when $M$ and $N$ satisfy
\eqref{Eq:NS_Condition_Interference_Sym} but do not satisfy
\eqref{Eq:Necessary_Condition_Compress_Sym1}, the MIMO-IBC is \emph{proper
but infeasible}. However, some conditions in \eqref{Eq:Necessary_Condition_Compress_Sym1} can be derived from \eqref{Eq:NS_Condition_Interference_Sym}. To investigate the proper but infeasible region of antenna configuration, we need to find which necessary conditions in \eqref{Eq:Necessary_Condition_Compress_Sym1} are not included in the proper condition.

{
\begin{table*}[htb!]\centering
\caption{Necessary conditions other than proper condition for
symmetric proper systems}\label{Table:Necessary_condition}
\begin{tabular}{l| l |c|c|c }
\hline
\multirow{2}{*}{ References} &  \multirow{2}{*}{Necessary conditions other than proper condition} & \multicolumn{3}{c}{Corresponding proper but infeasible cases} \\
\cline{3-5}
   &    &  Case I & Case II &  Other cases except I and II\\
\hline
\emph{Corollary \ref{Corollary:Proper_infeasible_sym}}  &  $\left\{\begin{array}{l}
              \max\{M,(G-1)KN\}\geq GKd \\
              \max\{(G-1)M,N\}\geq ((G-1)K+1)d
            \end{array}
 \right.,~\forall K,G$& $\forall K,G$ & $\forall K,G$ &  \\
\hline
\cite{Luo2012}$^{*}$,\cite{Tse2011}$^{*}$ &  $\max\{M,N\}\geq (K+1)d,~\forall K,G$ & & $G=2, \forall K$ & \\
\hline
\cite{IBC_Guillaud} &  $\left\{\begin{array}{l}
              \max\{M,(L-1)KN\}\geq LKd \\
              \max\{(L-1)M,KN\}\geq LKd
            \end{array}
 \right.,~L=2,\cdots,G, ~\forall K,G$& $\forall K,G$ & $K=1, \forall G$ &  \\
\hline
\cite{IBC_L_cell_Lovel} &  $\left\{\begin{array}{l}
              \max\{M,(G-1)N\}\geq (K+G-1)d \\
              \max\{(G-1)M,KN\}\geq (K+G-1)d
            \end{array}
 \right.,~\forall K,G$ & $K=1, \forall G$ & $K=1, \forall G$ & \\
\hline
\cite{Jafar_3cell},\cite{Tse_3cell_2011} &
$\left\{\begin{array}{l}
              \max\{LM,(L+1)N\}\geq (2L+1)d \\
              \max\{(L+1)M,LN\}\geq (2L+1)d
            \end{array}
 \right.,~\forall L\geq1, \left\{\begin{array}{l}
              K=1 \\
              G=3 \\
            \end{array}\right.
$& $K=1,G=3$ & $K=1,G=3$ & $K=1,G=3$\\
\hline
\cite{Jafar_IAchain_Kcell} &
$\left\{\begin{array}{l}
              \max\{M,(G-1)N\}\geq Gd \\
              \max\{(G-1)M,N\}\geq Gd \\
              \frac{(G-1)N}{G^2-G-1}\geq d, \forall \frac{G-1}{G(G-2)}\geq \frac{M}{N}\geq\frac{G}{G^2-G-1}\\
              \frac{(G-1)M}{G^2-G-1}\geq d, \forall \frac{G-1}{G(G-2)}\geq \frac{N}{M}\geq\frac{G}{G^2-G-1}
            \end{array}
 \right.,~\left\{\begin{array}{l}
              K=1 \\
              \forall G \\
            \end{array}\right.
            $ & $K=1,\forall G$ & $K=1,\forall G$ & $K=1,\forall G$ \\
\hline
\end{tabular}
\end{table*}
}

\begin{corollary}\label{Corollary:Proper_infeasible_sym}
For a symmetric MIMO-IBC where $M\geq Kd$ and $N \geq d$, there exist
at least two necessary conditions that are not included in the proper condition as follows
\begin{subequations}
\begin{align}\label{Eq:Proper_Infeasible_Condition1}
&\max\{M,(G-1)KN\}\geq GKd\\
\label{Eq:Proper_Infeasible_Condition2}
&\max\{(G-1)M,N\}\geq ((G-1)K+1)d
\end{align}
\end{subequations}
which lead to two proper but infeasible cases,
\begin{subequations}
\begin{align*}%\label{Eq:Proper_Infeasible_Region_G31}
\mathrm{Case~I:}~& \left\{\begin{array}{c}
           \max\{M,(G-1)KN\}<GKd \\
           M+N\geq (GK+1)d
         \end{array}
  \right. \\
%\label{Eq:Proper_Infeasible_Region_G32}
\mathrm{Case~II:}~&\left\{\begin{array}{c}
           \max\{(G-1)M,N\}<((G-1)K+1)d \\
           M+N\geq (GK+1)d
         \end{array}
  \right.
\end{align*}
\end{subequations}
When $G=2$ and $K=1$, Case I is the same as Case II, otherwise these two cases are different.

\end{corollary}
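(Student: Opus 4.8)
The plan is to realize \eqref{Eq:Proper_Infeasible_Condition1} and \eqref{Eq:Proper_Infeasible_Condition2} as two particular instances of the symmetric compression condition \eqref{Eq:Necessary_Condition_Compress_Sym1}, and then to certify that neither is entailed by the proper condition \eqref{Eq:NS_Condition_Interference_Sym} by exhibiting configurations that are proper yet violate them.

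First I would specialize $\max\{pM,qN\}\geq(pK+q)d$ to two extreme cluster choices. Taking $p=1$ and $q=(G-1)K$---a single BS in cluster A facing all $(G-1)K$ users of the remaining cells in cluster B---yields $\max\{M,(G-1)KN\}\geq GKd$, which is \eqref{Eq:Proper_Infeasible_Condition1}. Taking $p=G-1$ and $q=1$---the $G-1$ BSs of all but one cell facing a single user of the remaining cell---yields $\max\{(G-1)M,N\}\geq((G-1)K+1)d$, which is \eqref{Eq:Proper_Infeasible_Condition2}. Since both $(p,q)$ pairs correspond to admissible disjoint clusters, the two inequalities are genuine necessary conditions inherited from \emph{Theorem \ref{Theorem:Necessary_Condition}}.

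Next I would show these conditions are strictly stronger than \eqref{Eq:NS_Condition_Interference_Sym}. For Case I, violating \eqref{Eq:Proper_Infeasible_Condition1} forces $M<GKd$ together with $(G-1)N<Gd$, i.e. $N<\tfrac{G}{G-1}d$; imposing the proper inequality $M+N\geq(GK+1)d$ then confines $N$ to the band $d<N<\tfrac{G}{G-1}d$ and $M$ to $(GK+1)d-N\leq M<GKd$. The mirror computation for Case II confines $M$ to $Kd<M<Kd+\tfrac{d}{G-1}$ with $N$ correspondingly large. Any configuration in these bands is proper but violates a necessary condition, hence is proper but infeasible; this simultaneously proves that \eqref{Eq:Proper_Infeasible_Condition1} and \eqref{Eq:Proper_Infeasible_Condition2} are not included in the proper condition. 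The main obstacle is precisely this bookkeeping: each band has width only $\tfrac{d}{G-1}$, so verifying that it contains admissible integer configurations (respecting $M\geq Kd$ and $N\geq d$) requires care and depends on the interplay between $G$ and $d$.

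Finally, for the coincidence claim I would substitute $G=2,K=1$: both \eqref{Eq:Proper_Infeasible_Condition1} and \eqref{Eq:Proper_Infeasible_Condition2} collapse to $\max\{M,N\}\geq 2d$, so the two violation regions, and hence Cases I and II, are identical. For $(G,K)\neq(2,1)$ the coefficients in the two maxima no longer match---$1$ versus $G-1$ multiplying $M$, and $(G-1)K$ versus $1$ multiplying $N$---and forcing them to agree would require $G-1=1$ and $(G-1)K=1$, i.e. $G=2,K=1$; a direct comparison of the two bands then shows the cases are distinct otherwise.
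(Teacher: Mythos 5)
Your proposal is correct and follows essentially the same route as the paper: Appendix G likewise obtains \eqref{Eq:Proper_Infeasible_Condition1} and \eqref{Eq:Proper_Infeasible_Condition2} by specializing \eqref{Eq:Necessary_Condition_Compress_Sym1} to the extreme cluster choices $(p,q)=(1,(G-1)K)$ and $(p,q)=(G-1,1)$, and certifies non-inclusion by showing that the violation region intersected with the proper region is nonempty --- phrased there as positivity of the convex quadratic $\Delta = K(p/q)^2-(G-1)K(p/q)+1$ evaluated at the two extreme ratios of $p/q$, which is exactly your band computation in disguise. The ``main obstacle'' you flag (integer admissibility inside bands of width $d/(G-1)$) is not an obstacle relative to the paper's own proof and in fact cannot be resolved in general: for $d=1$ the Case I band $d<N<\tfrac{G}{G-1}d$ contains no integers, so the paper's claim is (implicitly) a statement about the continuous region of real-valued $(M,N)$, which is precisely what your bands establish.
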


\begin{proof}
See Appendix \ref{App_Corollary3}.
\end{proof}

Many necessary conditions other than the proper condition were provided for various MIMO-IC \cite{Jafar_IAchain_Kcell,Luo2012,Tse2011,Jafar_3cell,Tse_3cell_2011,Win_IC} and MIMO-IBC
\cite{IBC_L_cell_Lovel,IBC_Guillaud}. In \cite{Luo2012,Tse2011}, a necessary condition of $\max\{M,N\}\geq 2d$ was provided for symmetric MIMO-IC, which is not difficult to be extended into symmetric MIMO-IBC as
$\max\{M,N\}\geq (K+1)d$. In \cite{Jafar_IAchain_Kcell,Jafar_3cell,Tse_3cell_2011}, the methods to derive the necessary conditions are only applicable for MIMO-IC.
Since some of the necessary conditions can be derived from the proper condition, we only compare the corresponding proper but infeasible cases, which can be obtained from the necessary conditions after some regular but
tedious derivations. For conciseness, we omit the details of the
derivation. In fact, no more than two conditions in \cite{IBC_L_cell_Lovel,Luo2012,Tse2011,IBC_Guillaud} cannot be derived from the proper condition, which leads to no more than two
proper but infeasible cases.

We list the existing and our extended results in Table
\ref{Table:Necessary_condition}. It is shown that all existing
proper but infeasible cases except those in
\cite{Jafar_IAchain_Kcell,Jafar_3cell,Tse_3cell_2011} are special
cases of \emph{Corollary \ref{Corollary:Proper_infeasible_sym}}.

For the symmetric MIMO-IC, all the necessary conditions in
\cite{Jafar_IAchain_Kcell,Jafar_3cell,Tse_3cell_2011} cannot be
derived from the proper condition. For the symmetric three-cell
MIMO-IC in \cite{Jafar_3cell,Tse_3cell_2011}, when $L=1$, the
obtained proper but infeasible cases are included in Cases I and II
of our results, when $L>1$, their necessary conditions lead to other
proper but infeasible cases, where $L$ is an arbitrary positive
integer. For the symmetric $G-$cell MIMO-IC in
\cite{Jafar_IAchain_Kcell}, there are four necessary conditions that
correspond to four proper but infeasible cases. Two of the cases are
included in Cases I and II of our results, but another two cases are
not. Consequently, when $K=1$, the necessary conditions in
\cite{Jafar_IAchain_Kcell,Jafar_3cell,Tse_3cell_2011} are more
general than ours, but the results are only applicable for MIMO-IC.

\subsection{An example} In Fig. \ref{fig:Proper_Infeasible_IBC_G4K3}, we
illustrate the feasible and infeasible regions (i.e., the
corresponding system configuration) with an example. The
feasible results from \emph{Corollary \ref{Corollary:Feasible_sym}}
are shown by horizontal lines. The extended results from
\cite{Luo2012} and from \cite{Tse2011} through \emph{Proposition \ref{Proposition:Extend}} are respectively in dash
and dot-dash
lines. It is shown that all the extended results from \cite{Luo2012} and from \cite{Tse2011} are
special cases of those from \emph{Corollary
\ref{Corollary:Feasible_sym}} since $(G-1)K$ is even in the example.

The two proper but infeasible cases in \emph{Corollary
\ref{Corollary:Proper_infeasible_sym}} are highlighted with red background. In the proper region, except for the region that has been
proved to be feasible in \emph{Corollary
\ref{Corollary:Feasible_sym}} and
that has been proved to be infeasible in
\emph{Corollary
\ref{Corollary:Proper_infeasible_sym}}, the feasibility of the
remaining region is still unknown.

%\begin{figure}[htb!]
%\centering
%\includegraphics[width=0.95\linewidth]{Proper_Infeasible_IC_New_Line_Update}
%\caption{Feasibility of MIMO-IC to support overall $GKd$ data streams, $G=4,K=1$, where $(G-1)K$ is odd.}
%\label{fig:Proper_Infeasible_IC_G3}
%\end{figure}
\begin{figure}[htb!]
\centering
\includegraphics[width=1.0\linewidth]{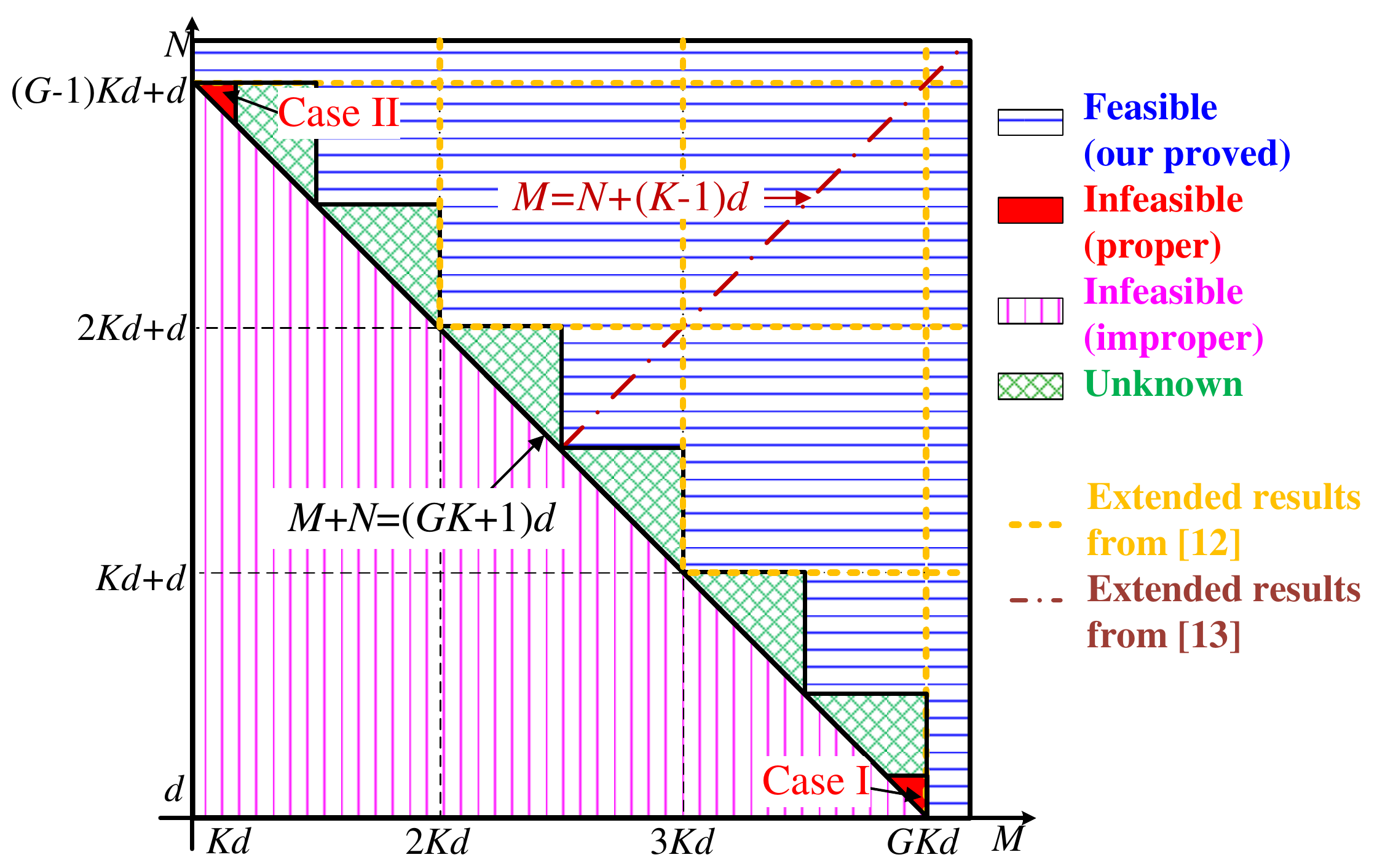}
\caption{Feasible and infeasible regions of linear IA for the
MIMO-IBC supporting overall $GKd$ data streams, $G=4,K=2$, where
$(G-1)K$ is even.} \label{fig:Proper_Infeasible_IBC_G4K3}
\end{figure}

\section{Conclusion}\label{Sec:Conclusion}

In this paper, we proposed and proved necessary conditions of linear
IA feasibility for general MIMO-IBC with constant coefficients.
A necessary condition other than  proper condition was posed
to ensure the elimination of a kind of irreducible interference. The
existence conditions of the reducible and irreducible interference
were provided, which depend on the difference in spatial dimension
between a base station and multiple users or between a user and
multiple base stations.

We proved necessary and sufficient conditions for a special class of
MIMO-IBC by finding an invertible Jacobian matrix, which include
existing results in literature as special cases. Our analysis showed
that when multiple ICIs between one BS and one user can be
eliminated either by the BS or by the user, there exists an
invertible Jacobian matrix that is a permutation matrix. By
contrast, when these ICIs must be eliminated by sharing spatial
resources between the BS and the user, the Jacobian matrix cannot be
set as a permutation matrix owing to its \emph{repeated structure}.
To deal with the conflicting requirements on the \emph{sparse
structure} and the \emph{repeated structure} of MIMO-IBC, a general
rule to construct an invertible Jacobian matrix was proposed, by
exploiting the flexibility of MIMO-IBC in assigning the spatial
sources at the users. Finally, we analyzed the feasible, proper but
infeasible, and unknown regions in antenna configuration for a
proper symmetric MIMO-IBC. The analysis is not applicable to the
MIMO-IBC with symbol extension.

\appendices
\numberwithin{equation}{section}
\section{Proof of Lemma \ref{Lemma:Sparse}}\label{App_Lemma3}
\renewcommand{\theequation}{A.\arabic{equation}}
\begin{proof}
The relationship between the equations and variables in \eqref{Eq:Constraint_ICI_free_New5} can be represented by a bipartite graph, denoted by $C=\left(\mathcal{X},\mathcal{Y},\mathcal{E}\right)$, where $\mathcal{Y}$ is the set of vertices representing the scalar equations, $\mathcal{X}$ is the set of vertices representing the scalar variables, $\mathcal{E}$ is the
set of edges and $[Y_m,X_n]\in \mathcal{E}$ iff equation $Y_m$
contains variable $X_n$, where $X_m$ and $Y_m$ are the $m$th elements
in $\mathcal{X}$ and $\mathcal{Y}$, respectively.

Hall's
theorem \cite[Theorem 3.1.11]{Grahp_Theory} indicates that in a bipartite graph, a perfect matching exists iff $|N(\mathcal{S})| \geq |\mathcal{S}|,~\forall \mathcal{S}\subseteq \mathcal{Y}$, where
$N(\mathcal{S})$ is the set of all vertices adjacent to some
elements of $\mathcal{S}$. In a general MIMO-IBC with configuration $\prod_{i=1}^{G}(M_i\times
\prod_{k=1}^{K_i}(N_{i_k},d_{i_k}))$, $|\mathcal{S}|=\sum_{(i,j)\in\mathcal{I}}d_j\sum_{k\in
\mathcal{K}_i}d_{i_k}$ and
$|N(\mathcal{S})|=\sum_{j:(i,j)\in \mathcal{I}}
\left(M_j-d_j\right)d_j+ \sum_{i:(i,j)\in
\mathcal{I}}\sum_{k\in
\mathcal{K}_i}\left(N_{i_k}-d_{i_k}\right)d_{i_k}$, therefore,
$|N(\mathcal{S})| \geq |\mathcal{S}|$ is actually the proper
condition for the MIMO-IBC. Consequently, according to
Hall's theorem we know that when the MIMO-IBC is proper and $L_v=L_e$,
there exists a perfect matching in the bipartite graph.

Denote a perfect matching in $\mathcal{C}$ as $\mathcal{W}$, it is clear that $\mathcal{W}\subseteq \mathcal{E}$.
The perfect matching $\mathcal{W}$ can be represented by an adjacency matrix  $\pmb{D}^{\mathcal{W}}$, that represents which vertices in one set of a graph are connected to the vertices in the other set.
Then, the $(m,n)$th element of $\pmb{D}^{\mathcal{W}}$ is
\begin{align}\label{Eq:Pefect_Matching}
  D_{m,n}^{\mathcal{W}}=\left\{\begin{array}{ll}
                   1,&~\forall [Y_m,X_n]\in \mathcal{W} \\
                   0,&~\forall [Y_m,X_n]\notin \mathcal{W}
                 \end{array}
  \right.
\end{align}
In the perfect matching, the two sets of vertices have a one-to-one
mapping relationship. Therefore, $\pmb{D}^{\mathcal{W}}$ is a
permutation matrix such that the elements of $\pmb{D}^{\mathcal{W}}$
satisfy $\sum_{m=1}^{L_e}D_{m,n}^{\mathcal{W}}=1$ and
$\sum_{n=1}^{L_e}D_{m,n}^{\mathcal{W}}=1$.

From \eqref{Eq:Constraint_ICI_free_SingleStream}, we know that the sparse property of Jacobian matrix indicates that the $(m,n)$th element satisfies
\begin{align}\label{Eq:jacobi}
  J_{m,n}=\frac{\partial Y_m}{\partial X_n}= 0,&~\forall [Y_m,X_n]\notin \mathcal{E}
\end{align}
Since $\mathcal{W} \subseteq \mathcal{E}$, from
\eqref{Eq:Pefect_Matching} we have $D_{m,n}^{\mathcal{W}}=0, \forall
[Y_m,X_n]\notin \mathcal{E}$. Compared with \eqref{Eq:jacobi}, we
know that $\pmb{D}^{\mathcal{W}}$ has the same \emph{sparse
structure} as the Jacobian matrix for a proper system with
$L_v=L_e$.
\end{proof}

\section{Proof of Lemma \ref{Lemma:Repeated}}\label{App_Lemma1}
\numberwithin{equation}{section}
\renewcommand{\theequation}{B.\arabic{equation}}
\begin{proof}
When $d_{i_k}=d$, and $M_j$ and $N_{i_k}$ are
divisible by $d$, \eqref{Eq:Constraint_ICI_free_New5} can be
further rewritten as
\begin{align}\label{Eq:Constraint_ICI_free_New5d}
&\pmb{F}_{i_k,j_l}(\bar{\pmb{H}}_{0})
=\bar{\pmb{H}}_{0~i_k,j}^{(2)}\bar{\pmb{V}}_{j_l} +\bar{\pmb{U}}_{i_k}^{H}\bar{\pmb{H}}_{0~i_k,j_l}^{(3)}\\
=&\sum_{t=1}^{M_j/d-K_j}\bar{\pmb{H}}_{0~i_k,j}^{(2),t}\bar{\pmb{V}}_{j_l(t)}
+\sum_{s=1}^{N_{i_k}/d-1}\bar{\pmb{U}}^{H}_{i_k(s)}\bar{\pmb{H}}_{0~i_k,j_l}^{(3),s}
=\pmb{0},~\forall i\neq j \nonumber
\end{align}
where $\bar{\pmb{V}}_{j_l(t)}$ and $\bar{\pmb{H}}_{0~i_k,j}^{(2),t}$
are the $t$th block of size $d\times d$ in $\bar{\pmb{V}}_{j_l}$ and
$\bar{\pmb{H}}_{0~i_k,j}^{(2)}$, $\bar{\pmb{U}}_{i_k(s)}$ and
$\bar{\pmb{H}}_{0~i_k,j_l}^{(3),s}$ are the $s$th block of size
$d\times d$ in $\bar{\pmb{U}}_{i_k}$ and
$\bar{\pmb{H}}_{0~i_k,j_l}^{(3)}$.

Then, from \eqref{Eq:Jacobian_Matrix_IBC_V} and \eqref{Eq:Jacobian_Matrix_IBC_U}, the elements of $\pmb{J}(\bar{\pmb{H}}_{0})$ become
\begin{subequations}
\begin{align}\label{Eq:Jacobian_Matrix_IBC_Vd}
\frac{\partial\mathrm{vec}\{\pmb{F}_{i_k,j_l}(\bar{\pmb{H}}_{0})\}}
  {\partial\mathrm{vec}\{\bar{\pmb{V}}_{m_n(t)}\}}&=
\left\{
\begin{array}{ll}
\bar{\pmb{H}}_{0~i_k,j}^{(2),t}\otimes\pmb{I}_{d},&~\forall m_n=j_l\\
\pmb{0}_{d^2},&~\forall m_n\neq j_l
\end{array}
  \right.
  \\
\label{Eq:Jacobian_Matrix_IBC_Ud}
  \frac{\partial\mathrm{vec}\{\pmb{F}_{i_k,j_l}(\bar{\pmb{H}}_{0})\}}
  {\partial\mathrm{vec}\{\bar{\pmb{U}}^{H}_{m_n(s)}\}}&=
\left\{
\begin{array}{ll}
\pmb{I}_{d} \otimes (\bar{\pmb{H}}_{0~i_k,j_l}^{(3),s})^{T},  &~\forall m_n=i_k\\
\pmb{0}_{d^2},&~\forall  m_n\neq i_k
\end{array}
  \right.
\end{align}
\end{subequations}
where $t=1,\cdots,M_j/d-K_j$, $s=1,\cdots,N_{i_k}/d-1$.

When let
$\bar{\pmb{H}}_{0~i_k,j}^{(2),t}=\bar{h}_{0~i_k,j}^{(2),t}\pmb{I}_{d}$ and
$\bar{\pmb{H}}_{0~i_k,j_l}^{(3),s}=\bar{h}_{0~i_k,j_l}^{(3),s}\pmb{I}_{d}$,
where $\bar{h}_{0~i_k,j}^{(2),t}$ and $\bar{h}_{0~i_k,j_l}^{(3),s}$ are the
$(1,1)$th elements of $\bar{\pmb{H}}_{0~i_k,j}^{(2),t}s$ and
$\bar{\pmb{H}}_{0~i_k,j_l}^{(3),s}$, respectively, we have
$\bar{\pmb{H}}_{0~i_k,j}^{(2),t}\otimes\pmb{I}_{d}=\bar{h}_{0~i_k,j}^{(2),t}\pmb{I}_{d^2}$
and $\pmb{I}_{d} \otimes
\bar{\pmb{H}}_{0~i_{K_i},j_l}^{(3),s}=\bar{h}_{0~i_k,j_l}^{(3),s}\pmb{I}_{d^2}$.
As a result, the Jacobian matrix for a MIMO-IBC with configuration $\prod_{i=1}^{G}(M_i\times
\prod_{k=1}^{K_i}(N_{i_k},d))$ can be rewritten as
$\pmb{J}(\bar{\pmb{H}}_{0})=\tilde{\pmb{J}}(\bar{\pmb{H}}_{0})\otimes \pmb{I}_{d^2}$, where
$\tilde{\pmb{J}}(\bar{\pmb{H}}_{0})$ has the same pattern of nonzero elements as the
Jacobian matrix for a MIMO-IBC with configuration $\prod_{i=1}^{G}(M_i/d \times
\prod_{k=1}^{K_i}(N_{i_k}/d,1))$. Therefore, once an invertible matrix $\tilde{\pmb{J}}(\bar{\pmb{H}}_{0})$ is obtained,
an invertible matrix $\pmb{J}(\bar{\pmb{H}}_{0})$ is obtained immediately. Moreover, if $\tilde{\pmb{J}}(\bar{\pmb{H}}_{0})$ is a permutation matrix, $\pmb{J}(\bar{\pmb{H}}_{0})$ is also a permutation matrix.
\end{proof}

\section{Proof of Lemma \ref{Lemma:CanNot_Extended}}\label{App_Lemma5}
\begin{proof}
In a general MIMO-IBC with configuration $\prod_{i=1}^{G}(M_i\times
\prod_{k=1}^{K_i}(N_{i_k},d_{i_k}))$, for an arbitrary data stream
of MS$_{i_k}$, the number of ICIs it experienced is an element in
$\phi_{i}$, and the number of variables in its effective receive
vector is $N_{i_k}-d_{i_k}$. When $N_{i_k}-d_{i_k} \leq
\sum_{j=1,j\neq i}^{G}d_j$ and $N_{i_k}-d_{i_k} \notin \phi_i$,
there will exist one BS (say BS$_j$) where the number of variables
in the effective receive vector of MS$_{i_k}$ is not large enough to
cancel all the $d_j$ ICIs generated from BS$_j$, denoted by
${Y}_{1},\cdots,{Y}_{d_j}$. When $N_{i_k}-d_{i_k}>0$, the effective
receive vector of MS$_{i_k}$ is able to cancel a part of the ICIs
from BS$_j$, which means that BS$_{j}$ cannot avoid all the ICIs to
the data stream of MS$_{i_k}$ considering $L_v=L_e$. Consequently,
these conditions imply that the $d_j$ ICIs from BS$_j$ to the data
stream of  MS$_{i_k}$ need to be jointly eliminated by BS$_j$ and
MS$_{i_k}$, rather than solely by BS$_j$ or MS$_{i_k}$.

We first show the structure of a Jacobian matrix if it is set as a permutation matrix $\pmb{D}^{\mathcal{W}}$.
Denote the $t$th variable of the $m$th transmit vector of BS$_j$ as $X_{m(t)}$, where $t=1,\cdots,M_j-d_j$ and $m=1,\cdots,d_j$. If an effective transmit variable $X_{m(t)}$ is assigned to avoid the ICI $Y_{m}$ in the perfect matching, from \eqref{Eq:Pefect_Matching}, we know that
setting $\pmb{J}(\bar{\pmb{H}}_0)=\pmb{D}^{\mathcal{W}}$ requires
$\partial Y_{m}/\partial X_{m(t)}=1$, otherwise $\partial Y_{m}/\partial X_{m(t)}=0$. Since ${Y}_{1},\cdots,{Y}_{d_j}$ need to be eliminated by BS$_j$ and MS$_{i_k}$ jointly, a perfect matching (that corresponds a permutation matrix that satisfies the \emph{sparse structure}) requires that
some of $\partial Y_{1}/\partial X_{1(t)},\cdots,\partial Y_{d_j}/\partial X_{d_j(t)}$ are ones and others are zeros.

According to the \emph{repeated structure} of the Jacobian matrix
shown in \eqref{Eq:Jacobian_Matrix_IBC_V} and
\eqref{Eq:Jacobian_Matrix_IBC_VV}, we have $\partial Y_{1}/\partial
X_{1(t)}=\cdots=\partial Y_{d_j}/\partial X_{d_j(t)}$. As a result,
any permutation matrix that satisfies the \emph{sparse structure} of
the Jacobian matrix cannot satisfy its \emph{repeated structure}.
Therefore, there does not exist a Jacobian matrix that is a
permutation matrix.
\end{proof}

\section{Proof of Corollary \ref{Corollary:Proper}}\label{App_Corollary1}
\numberwithin{equation}{section}
\renewcommand{\theequation}{D.\arabic{equation}}

\begin{proof}
For a symmetric MIMO-IBC, \eqref{Eq:Necessary_Condition_Interference}
becomes
\begin{align}
\label{Eq:IA_Condition_333}
\left(M-Kd\right)K_{\mathrm{T}}+\left(N-d\right)K_{\mathrm{R}}\geq
\sum_{(i,j)\in \mathcal{I}}K|\mathcal{K}_{i}|d
%\sum_{j\in \mathcal{I}_{\mathrm{T}}}\left(M-Kd\right)K+
%\sum_{i\in \mathcal{I}_{\mathrm{R}}}\left(N-d\right)|\mathcal{K}_{i}|\geq
%\sum_{(i,j)\in \mathcal{I}}K|\mathcal{K}_{i}|d
\end{align}
where $K_{\mathrm{T}}=\sum_{j\in \mathcal{I}_{\mathrm{T}}}K$,
$K_{\mathrm{R}}=\sum_{i\in
\mathcal{I}_{\mathrm{R}}}|\mathcal{K}_{i}|$,
$\mathcal{I}_{\mathrm{T}}=\{j|(i,j)\in \mathcal{I}\}$ and
$\mathcal{I}_{\mathrm{R}}=\{i|(i,j)\in \mathcal{I}\}$.
$\mathcal{I}_{\mathrm{T}}$ and $\mathcal{I}_{\mathrm{R}}$ denote the
index sets of the cells  in $\mathcal{I}$ that generate ICI and
suffer from the ICI, $K_{\mathrm{T}}$ and $K_{\mathrm{R}}$ are the
total numbers of users in the cells with indices in
$\mathcal{I}_{\mathrm{T}}$ and $\mathcal{I}_{\mathrm{R}}$,
respectively.

Define $\tilde{\mathcal{I}}\triangleq\{(i,j)|i\neq j,~\forall
j\in\mathcal{I}_{\mathrm{T}},~i\in\mathcal{I}_{\mathrm{R}}\}$, it is
easy to know $\mathcal{I}\subseteq
\tilde{\mathcal{I}}$.\footnote{For example, when
$\mathcal{I}=\{(1,3),(2,4)\}$, we have
$\mathcal{I}_{\mathrm{R}}=\{1,2\}$ and
$\mathcal{I}_{\mathrm{T}}=\{3,4\}$. From the definition of
$\tilde{\mathcal{I}}$, we know
$\tilde{\mathcal{I}}=\{(1,3),(1,4),(2,3),(2,4)\}$. Obviously,
$\mathcal{I}\subseteq \tilde{\mathcal{I}}$.} Therefore, the
right-hand side of \eqref{Eq:IA_Condition_333} satisfies
\begin{align}\label{Eq:Cardinality_I}
&\sum_{(i,j)\in \mathcal{I}}K|\mathcal{K}_{i}|d\leq \sum_{(i,j)\in \tilde{\mathcal{I}}}K|\mathcal{K}_{i}|d \\
=&
\sum_{j\in \mathcal{I}_{\mathrm{T}}}K\sum_{i\in\mathcal{I}_{\mathrm{R}},i\neq j}
|\mathcal{K}_{i}|=\sum_{i\in\mathcal{I}_{\mathrm{R}}}
|\mathcal{K}_{i}|\sum_{j\in \mathcal{I}_{\mathrm{T}},j\neq i}K \nonumber
\end{align}

Since $\mathcal{I}_{\mathrm{T}}\subseteq \{1,\ldots,G\}$, we have
$\sum_{i\in\mathcal{I}_{\mathrm{R}}} |\mathcal{K}_{i}|\sum_{j\in
\mathcal{I}_{\mathrm{T}},j\neq i}K\leq
\sum_{i\in\mathcal{I}_{\mathrm{R}}}|\mathcal{K}_{i}| \sum_{j=1,i\neq
j}^{G}K =(G-1)KK_{\mathrm{R}}$. Since
$\mathcal{I}_{\mathrm{R}}\subseteq \{1,\ldots,G\}$ and
$|\mathcal{K}_{i}|\leq K$, we have $\sum_{j\in
\mathcal{I}_{\mathrm{T}}}K\sum_{i\in\mathcal{I}_{\mathrm{R}},i\neq
j}|\mathcal{K}_{i}|\leq \sum_{j\in
\mathcal{I}_{\mathrm{T}}}K\sum_{i=1,i\neq j}^{G}K
=(G-1)KK_{\mathrm{T}} $. After substituting into
\eqref{Eq:Cardinality_I}, we obtain an upper-bound of the right-hand
side of \eqref{Eq:IA_Condition_333} as
\begin{align}\label{Eq:Cardinality_I3}
\sum_{(i,j)\in \mathcal{I}}K|\mathcal{K}_{i}|d\leq (G-1)Kd \min\{K_{\mathrm{R}},K_{\mathrm{T}}\}
\end{align}

Because $K_{\mathrm{T}} \geq \min\{K_{\mathrm{R}},K_{\mathrm{T}}\}$
and $K_{\mathrm{R}} \geq \min\{K_{\mathrm{R}},K_{\mathrm{T}}\}$, the
left-hand side of \eqref{Eq:IA_Condition_333} satisfies
\begin{align}\label{Eq:IA_Equality0}
&\left(M-Kd\right)K_{\mathrm{T}}+
\left(N-d\right)K_{\mathrm{R}}\nonumber\\
\geq &\left(M+N-(K+1)d\right)\min\{K_{\mathrm{R}},K_{\mathrm{T}}\}
\end{align}
From \eqref{Eq:NS_Condition_Interference_Sym}, we have
$M+N-(K+1)d\geq (G-1)Kd$. Substituting this inequity into
\eqref{Eq:IA_Equality0}, we obtain a lower-bound of the left-hand
side of \eqref{Eq:IA_Condition_333} as
\begin{align}\label{Eq:IA_Equality1}
\left(M-Kd\right)K_{\mathrm{T}}+
\left(N-d\right)K_{\mathrm{R}}\geq (G-1)Kd\min\{K_{\mathrm{R}},K_{\mathrm{T}}\}
\end{align}
Consider \eqref{Eq:Cardinality_I3} and \eqref{Eq:IA_Equality1}, we
obtain \eqref{Eq:IA_Condition_333}.
\end{proof}

\section{Proof of Corollary \ref{Corollary:Feasible_sym}}\label{App_Corollary2}
\begin{proof}
For notation simplicity, we define $M_p=(K+p)d$ and
$N_p=((G-1)K+1-p)d$ here. To prove the MIMO-IBC with configuration
$(M\times (N,d)^{K})^{G}$ where $M$ and $N$ satisfy $M\geq
M_p,~N\geq N_p,~\exists p$ to be feasible, we first prove its IA to
be feasible when $M=M_p,~N= N_p,~\exists p$.

Since $M_p+N_p=(GK+1)d,~\forall p\in\{0,\cdots,(G-1)K\}$, according
to \emph{Corollary \ref{Corollary:Proper}}, we know that the
MIMO-IBC with configuration $(M_p\times (N_p,d)^{K})^{G}$ is proper.
Because $M_p\geq Kd,~N_p\geq d$ and both $M_p$ and $N_p$ are
divisible by $d$, according to \emph{Theorem
\ref{Theorem:NS_conditions}}, the IA for the MIMO-IBC with
configuration $(M_p\times (N_p,d)^{K})^{G}$ is feasible.
\end{proof}

\section{Proof of Proposition \ref{Proposition:Extend}}\label{App_Lemma4}
\numberwithin{equation}{section}
\renewcommand{\theequation}{F.\arabic{equation}}
\begin{proof}
For conciseness, we use $\pmb{J}_{\mathrm{IBC}}(\bar{\pmb{H}}_0)$
and $\pmb{J}_{\mathrm{IC}}(\bar{\pmb{H}}_0)$ to denote the Jacobian
matrix of the MIMO-IBC with configuration $\prod_{i=1}^{G}(M_i\times
\prod_{k=1}^{K_i}(N_{i_k},1))$ and that of the MIMO-IC with
configuration $\prod_{i=1}^{G}(M_i\times N_i,K_i)$, respectively. To
show how to obtain an invertible
$\pmb{J}_{\mathrm{IBC}}(\bar{\pmb{H}}_0)$ from an invertible
Jacobian matrix $\pmb{J}_{\mathrm{IC}}(\bar{\pmb{H}}_0)$, we first
show the structure of $\pmb{J}_{\mathrm{IC}}(\bar{\pmb{H}}_0)$.

In the MIMO-IC, \eqref{Eq:Constraint_ICI_free_Linear} can be rewritten as
\begin{align}\label{Eq:Constraint_ICI_free_New_IC}
&{F}_{i_k,j_l}^{\mathrm{IC}}(\bar{\pmb{H}}_{0})
=\bar{\pmb{h}}_{0~i_k,j}^{(2)}\bar{\pmb{v}}_{j_l}+ \bar{\pmb{u}}_{i_k}^{H}\bar{\pmb{h}}_{0~i,j_l}^{(3)}=0,~\forall i\neq j
\end{align}
where ${F}_{i_k,j_l}(\cdot)$
represents the ICI from the
$l$th data stream transmitted from BS$_j$ to the $k$th data
stream received at MS$_i$.

From \eqref{Eq:Jacobian_Matrix_IBC_V},
\eqref{Eq:Jacobian_Matrix_IBC_U} and
\eqref{Eq:Constraint_ICI_free_New_IC}, we can obtain the elements of
the Jacobian matrix as follows,
\begin{subequations}
\begin{align}\label{Eq:Jacobian_Matrix_IC_V}
\frac{\partial{F}_{i_k,j_l}^{\mathrm{IC}}(\bar{\pmb{H}}_{0})}
{\partial\bar{\pmb{v}}_{m_n}}&=
\left\{
\begin{array}{ll}
\bar{\pmb{h}}_{0~i_k,j}^{(2)},&~\forall m_n=j_l\\
\pmb{0}_{1\times (M_m-K_m)}, &~\forall m_n\neq j_l
\end{array}
  \right.
  \\
\label{Eq:Jacobian_Matrix_IC_U}
  \frac{\partial{F}_{i_k,j_l}^{\mathrm{IC}}(\bar{\pmb{H}}_{0})}
  {\partial\bar{\pmb{u}}^{H}_{m_n}}&=
\left\{
\begin{array}{ll}
(\bar{\pmb{h}}_{0~i,j_l}^{(3)})^{T},  &~\forall m_n=i_k\\
\pmb{0}_{1\times (N_m-K_m)}, &~\forall m_n\neq i_k
\end{array}
  \right.
\end{align}
\end{subequations}
%where $t=1,\cdots,M_j-K_j$ and $s=1,\cdots,N_i-K_i$.

In \eqref{Eq:Jacobian_Matrix_IC_V} and \eqref{Eq:Jacobian_Matrix_IC_U},
one can see that the nonzero elements satisfy
\begin{subequations}
\begin{align}
\label{Eq:Jacobian_Matrix_Repetition_V}
&\frac{\partial F_{i_k,j_1}^{\mathrm{IC}}(\bar{\pmb{H}}_{0})}
{\partial\bar{\pmb{v}}_{j_1}}=\cdots =\frac{\partial{F}_{i_k,j_{K_j}}^{\mathrm{IC}}(\bar{\pmb{H}}_{0})}
{\partial\bar{\pmb{v}}_{j_{K_j}}}=\bar{\pmb{h}}_{0~i_k,j}^{(2)}\\
\label{Eq:Jacobian_Matrix_Repetition_U}
&\frac{\partial{F}_{i_1,j_l}^{\mathrm{IC}}(\bar{\pmb{H}}_{0})}
{\partial\bar{\pmb{u}}^{H}_{i_1}}=\cdots =\frac{\partial{F}_{i_{K_i},j_l}^{\mathrm{IC}}(\bar{\pmb{H}}_{0})}
{\partial\bar{\pmb{u}}^{H}_{i_{K_i}}}=\bar{\pmb{h}}_{0~i,j_l}^{(3)}
\end{align}
\end{subequations}

Comparing \eqref{Eq:Jacobian_Matrix_IBC_D1_V} and
\eqref{Eq:Jacobian_Matrix_IBC_D1_U} with
\eqref{Eq:Jacobian_Matrix_IC_V} and \eqref{Eq:Jacobian_Matrix_IC_U},
it is easy to find that when $N_{i_k}-1=N_i-K_i,~\forall i,k$, $\pmb{J}_{\mathrm{IBC}}(\bar{\pmb{H}}_0)$ has the same nonzero element pattern
with $\pmb{J}_{\mathrm{IC}}(\bar{\pmb{H}}_0)$.
Moreover, comparing \eqref{Eq:Jacobian_Matrix_Repetition_V1} and
\eqref{Eq:Jacobian_Matrix_Repetition_V}, we can see that the
repeated nonzero elements in
$\pmb{J}^{V}_{\mathrm{IBC}}(\bar{\pmb{H}}_0)$ have the same pattern
as those in $\pmb{J}^{V}_{\mathrm{IC}}(\bar{\pmb{H}}_0)$. By
contrast, comparing \eqref{Eq:Jacobian_Matrix_Repetition_U1} and
\eqref{Eq:Jacobian_Matrix_Repetition_U}, we can see that the nonzero
elements of $\pmb{J}^{U}_{\mathrm{IBC}}(\bar{\pmb{H}}_0)$ are
generic but those of $\pmb{J}^{U}_{\mathrm{IC}}(\bar{\pmb{H}}_0)$
are not since they are repeated. This suggests that the elements of
$\pmb{J}_{\mathrm{IBC}}(\bar{\pmb{H}}_0)$ are more flexible to be
set into any value than that of
$\pmb{J}_{\mathrm{IC}}(\bar{\pmb{H}}_0)$. Hence, if there exists an
invertible $\pmb{J}_{\mathrm{IC}}(\bar{\pmb{H}}_0)$, we can obtain
an invertible $\pmb{J}_{\mathrm{IBC}}(\bar{\pmb{H}}_0)$ by setting
$\pmb{J}_{\mathrm{IBC}}(\bar{\pmb{H}}_0)=\pmb{J}_{\mathrm{IC}}(\bar{\pmb{H}}_0)$.
%Since $\pmb{J}_{\mathrm{IC}}(\bar{\pmb{H}}_0)$ is a permutation matrix,
%$\pmb{J}_{\mathrm{IBC}}(\bar{\pmb{H}}_0)$ is also a permutation matrix.
%Therefore, there exist a Jacobian matrix that is a permutation matrix when each cell supports the
%same number of users, and both $M_i$ and $N_{i_k}-1$ are divisible
%by $K$.
\end{proof}

\section{Proof of Corollary \ref{Corollary:Proper_infeasible_sym}}\label{App_Corollary3}
\numberwithin{equation}{section}
\renewcommand{\theequation}{G.\arabic{equation}}
\begin{proof}
If $M$ and $N$ do not satisfy
\eqref{Eq:Necessary_Condition_Compress_Sym1}, we have
$\max\{pM,~qN\}< pKd + qd$, i.e., $pM < pKd + qd$ and $qN < pKd +
qd$. Considering $M+N\geq (GK+1)d$ in
\eqref{Eq:NS_Condition_Interference_Sym}, we can obtain the proper
but infeasible region, which satisfies
\begin{subequations}
\begin{align}
\label{Eq:Proper_Infeasible_Region1}
&M<\frac{pK+q}{p}d,~N<\frac{pK+q}{q}d\\
\label{Eq:Proper_Infeasible_Region2}
&M+N\geq (GK+1)d
\end{align}
\end{subequations}

From \eqref{Eq:Proper_Infeasible_Region1}, we have
$M+N<(pK+q)(1/p+1/q)d$. From \eqref{Eq:Proper_Infeasible_Region2},
we have $M+N\geq(GK+1)d$. Therefore, only if $(pK+q)(1/p+1/q)>
GK+1$, the proper but infeasible region will not be empty. It is not hard
to show that in the nonempty region, the values of $p,q$
need to satisfy the following quadratic inequality,
\begin{align}\label{Eq:Proper_Infeasible_Equation}
\Delta \triangleq K\left(\frac{p}{q}\right)^2-(G-1)K\frac{p}{q}+ 1>
0
\end{align}
%where the discriminant is $\Delta = (G-1)^2K^2-4K$.

$\Delta$ is a convex
function. Therefore, if \eqref{Eq:Proper_Infeasible_Equation} does
not hold when the value of $p/q$ achieves its minimum or maximum, it
will not hold for other values of $p$ and $q$. To find the cases that are proper but infeasible, we first check
whether \eqref{Eq:Proper_Infeasible_Equation} is satisfied when
$p/q$ achieves its minimum or maximum.

Since in \eqref{Eq:Necessary_Condition_Compress},
$\mathcal{I}_{\mathrm{A}},\mathcal{I}_{\mathrm{B}} \subseteq
\{1,\cdots,G\}$ and
$\mathcal{I}_{\mathrm{A}}\cap\mathcal{I}_{\mathrm{B}}=\varnothing$,
we have
$\mathcal{I}_{\mathrm{A}}\cup\mathcal{I}_{\mathrm{B}}\subseteq
\{1,\cdots,G\}$ and
$\mathcal{I}_{\mathrm{A}}\cap\mathcal{I}_{\mathrm{B}}=\varnothing$.
Therefore,
$|\mathcal{I}_{\mathrm{A}}|\leq G-1$,
$|\mathcal{I}_{\mathrm{B}}|\leq G-1$ and
$|\mathcal{I}_{\mathrm{A}}|+|\mathcal{I}_{\mathrm{B}}|\leq G$. From
the definition of $p$ and $q$ after \eqref{Eq:Constraint_ICI_free_Multiple}, we can derive that,
\begin{align}
\label{Eq:PQ_Region}
  \left\{\begin{array}{c}
           1\leq p \leq G-1,~1\leq q \leq (G-1)K\\
           Kp+q\leq GK
         \end{array}
  \right.
\end{align}
From \eqref{Eq:PQ_Region}, it is
easy to show that when $p=1,~q=(G-1)K$, $p/q=1/((G-1)K)$ achieves the
minimum, while when $p=(G-1),~q=1$, $p/q=(G-1)$ is the maximum.

When $p=1,~q=(G-1)K$, we have $\Delta=1/((G-1)^2K)$. Hence, \eqref{Eq:Proper_Infeasible_Equation} holds  for all $G,K$. Substituting the values of $p,~q$ into \eqref{Eq:Necessary_Condition_Compress_Sym1}, we have $\max\{M,(G-1)KN\}\geq GKd$, i.e., \eqref{Eq:Proper_Infeasible_Condition1}, which is one necessary condition that cannot be derived from the proper condition.

When $p=(G-1),~q=1$, we have $\Delta=1>0$.
Consequently, \eqref{Eq:Proper_Infeasible_Equation} still holds for all $G,K$. Substituting the values of $p,~q$ into \eqref{Eq:Necessary_Condition_Compress_Sym1}, we have $\max\{(G-1)M,N\}\geq((G-1)K+1)d$, i.e., \eqref{Eq:Proper_Infeasible_Condition2}, which is another necessary condition that cannot be derived from the proper condition.
\end{proof}

% use section* for acknowledgement
\section*{Acknowledgment}
The authors wish to thank Prof. Zhi-Quan (Tom) Luo for his constructive discussions. We also thank
the anonymous reviewers for providing a number of helpful comments.

% Can use something like this to put references on a page
% by themselves when using endfloat and the captionsoff option.
\ifCLASSOPTIONcaptionsoff
  \newpage
\fi

% trigger a \newpage just before the given reference
% number - used to balance the columns on the last page
% adjust value as needed - may need to be readjusted if
% the document is modified later
%\IEEEtriggeratref{8}
% The "triggered" command can be changed if desired:
%\IEEEtriggercmd{\enlargethispage{-5in}}

% references section

% can use a bibliography generated by BibTeX as a .bbl file
% BibTeX documentation can be easily obtained at:
% http://www.ctan.org/tex-archive/biblio/bibtex/contrib/doc/
% The IEEEtran BibTeX style support page is at:
% http://www.michaelshell.org/tex/ieeetran/bibtex/
%\bibliographystyle{IEEEtran}
% argument is your BibTeX string definitions and bibliography database(s)
%\bibliography{IEEEabrv,../bib/paper}
%
% <OR> manually copy in the resultant .bbl file
% set second argument of \begin to the number of references
% (used to reserve space for the reference number labels box)
\bibliographystyle{IEEEtran}

%\bibliography{IEEEabrv,ref_IA}
%\bibliography{ref}

% biography section
%
% If you have an EPS/PDF photo (graphicx package needed) extra braces are
% needed around the contents of the optional argument to biography to prevent
% the LaTeX parser from getting confused when it sees the complicated
% \includegraphics command within an optional argument. (You could create
% your own custom macro containing the \includegraphics command to make things
% simpler here.)
%\begin{IEEEbiography}[{\includegraphics[width=1in,height=1.25in,clip,keepaspectratio]{mshell}}]{Michael Shell}
% or if you just want to reserve a space for a photo:

\begin{biography}[{\includegraphics[width=1in,height=1.25in,clip,keepaspectratio]{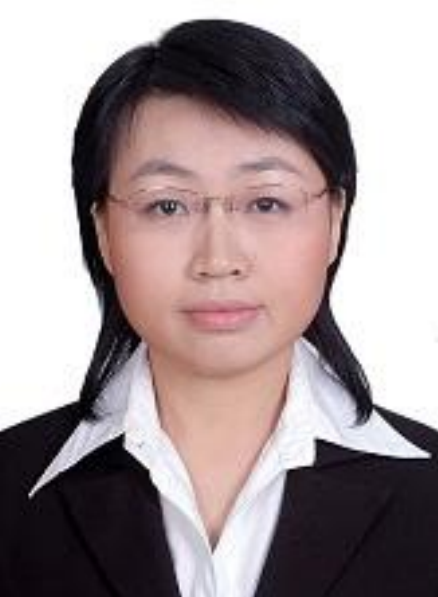}}]
{Tingting Liu} (S'09--M'11) received the B.S. and Ph.D. degrees in signal
and information processing from Beihang University, Beijing, China,
in 2004 and 2011, respectively.

From December 2008 to January 2010, she was a Visiting Student with
the School of Electronics and Computer Science, University of
Southampton, Southampton, U.K. She is currently a Postdoctoral Research
Fellow with the School of Electronics and Information Engineering,
Beihang University. Her research interests include wireless communications and signal processing, the degrees of freedom (DoF) analysis and interference alignment transceiver design in interference channels, energy efficient transmission strategy design and Joint transceiver design for multicarrier and multiple-input multiple-output communications. She received the awards of the \emph{2012 Excellent Doctoral Thesis in Beijing} and the \emph{2012 Excellent Doctoral Thesis in Beihang University}.
\end{biography}

\begin{biography}[{\includegraphics[width=1in,height=1.25in,clip,keepaspectratio]{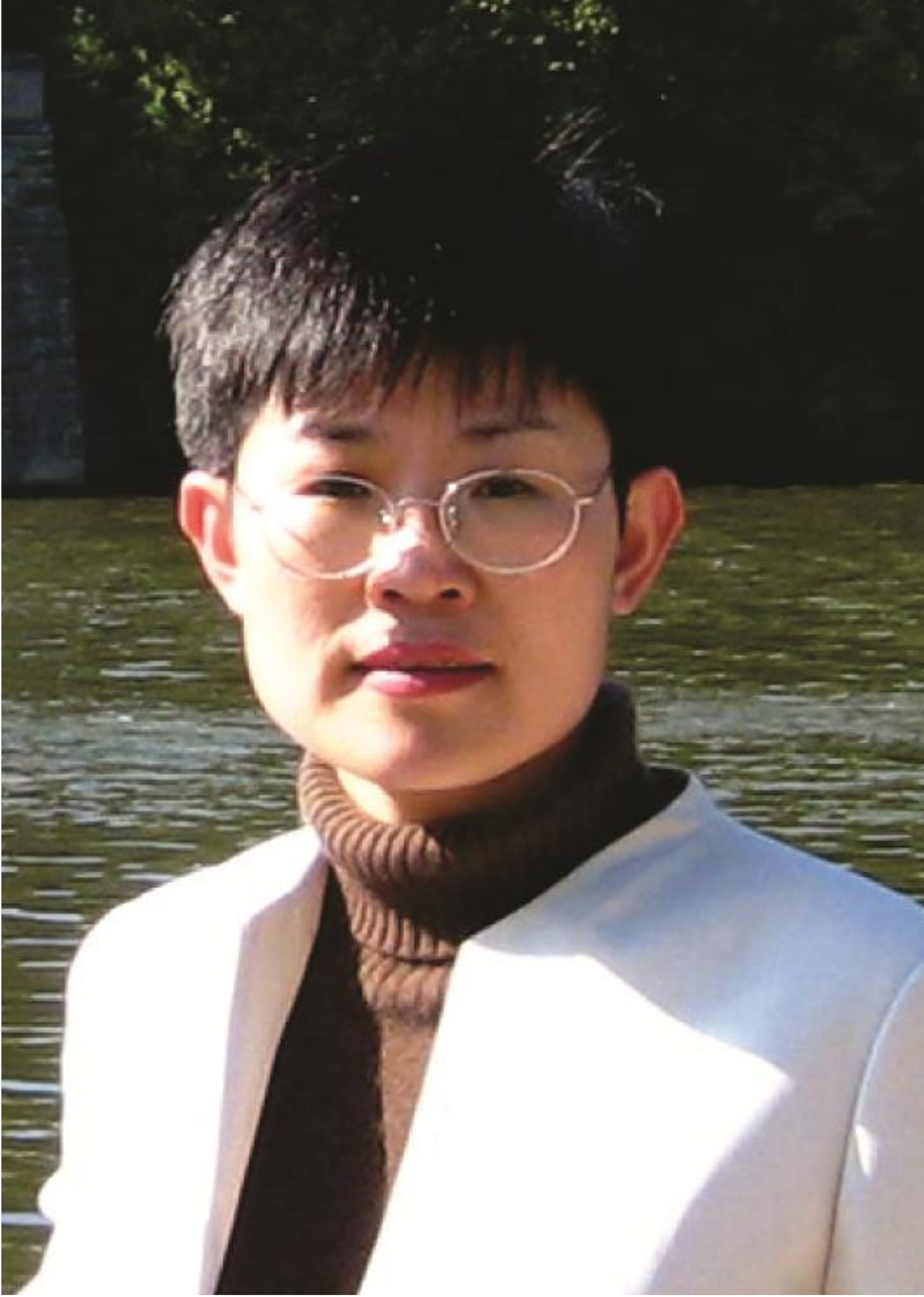}}]
{Chenyang Yang} (SM'08) received the M.S.E and Ph.D. degrees in
electrical engineering from Beihang University (formerly Beijing
University of Aeronautics and Astronautics), Beijing, China, in 1989
and 1997, respectively.

She is currently a Full Professor with the School of Electronics and
Information Engineering, Beihang University. She has published
various papers and filed many patents in the fields of signal
processing and wireless communications. Her recent research interests include network MIMO (CoMP, coordinated multi-point), energy efficient transmission (GR, green radio) and interference alignment.

Prof. Yang was the chair of Beijing chapter of IEEE Communications Society during 2008 to 2012. She has served as a Technical Program Committee Member for
many IEEE conferences, such as the IEEE International Conference on
Communications and the IEEE Global Telecommunications Conference.
She currently serves as an Associate Editor for \textsc{IEEE
Transactions on Wireless Communications},
the Membership Development Committee Chair of Asia Pacific Board of IEEE Communications Society,
an Associate
Editor-in-Chief of the \emph{Chinese Journal of Communications}, and
an Associate Editor-in-Chief of the \emph{Chinese Journal of Signal
Processing}. She was nominated as an Outstanding Young Professor of
Beijing in 1995 and was supported by the First Teaching and Research
Award Program for Outstanding Young Teachers of Higher Education
Institutions by Ministry of Education (P.R.C. ``TRAPOYT'') during
1999 to 2004.
\end{biography}

% You can push biographies down or up by placing
% a \vfill before or after them. The appropriate
% use of \vfill depends on what kind of text is
% on the last page and whether or not the columns
% are being equalized.

%\vfill

% Can be used to pull up biographies so that the bottom of the last one
% is flush with the other column.
%\enlargethispage{-5in}

% that's all folks
\end{document}